\newlength\twofigwidth
\newlength\threefigwidth
\newlength\figsep
\newtheorem{Proposition}{Proposition}
\newtheorem{lemma}{Lemma}
\newtheorem{definition}{Definition}%
\newcommand{\Z}[1]{\mathbb{Z}_{p^{#1}}}
\newcommand{\Zx}[1]{\mathbb{Z}^\times_{p^{#1}}}
\newcommand{\ord}{\mathrm{ord}}
\newcommand{\logp}{\mathrm{logp}}
\newcommand{\var}[1]{\mathit{#1}}
\newcommand{\GR}{\operatorname{GR}}
\begin{document}

\title{Graph Structure of an Inversive Pseudorandom Number Generator over Ring $\Z{e}$}

\author{Xiaoxiong Lu, Chengqing Li, Bo Zhou
\thanks{This work was supported by the National Natural Science Foundation of China (no.~92267102).}
\thanks{X. Lu is with the School of Mathematics and Computational Science, Xiangtan University, Xiangtan, 411105, Hunan, China.}
\thanks{C. Li is with the School of Computer Science, Xiangtan University, Xiangtan, 411105, Hunan, China.}
\thanks{B. Zhou is with the School of Mathematics and Statistics, Chongqing Jiaotong University, Chongqing, 400074, China.}
}

\markboth{IEEE Transactions}{Lu \MakeLowercase{et al.}}
\IEEEpubid{\begin{minipage}{\textwidth}\ \\[12pt] \centering
        1549-8328 \copyright 2019 IEEE. Personal use is permitted, but republication/redistribution requires IEEE permission.\\
        See http://www.ieee.org/publications standards/publications/rights/index.html for more information.
\end{minipage}}
	
\maketitle
	
\begin{abstract}
Generating random and pseudorandom numbers with a deterministic system is a long-standing challenge in theoretical research and engineering applications. Several pseudorandom number generators based on the inversive congruential method have been designed as attractive alternatives to those based on the classical linear congruential method.
This paper discloses the least period of sequences generated by iterating an inversive pseudorandom number generator over the ring $\mathbb{Z}_e$ by transforming it into a two-order linear congruential recurrence relation. Depending on whether the sequence is periodic or ultimately periodic, all states in the domain can be attributed to two types of objects: some cycles of different lengths and one unilateral connected digraph whose structure remains unchanged concerning parameter $e$.
The graph structure of the generator over the ring $\mathbb{Z}_e$ is precisely disclosed with rigorous theoretical analysis and verified experimentally. The adopted analysis methodology can be extended to study the graph structure of other nonlinear maps.
\end{abstract}

\begin{IEEEkeywords}
Galois ring, Inversive congruential method, Linear congruential, Functional graph, Pseudorandom number generator, Period.
\end{IEEEkeywords}

\maketitle

\section{Introduction}

\IEEEPARstart{P}{seudorandom} number generators ($\var{PRNG}$) are deterministic algorithms designed to produce long sequences of numbers that exhibit the characteristics of randomness and are indistinguishable from true random numbers in a stream \cite{Kocarev:generators:TCASI2001,Jessa:randomBit:TC15, gongg:cycle:TIT19, xujun:congru:crypto19,licq:Logistic:IJBC2023}. These generators find wide applications in scientific and engineering domains, including Monte Carlo simulations, computer games, and cryptography \cite{cqli:block:JISA20}.
The linear congruential method, a prominent technique for generating random numbers, can be traced back to Lehmer's $\var{PRNG}$ proposed in 1951~\cite{Werner:Lehmer:NM1961}. This method operates in a multiplicative group of integers modulo $m$, with the recurrence relation $x_{n+1} = ax_n \bmod m$, where $m$ is a large integer, $x_0$ and $a$ are chosen from the set $\{1, 2, \ldots, m-1\}$, and $n$ is a non-negative integer. Subsequently, Park and Miller introduced Lehmer's $\var{PRNG}$ with specific parameters $m=2^{31}-1$ and $a=7^5$, known as the minimal standard generator (MINSTD), widely used as the default $\var{PRNG}$ in the C++ programming language~\cite{Park:goodRNG:1988ACM}. Additionally, several $\var{PRNGs}$ of the Lehmer form are integrated into the GNU Scientific Library, a numerical computing library for C and C++ programmers \cite{HandWiki:RNG}.
In 1965, Tausworthe presented a $\var{PRNG}$ that combines $L$-bit continuous binary sequences generated by the linear shift register method over a binary field~\cite{Tausworthe:LFSR:MC1965}. This $\var{PRNG}$, when compared to Lehmer's, necessitates a sufficiently large $L$ to meet the randomness requirements for numbers falling within the interval [0, 1]. Building upon Tausworthe's work, Knuth developed a generalized class of random number generators using a $k$-th order linear congruential recurrence relation over a larger finite field, extending beyond binary fields:
\begin{equation}
\label{eq:an:p}
x_n = (a_1 x_{n-1}+a_2x_{n-2}+\ldots+a_k x_{n-k}) \bmod p,
\end{equation}
where $p$ is a prime number, and parameters $a_1, a_2, \ldots, a_k$ and initial states $x_0, x_1, \ldots, x_{k-1}$ belong to the set $\{0, 1, 2, \ldots, p-1\}$~\cite{Kunth1981}. This generator encompasses Lehmer's $\var{PRNG}$ and Tausworthe's $\var{PRNG}$ as special cases. Subsequent research has explored statistical characteristics~\cite{Deng:LFSRp:1992, Fushimi:k-dis:ACM1983}, criteria for randomness assessment~\cite{Lothar:critera:JCAM1990}, expansion to other domains~\cite{Kurakin:LFSRring:JMS95, Zhu:LFSRpe:TIT2004}, and applications in cryptography based on coding theory~\cite{Li:NFSRs:TIT14}.

\IEEEpubidadjcol 

The nuanced relationship between chaotic systems and cryptosystems has positioned it as a novel avenue for designing secure and efficient pseudorandom number generators ($\var{PRNG}$)~\cite{Geros:chaoRNG:TCSI2002, Kocarev:Psebits:TCASI2003, Addabbo:Chaomap:TCSII2006}. Since von Neumann employed the Logistic map for generating random numbers in \cite{Neumann:logistic:BAMS47}, a variety of $\var{PRNGs}$ based on chaotic maps has been proposed. Examples include the Tent map~\cite{Addabbo:Tent:ITIM2006}, Logistic map~\cite{cqli:network:TCASI2019, Phatak:LogisticRNG:PRE95}, Cat map~\cite{cqli:Cat:TC21, Chen:cat:TIT2012}, Chebyshev map~\cite{Liao:Chebyshev:TC2010, Chen:ChebyshevZn:IS2011}, and R{\'e}nyi chaotic map~\cite{Addabbo:Renyi:TCASI2007}.
Notably, the Chebyshev map and Cat map are inherently linear. However, linear generators may lack cryptographic security in certain domains, prompting an intuitive preference for nonlinear alternatives~\cite{Boyar:LFSRsecret:ACM1989, Katti:CCG:TCSI2010, Panda:DualCLG:TCSI2019, xu:solving:DCC2018}.

As a promising nonlinear method for generating pseudorandom numbers, the inversive pseudorandom number generator ($\var{IPRNG}$) over a finite field $\mathbb{F}_p$ was initially proposed in \cite{Eichenauer:PSRNG:SH1986}:
\begin{equation}
\label{eq:inv:zp}
x_{n+1} = \phi_p(x_n) =
\begin{cases}
(ax_n^{-1}+b) \bmod p & \text{if } x_n \in \mathbb{Z}^{\times}_p;\\
b & \text{if } x_n = 0, 
\end{cases}
\end{equation}
where $p$ is a prime, $x^{-1}_n$ is the inverse of $x_n$ in $\mathbb{F}_p$, and $a, b \in \mathbb{F}_p$.
Leeb and Wegenkittl, through a series of empirical tests, demonstrated that $\var{IPRNG}$ can pass tests with a broader range of parameters compared to their linear counterparts~\cite{Leeb-Inver-tomacs97}. The intriguing properties of $\var{IPRNG}$ swiftly captured the attention of researchers, and it found widespread use in applications such as Quasi-Monte Carlo methods and public key encryption~\cite{Eichenauer:statinv:MC1994, Eichenauer:tomacs1992, Emmerich:aveIPNG:MC2002, xu:solving:DCC2018, Leeb-Inver-tomacs97}.
Additionally, Sol\'e \emph{et al.} extended $\var{IPRNG}$ over the Galois ring $\GR(p^e, n)$, a commutative ring with characteristic $p^e$ and $p^{en}$ elements. They posed an open problem regarding the conditions on parameters $a, b$ to achieve the maximal period~\cite{Sole:IPRG:EJC2009}. The generalization and various variants of $\var{IPRNG}$ were comprehensively reviewed in \cite{Meidl:recentwork:2010}.

Complete understanding of the period distribution and related statistical characteristics of $\var{IPRNG}$ is a basis for its extensive applications~\cite{Chou:AECC:1995, Niederreiter:Exmpe:AA2000, Niederreiter:Exm2e:IJNT2005, Eichenauer:period2e:1988, Eichenauer:periodpe:1990, Sole:IPRG:EJC2009}.
In~\cite{Chou:AECC:1995} and \cite{wsChou:IPVG:FFTA1995}, Chou described all possible periods of an $\var{IPRNG}$ over a finite field by linear difference equation and demonstrated
their relation with the period of the corresponding characteristic polynomials. 
The sufficient and necessary condition of parameters $a, b$ and initial state
to achieve the maximal period of an $\var{IPRNG}$ with modulo $2^e$ is given~\cite{Eichenauer:period2e:1988}. 
Then Eichenauer generalized the modulus in~\cite{Eichenauer:period2e:1988} to 
the power of a prime and presented the conditions that the corresponding parameters of certain specific periods meet~\cite{Eichenauer:periodpe:1990}. However, the complete information on period distribution for any parameter and initial state is not given, which limits the applications of $\var{IPRNG}$ with prime power modulus. 
Furthermore, it is noteworthy that the sequence generated by iterating $\var{IPRNG}$ may not be purely periodic, which differs from the cases of Chebyshev map and Cat map.
Especially, Chen \emph{et al.} demonstrated that if the maximal period of 
the sequence generated by the Chebyshev polynomial over finite field $\mathbb{F}_p$ is lower than $2^{64}$, the underpinned public key encryption algorithm is insecure \cite{Liao:Chebyshev:TC2010}. 
Therefore, the period analysis of the sequence generated by $\var{IPRNG}$ for all initial states is critical for any associated application.

The functional graph (also known as state-mapping network) of a map is an essential visible way to study the period distribution of sequences generated by iterating the map. The produced periodic sequence corresponds to a cycle in the corresponding functional graph.
In \cite{cqli:Cat:TC21}, Li~\emph{et al.} disclosed the graph structure of the generalized Cat map in any binary arithmetic domain and proved how the period of the sequence generated by iterating the map changes with the arithmetic precision. 
Similarly, the graph structures of the Logistic map and Tent map in a digital domain are also studied~\cite{cqli:network:TCASI2019, Yorke:tower:DCDS21}.
In the past five years, the functional graphs of various polynomials over a finite field,
Chebyshev polynomials~\cite{Daniel:Chebyshev:2018}, linearized polynomials~\cite{Daniel:lineargraph:DCC2019}, and a class of polynomials of form $x^nh(x^{\frac{q-1}{m}})$~\cite{OLi:polyField:DCC2023}, are disclosed in terms of the distribution of cycles, where $h(x)$ is a polynomial.

Although $\var{IPRNG}$ has been proposed for almost four decades, 
the structure of its functional graph remains unknown. To make the analysis 
of the graph structure more general and complete, the domain of $\var{IPRNG}$ is extended
from finite field $\mathbb{F}_p$ to ring $\Z{e}$ in this paper:
\begin{equation}
\label{eq:inv:zpe}
x_{n+1}=\phi(x_n)=
\begin{cases}
(ax_n^{-1}+b)\bmod {p^e} & \mbox{if } x_n\in\Zx{e};\\
b                        & \mbox{if } x_n \notin\Zx{e}, 
\end{cases}
\end{equation}
where $a, b, x_n\in \Z{e}$, 
$\Zx{e}=\{x \mid x\not\equiv0\pmod p, x\in\Z{e}\}$ is multiplicative group of ring $\Z{e}$
and $x_n^{-1}$ is the inverse of $x_n$ in $\Zx{e}$. 
Given an initial state $x_0\in\Z{e}$ and parameters $a$ and $b$, one can get an inversive pseudorandom number sequence ($\var{IPRNS}$) by recursion~\eqref{eq:inv:zpe}.
Then, its period is obtained using the Galois ring theory and characteristic polynomials. 
The connected component of the functional graph of $\var{IPRNG}$ over $\Z{e}$
is either a unilateral connected digraph with an invariable structure or a cycle with
no more than three possible lengths.
So, the structure of the functional graph can be unraveled by counting
the number of its different connected components.

The rest of the paper is organized as follows. 
Section~\ref{sec:pre} presents some preliminaries and lemmas that help to understand the analysis. 
Then, Sec.~\ref{sec:PeriodZp} discloses the graph structure of $\var{IPRNG}$~\eqref{eq:inv:zpe} over $\Z{}$. Furthermore, Sec.~\ref{sec:PeriodZe} presents a detailed analysis of the graph structure of the generator over Galois ring $\Z{e}$.
The last section concludes the paper.

\section{Preliminary}
\label{sec:pre}

This section introduces the relevant notations and definitions to facilitate the discussion in the ensuing sections. As for more knowledge on the finite field and Galois ring, refer to \cite{Rudolf:Introfinite:1994} and \cite{Wan:Lecture:2003}.

\subsection{Galois ring and irreducible polynomial}

In ring theory, a finite ring containing identity is called \textit{Galois ring} if its zero-divisors and zero elements form a principal ideal.
Denote $\GR(p^e, n)$ be a Galois ring of characteristic $p^e$ with cardinality $p^{en}$, where $p$ is a prime number, $e$ and $n$ are a natural number.
One can know that $\GR(p^e,1)=\Z{e}$ (See \cite[Example 14.1]{Wan:Lecture:2003}), where $\Z{e}$ is the ring of residue classes of integers modulo $p^e$ concerning modular addition and multiplication. 
Any element $\eta\in\GR(p^e, n)$ can be uniquely expressed as a $p$-adic representation
\begin{equation}
\label{eq:GR:form}
\eta=a_0+a_1p+\cdots+a_{e-1}p^{e-1}, 
\end{equation}
where $a_0, a_1, \cdots, a_{e-1}\in\Lambda$, and $\Lambda= \{0, 1, \cdots, p-1\}$ if $n=1$; $\Lambda=\{0, 1, \xi, \cdots, \xi^{p^{n}-2}\}$ otherwise, and $\xi$ is an element of order $p^n-1$ in $\GR(p^e, n)$.
All elements in $\GR(p^e, n)$ with $a_0\neq 0$ and that with $a_0=0$
form its multiplicative group $\GR^\times(p^e, n)$ and a unique maximal ideal $(p)=p\cdot \GR(p^e, n)$, respectively.
The least positive integer $m$ such that $x^m=1$ in a group is called the order of the element $x$. 
Lemma~\ref{lemma:ordk_to_e} describes the order of some special elements in $\GR^\times(p^e, n)$.

\begin{lemma}
\label{lemma:ordk_to_e}
For any $x\in\GR^\times(p^e, n)$,
\[
\ord(x)_{p^e}=
\begin{cases}
p^{e-k} & \mbox{if } x\equiv 1\pmod {p^k};\\
2p^{e-k}& \mbox{if } x\equiv-1\pmod {p^k}, 
\end{cases}
\]
where $k\geq e$ and $\ord(x)_{p^e}$ represents the order of element $x$ in $\GR^\times(p^e, n)$.
\end{lemma}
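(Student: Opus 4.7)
The plan is to compute $\ord(x)_{p^e}$ by explicit binomial expansion of $x^{p^{e-k}}$, reducing the second case to the first via the factorization $x = (-1)\cdot(-x)$. Throughout I interpret $k$ as the \emph{exact} $p$-adic valuation of $x \mp 1$, i.e.\ the largest integer for which the stated congruence holds; otherwise the formula cannot be uniform in $k$.

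For the first case, I would write $x = 1 + p^k u$ with $u \in \GR^\times(p^e, n)$ and expand
\[
x^{p^{e-k}} = 1 + \sum_{i=1}^{p^{e-k}} \binom{p^{e-k}}{i}\, p^{k i}\, u^i .
\]
Using the classical identity $v_p\!\bigl(\binom{p^s}{i}\bigr) = s - v_p(i)$ for $1 \le i \le p^s$, the $i$-th summand acquires $p$-adic valuation $e + k(i-1) - v_p(i)$, which is $\ge e$ for every $i \ge 1$ when $k \ge 1$. Hence $x^{p^{e-k}} \equiv 1 \pmod{p^e}$, so $\ord(x)_{p^e}$ divides $p^{e-k}$.

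To upgrade this divisibility to equality, I would prove by induction on $j$ the following lifting step: if $y = 1 + p^j v$ with $v$ a unit and $j \ge 1$ (with $j \ge 2$ required when $p = 2$), then $y^p = 1 + p^{j+1} v'$ with $v'$ again a unit. This also drops out of the binomial expansion, since the contribution $\binom{p}{1} p^j v = p^{j+1} v$ strictly dominates the remaining terms. Iterating the lifting step $e-k-1$ times starting from $x = 1 + p^k u$ gives $x^{p^{e-k-1}} = 1 + p^{e-1} w$ with $w$ a unit, so $x^{p^{e-k-1}} \not\equiv 1 \pmod{p^e}$ and $\ord(x)_{p^e} = p^{e-k}$ exactly. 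For the second case with $p$ odd, I set $y := -x$, so $y \equiv 1 \pmod{p^k}$ and $\ord(y)_{p^e} = p^{e-k}$ by the first case; since $\ord(-1)_{p^e} = 2$ is coprime to $p^{e-k}$, the standard abelian-group fact that coprime-order elements multiply to an element of product order yields $\ord(x)_{p^e} = 2 p^{e-k}$.

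The main obstacle will be the prime $p = 2$. There $-1 \equiv 1 \pmod 2$, so the two cases of the lemma collapse at $k = 1$, and the lifting step genuinely fails at $j = 1$: from $x = 1 + 2u$ one computes $x^2 = 1 + 4u(1+u)$, whose nontrivial part always has $2$-adic valuation at least $3$. The first case must therefore be read as requiring $k \ge 2$, with elements $x \equiv -1 \pmod 4$ routed exclusively through the second case; the coprime-product argument then fails and a direct computation leveraging the group structure $\GR^\times(2^e, n) \cong \mathbb{Z}/2 \times \mathbb{Z}/2^{e-2} \times \mathbb{Z}/(2^n-1)$ becomes necessary to either verify the stated formula or to refine the hypothesis accordingly.
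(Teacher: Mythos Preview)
Your approach is correct and fundamentally the same as the paper's: both expand $(1+hp^k)^m$ binomially, with the paper simply asserting that $m=p^{e-k}$ is the least solution of the resulting congruence and then reducing the $-1$ case via $x^2\equiv 1\pmod{p^k}$. Your inductive lifting step for minimality and your coprime-order reduction for the second case are more explicit versions of what the paper leaves to the reader, and your caveat about $p=2$ is a genuine refinement that the paper does not address (its argument tacitly assumes $p$ odd).
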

\begin{proof}
Since $x\equiv 1\pmod {p^k}$, one has $x=hp^k+1$ and
$x^m=(hp^k+1)^m=1+mhp^k+\binom{m}2h^2p^{2k}+\cdots+(hp^k)^m$, 
where $h\in \GR(p^e, n)$ and $h\not\equiv0\pmod p$. Thus, $x^m\equiv 1\pmod {p^e}$ is equivalent to
\[
mhp^k+\binom{m}2h^2p^{2k}+\cdots+(hp^k)^m \equiv 0\pmod {p^e}. 
\]
Then $m=p^{e-k}$ is the least positive integer satisfying the above congruence, which yields $\ord(x)_{p^e}=p^{e-k}$.
Similarly, when $x\equiv -1\pmod {p^k}$, $x^2 \equiv 1\pmod {p^k}$ and $\ord(x)_{p^e}=2p^{e-k}$. 
\end{proof}

Let $g(t)=a_0+a_1t+\cdots+a_nt^n$ be a polynomial over $\Z{e}$, where coefficients $a_i\in\Z{e}$, $n$ is a non-negative integer, and $t$ is an indeterminate. 
The polynomial $g(t)$ composes a polynomial ring $\Z{e}[t]$ concerning 
multiplication and addition over $\Z{e}$ \cite[Definition 1.48]{Rudolf:Introfinite:1994}.
Let $\hat{g}(t)$ denote the image of $g(t)\in\Z{e}[t]$ under a map from polynomial ring $\Z{e}[t]$ to another polynomial ring $\Z{}[t]$:
\begin{equation}    
\label{eq:map:-}
a_0+a_1t +\cdots +a_nt^n \mapsto \bar{a}_0+\bar{a}_1t +\cdots +\bar{a}_nt^n, 
\end{equation}
where $\bar{x}=x\bmod p$ hereinafter. 
For any polynomials $g_1(t)$, $g_2(t) \in\Z{e}[t]$, they are co-prime in $\Z{e}[t]$ if and only if $\hat{g}_1(t)$ and $\hat{g}_2(t)$ are co-prime in $\Z{}[t]$~\cite[Lemma 13.5]{Wan:Lecture:2003}.
Using such co-prime property of any two polynomials in $\Z{e}[t]$, one can prove Hensel's Lemma (Lemma~\ref{Lemma:Hensel}), which plays a vital role in the period analysis of the linear recursive sequence over ring $\Z{e}$.

\begin{lemma}
\label{Lemma:Hensel}
(Hensel's Lemma) Let $g(t)$ be a polynomial in $\Z{e}[t]$ and if
$\hat{g}(t)=h_1(t)h_2(t) \mbox{~in~} \Z{}[t]$,
where $h_1(t)$ and $h_2(t)$ are co-prime polynomials in $\Z{}[t]$. Then there exists co-prime polynomials $g_1(t)$ and $g_2(t)$ in $\Z{e}[t]$, such that
$g(t)=g_1(t)g_2(t) \text{~in~} \Z{e}[t]$, 
$\hat{g}_1(t)=h_1(t)$ and $\hat{g}_2(t)=h_2(t)$. 
\end{lemma}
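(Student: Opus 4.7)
The plan is to prove the lemma by induction on the exponent $e$, realizing it as a step-by-step Hensel lifting. The base case $e=1$ is trivial: the reduction map in \eqref{eq:map:-} is the identity, so the hypothesis itself furnishes the required factorization. For the inductive step, I would assume the result over $\Z{e-1}[t]$, reduce $g(t)\in\Z{e}[t]$ modulo $p^{e-1}$, and apply the inductive hypothesis to obtain coprime $G_1(t),G_2(t)\in\Z{e-1}[t]$ with $g(t)\equiv G_1(t)G_2(t)\pmod{p^{e-1}}$ and $\hat{G}_i(t)=h_i(t)$.

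Next, I would seek a lift of the form
\[
g_i(t)=\tilde{G}_i(t)+p^{e-1}r_i(t),\qquad i=1,2,
\]
where $\tilde{G}_i(t)\in\Z{e}[t]$ is any coefficient-wise lift of $G_i(t)$ and $r_i(t)\in\Z{}[t]$; multiplying $r_i$ by $p^{e-1}$ kills everything beyond its mod-$p$ class in $\Z{e}$. Expanding the product and discarding the $p^{2(e-1)}$-term, which vanishes modulo $p^e$ for $e\geq 2$, gives
\[
g_1(t)g_2(t)\equiv \tilde{G}_1(t)\tilde{G}_2(t)+p^{e-1}\bigl(h_1(t)r_2(t)+h_2(t)r_1(t)\bigr)\pmod{p^e}.
\]
Since $g(t)-\tilde{G}_1(t)\tilde{G}_2(t)\equiv 0\pmod{p^{e-1}}$, this difference can be written as $p^{e-1}s(t)$ with $s(t)\in\Z{}[t]$ well-defined modulo $p$. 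The requirement on $r_1,r_2$ therefore becomes the polynomial congruence
\[
h_1(t)r_2(t)+h_2(t)r_1(t)\equiv s(t)\pmod p.
\]
Because $h_1,h_2$ are coprime in the principal ideal domain $\Z{}[t]$, Bezout's identity supplies $u(t),v(t)\in\Z{}[t]$ with $u(t)h_1(t)+v(t)h_2(t)=1$; setting $r_1(t)=v(t)s(t)$ and $r_2(t)=u(t)s(t)$ solves the congruence and produces $g_1,g_2\in\Z{e}[t]$ with $g(t)=g_1(t)g_2(t)$.

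Co-primality of $g_1$ and $g_2$ in $\Z{e}[t]$ then follows immediately from the already-cited co-prime characterization \cite[Lemma 13.5]{Wan:Lecture:2003}: $g_1,g_2$ are coprime in $\Z{e}[t]$ iff $\hat{g}_1,\hat{g}_2$ are coprime in $\Z{}[t]$, and by construction $\hat{g}_i=h_i$. The main technical obstacle is the bookkeeping in the lifting step, namely confirming that the correction term $p^{e-1}\bigl(h_1r_2+h_2r_1\bigr)$ depends only on the mod-$p$ reductions of $r_i$, so that a single polynomial Bezout equation over the field $\Z{}$ genuinely repairs the factorization modulo $p^e$. Once this is verified, the induction closes and the lemma follows.
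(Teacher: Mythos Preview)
Your inductive Hensel-lifting argument is correct and is the standard proof of this classical result. The paper itself does not supply a proof of Lemma~\ref{Lemma:Hensel}; it merely states it, remarking just before that the co-primality criterion \cite[Lemma~13.5]{Wan:Lecture:2003} can be used to establish it. Your proposal follows exactly that hint: the Bezout identity over the field $\Z{}[t]$ produces the correction terms $r_1,r_2$, and the cited criterion then certifies that the lifted factors $g_1,g_2$ remain coprime in $\Z{e}[t]$ because $\hat g_i=h_i$ by construction. One minor remark: with the bare choice $r_1=vs$, $r_2=us$ the lifts $g_1,g_2$ may carry spurious high-degree terms with coefficients divisible by $p$; this is harmless here since the lemma as stated imposes no degree condition, but if one wanted $\deg g_i=\deg h_i$ one would reduce $r_2$ modulo $h_1$ and absorb the quotient into $r_1$, as in the textbook refinements.
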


A polynomial is called \textit{irreducible} over a field if it cannot be factored. In contrast, if its image under map~\eqref{eq:map:-} is irreducible in $\Z{}[t]$, the polynomial is called \textit{basic irreducible}.
Using Lemma~\ref{Lemma:Hensel}, one can get a necessary and sufficient condition that an irreducible polynomial of degree two in $\Z{e}[t]$ is basic irreducible, as shown in Lemma~\ref{lemma:ft:baisc}.

\begin{lemma}
\label{lemma:ft:baisc}
Irreducible polynomial $g(t)=t^2-bt-a$ is basic irreducible in $\Z{e}[t]$ if and only if $4a+b^2 \not\equiv 0\pmod p$. 
\end{lemma}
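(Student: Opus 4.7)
My plan is to translate the condition about the discriminant $4a+b^2$ into a statement about how $\hat g(t)=t^2-\bar b t-\bar a$ factors over $\F{1}$, and then use Hensel's Lemma (Lemma~\ref{Lemma:Hensel}) to connect factorizations in $\F{1}[t]$ with factorizations in $\Z{e}[t]$. Throughout, I use the standing hypothesis that $g(t)$ is irreducible in $\Z{e}[t]$; this is the only way the equivalence can hold, because without it a polynomial like $(t-r_1)(t-r_2)$ with $\bar r_1\ne\bar r_2$ would have nonzero discriminant while being reducible.

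For the direction ($\Leftarrow$), assume $4a+b^2\not\equiv0\pmod p$ and suppose, for contradiction, that $\hat g(t)$ is reducible in $\F{1}[t]$, so $\hat g(t)=(t-r_1)(t-r_2)$ for some $r_1,r_2\in\F{1}$. If $r_1\ne r_2$, the factors $t-r_1$ and $t-r_2$ are co-prime in $\F{1}[t]$, hence by Lemma~\ref{Lemma:Hensel} there are co-prime $g_1,g_2\in\Z{e}[t]$ with $g(t)=g_1(t)g_2(t)$, contradicting the irreducibility of $g$. If instead $r_1=r_2=r$, then $\hat g(t)=(t-r)^2$, forcing $\bar b=2r$ and $\bar a=-r^2$, so $4\bar a+\bar b^2=0$ in $\F{1}$, contradicting the hypothesis. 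Hence $\hat g$ is irreducible, i.e.\ $g$ is basic irreducible.

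For the direction ($\Rightarrow$), assume $g$ is basic irreducible and suppose, for contradiction, that $4a+b^2\equiv0\pmod p$. For $p$ odd, $4$ is invertible in $\F{1}$, so $\bar a=-\bar b^2/4$ and $\hat g(t)=t^2-\bar b t+\bar b^2/4=(t-\bar b/2)^2$, which is reducible. For $p=2$, observe that $4a+b^2\equiv b^2\equiv \bar b\pmod 2$, so the hypothesis forces $\bar b=0$; then $\hat g(t)=t^2+\bar a$, and in either case $\bar a\in\{0,1\}$ this equals $t^2$ or $(t+1)^2$, again reducible. Both cases contradict basic irreducibility, so $4a+b^2\not\equiv0\pmod p$.

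The only mildly delicate point is handling $p=2$, where the usual "complete the square" argument is unavailable; this is resolved by the direct enumeration above. Everything else is a routine application of Hensel's Lemma together with the classical fact that a quadratic over a field is reducible exactly when it has a root.
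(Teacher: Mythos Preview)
Your proof is correct and follows essentially the same route as the paper: use Hensel's Lemma to lift a coprime factorization of $\hat g$ to a factorization of $g$ (contradicting irreducibility), and compute that $4a+b^2\equiv 0\pmod p$ forces $\hat g$ to be a perfect square. The only difference is that you are slightly more careful than the paper---you explicitly split off the repeated-root subcase in the ($\Leftarrow$) direction and you handle $p=2$ separately in the ($\Rightarrow$) direction, whereas the paper simply writes $\hat g(t)=(t-\bar b/2)^2$ without comment.
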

\begin{proof} 
When $4a+b^2 \not\equiv 0\pmod p$, assume $\hat{g}(t)$ is reducible in $\Z{}[t]$, 
one has $\hat{g}(t)=(t-t_1)(t-t_2)$, where $t_1, t_2\in \Z{}$ and $t_1\neq t_2$. 
Referring to Hensel's Lemma, one can know that there exist co-prime polynomials $g_1(t)$ and $g_2(t)$ in $\Z{e}[t]$, such that $g(t)=g_1(t)g_2(t)$ in $\Z{e}[t]$ and $\hat{g}_1(t)=t-t_1, \hat{g}_2(t)=t-t_2$, which contradicts with that $g(t)$ is irreducible.
So $g(t)$ is a basic irreducible polynomial in $\Z{e}[t]$. 
When $4a+b^2\equiv 0\pmod p$, one has $\hat{g}(t)=(t-\bar{b}/2)^2$, which yields $\hat{g}(t)$ is reducible in $\Z{}[t]$ and $g(t)$ is not a basic irreducible polynomial in $\Z{e}[t]$. 
\end{proof}

\subsection{Functional graph of $\var{IPRNS}$ over ring $\Z{e}$}

Let $S(x_0; a, b)=\{x_n\}_{n\ge 0}$ denote a specific $\var{IPRNS}$ generated by iterating $\var{IPRNG}$~\eqref{eq:inv:zpe} from an initial state $x_0\in\Z{e}$, where $x_{n}=\phi^{n}(x_0)=\phi(\phi^{n-1}(x_0))$, and $\phi^0(x_0)=x_0$. 
Its least period $T(x_0; a, b)$ is the smallest integer such that $x_{n+T(x_0; a, b)}=x_n$ for any $n\geq n_0\geq 0$.
If $n_0>0$, sequence $S(x_0; a, b)$ is \textit{ultimately periodic}, \textit{periodic} otherwise. 
To study the period of an $\var{IPRNS}$ over $\Z{e}$, 
a second-order linear recurring sequence ($\var{SLRS}$) $\{y_n\}_{n\geq 0}$ generated by 
\begin{equation}
\label{eq:yn:pe}
y_{n+2}=by_{n+1}+ay_n \bmod p^e
\end{equation}
from initial state $(y_0, y_1)=(1, x_0)$ is defined, where $a, b, x_0\in\Z{e}$. 
Then, the relation between the $\var{SLRS}$ and $\var{IPRNS}$ is revealed in Lemma~\ref{lemma:xnToyne}, which serves as the basis of the analysis of this paper. 

\begin{lemma}
\label{lemma:xnToyne}
Let $\{x_n\}_{n\geq 0}$ and $\{y_n\}_{n\geq 0}$ be sequences generated by iterating relation~\eqref{eq:inv:zpe} and relation~\eqref{eq:yn:pe} with the same parameters, respectively. Then, 
\begin{enumerate}[label=(\roman*)]
    \item If $y_n\in\Zx{e}$ for all $0\leq n\leq m$, then
    \begin{equation}\label{eq:relation:xnyn}
    x_n=y_{n+1}\cdot y^{-1}_n \bmod p^e, 
    \end{equation}
    where $m>0$ and $y^{-1}_n$ is the inverse of $y_n$ in $\Zx{e}$. 
    
    \item Integer $s$ is the smallest positive integer satisfying $x_s\in(p)$ if and only if $s+1$ is the smallest integer satisfying $y_{s+1}\in(p)$. 
\end{enumerate}
\end{lemma}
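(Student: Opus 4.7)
My plan is to establish (i) by induction on $n$, and then bootstrap from (i) to prove (ii) via minimality arguments in both directions.

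For part (i), the base case $n=0$ is immediate since $y_0=1$ and $y_1=x_0$ give $y_1 y_0^{-1}=x_0$. For the inductive step, assume $x_{n-1}=y_{n}y_{n-1}^{-1}$ with $n\le m$. Because $y_{n-1},y_{n}\in\Zx{e}$ by hypothesis, the ratio $x_{n-1}$ is a unit in $\Z{e}$, so the unit branch of \eqref{eq:inv:zpe} applies and yields $x_n=ax_{n-1}^{-1}+b=ay_{n-1}y_n^{-1}+b$. On the other hand, dividing the linear recurrence \eqref{eq:yn:pe} by $y_n$ gives $y_{n+1}y_n^{-1}=b+ay_{n-1}y_n^{-1}$, and the two expressions coincide.

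For part (ii), both implications use (i) together with the identity $y_{k+1}=x_k y_k$ it supplies whenever $y_0,\ldots,y_k\in\Zx{e}$. For the forward direction, suppose $s$ is the smallest positive integer with $x_s\in(p)$, so $x_0,\ldots,x_{s-1}\in\Zx{e}$. A short secondary induction shows that $y_0,y_1,\ldots,y_s$ are all units: $y_0=1$ and $y_1=x_0$ are units, and if $y_0,\ldots,y_k\in\Zx{e}$ for some $k\le s-1$, then (i) yields $y_{k+1}=x_k y_k$, a product of units. Consequently $y_{s+1}=x_s y_s\in(p)$, and $s+1$ is the minimal such index because each earlier $y_j$ has already been shown to be a unit. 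The reverse direction is symmetric: if $y_0,\ldots,y_s\in\Zx{e}$ and $y_{s+1}\in(p)$, then (i) applies with $m=s$ and gives $x_n=y_{n+1}y_n^{-1}$ for $0\le n\le s$; this is a unit for $n<s$ and lies in $(p)$ for $n=s$, so $s$ is the smallest positive index with $x_s\in(p)$.

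The only real subtlety lies in making sure the unit branch of the piecewise map \eqref{eq:inv:zpe} is the one being used at every stage of the induction, which amounts to tracking the $(p)$-status of $x_k$ via the companion $y$-sequence. Beyond this bookkeeping the argument is direct algebraic manipulation; the crucial idea—linearizing the inversive recurrence by identifying $x_n$ with the ratio $y_{n+1}/y_n$—is already encoded in the definition of the $\var{SLRS}$.
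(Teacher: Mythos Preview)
Your proposal is correct and follows exactly the approach the paper indicates: the paper itself simply states that the lemma ``can be proved via mathematical induction on $n$, and its proof is omitted,'' so you have supplied the details the authors left out. The inductive argument for (i) and the bookkeeping that bootstraps (ii) from (i) are precisely what is needed, and your care in checking that the unit branch of \eqref{eq:inv:zpe} applies at each step is the right point to emphasize.
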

\begin{proof}
This lemma can be proved via mathematical induction $n$, and its proof is omitted. 
\end{proof}

The characteristic polynomial (or generating polynomial) of an $\var{SLRS}$ generated by relation~\eqref{eq:yn:pe} is $f(t)=t^2-bt-a$.
We call the polynomial $f(t)$ the indirect characteristic polynomial of the corresponding $\var{IPRNS}$ hereinafter.
Note that $f(t)$ is not always irreducible in $\Z{e}[t]$,
but it is always reducible in the extension ring $\mathbb{Z}_{p^e}[t]/(f(t))$, where $(f(t))=f(t)\Z{e}[t]=\{f(t)h(t): h(t)\in\Z{e}[t]\}$ is the ideal generated by $f(t)$.
Thus, we can write $f(t)$ as an uniform reducible form:
\begin{equation}\label{eq:chaacter:poly}
f(t)=(t-\alpha)(t-\beta),
\end{equation} 
where $\alpha, \beta$ belong to
\begin{equation}\label{eq:ring:Ak}
\mathbb{A}_e=
\begin{cases}
\Z{e}      & \mbox{if } f(t)\text{~is reducible in~} \Z{e}[t]; \\
\mathbb{Z}_{p^e}[t]/(f(t)) & \mbox{if } f(t)\text{~is irreducible in~} \Z{e}[t].
\end{cases}
\end{equation}
If $f(t)$ is basic irreducible in $\Z{e}[t]$,
ring $\Z{e}[t]/(f(t))$ is isomorphic to the $\GR(p^e, 2)$ \cite[Corollary 14.7]{Wan:Lecture:2003}.
Specially, $\Z{e}$ is abbreviated as $\mathbb{Z}_p$ when $e=1$.
It follows from Eq.~\eqref{eq:chaacter:poly} that
\begin{equation}
\label{eq:ab:alphabeta}
\begin{cases}
a=-\alpha\beta, \\
b=\alpha+\beta.
\end{cases}
\end{equation}

\begin{definition}\label{def:graph}
Let $\mathcal{G}(a, b, p, e)$ denote the functional graph of $\var{IPRNG}$~\eqref{eq:inv:zpe} over Galois ring $\Z{e}$.
\end{definition}

Functional graph $\mathcal{G}(a, b, p, e)$ given in Definition~\ref{def:graph} is constructed as \cite{cqli:network:TCASI2019}: the $p^{e}$ elements in ring $\Z{e}$ are considered as separate nodes; node $x$ is directly linked to node $y$ if and only if $y=\phi(x)$. 
In addition, we introduce a digraph (directed graph)
that may be connected component in $\mathcal{G}(a, b, p, e)$. 

\begin{definition}\label{def:digraph}
Let $G(L, n)$ denote a unilateral connected digraph composing a cycle of length $L$ whose sole node is linked by $n$ transient branches of length $L$ as a target. 
\end{definition}

\begin{figure}[!htb]
\centering
\begin{minipage}{0.6\twofigwidth}
\centering
\includegraphics[width=0.6\twofigwidth]{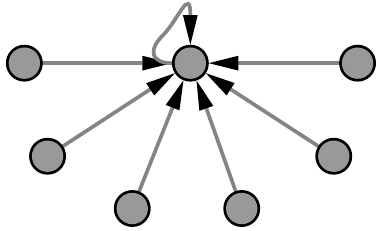}
a)
\end{minipage}
\hspace{2em}
\begin{minipage}{1\twofigwidth}
\centering 
\includegraphics[width=1\twofigwidth]{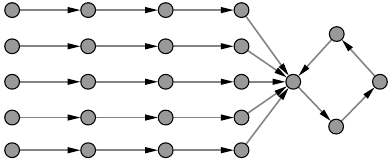}
b)
\end{minipage}
\caption{The structure of digraph $G(L, n)$: a) $G(1, 6)$; b) $G(4, 5)$.}
\label{fig:cycle:model}
\end{figure}

To facilitate understanding the definition of $G(L, n)$ given in Definition~\ref{def:digraph}, Fig.~\ref{fig:cycle:model} depicts its structure with two sets of representative parameters. 

\section{Graph structure of $\var{IPRNG}$~\eqref{eq:inv:zpe} over $\Z{}$}
\label{sec:PeriodZp}

In this section, the structure of $\mathcal{G}(a, b, p, 1)$ is disclosed via the period of every sequence generated by iterating $\var{IPRNG}$~\eqref{eq:inv:zpe}. 
Figure~\ref{fig:cycle:p1} demonstrates that with representative parameters to facilitate the description of theoretical analysis.
The analysis process of the period for two cases $ab\equiv0 \pmod p$ and $ab\not\equiv 0\pmod p$ is significantly different, so they are
separately analyzed in the following two sub-sections.

\begin{figure}[!htb]
\centering
\begin{minipage}{1\threefigwidth}
 \centering
 \includegraphics[width=0.8\threefigwidth]{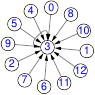}\\
 a)
\end{minipage}
\begin{minipage}{1\threefigwidth}
 \centering
 \includegraphics[width=0.97\threefigwidth]{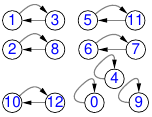}
 b)
\end{minipage}
\begin{minipage}{1\threefigwidth}
 \centering
 \includegraphics[width=0.8\threefigwidth]{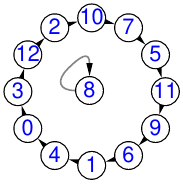}\\
 c)
\end{minipage}\\
\begin{minipage}{1\twofigwidth}
 \centering 
 \includegraphics[width=0.8\twofigwidth]{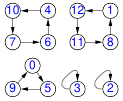}\\
 d)
\end{minipage}\hspace{2em}
\begin{minipage}{1\twofigwidth}
 \centering
 \includegraphics[width=1\twofigwidth]{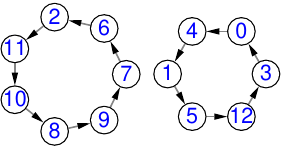}
 e)
\end{minipage}
\caption{Functional graphs of $\var{IPRNG}$~\eqref{eq:inv:zpe} over $\mathbb{Z}_{13}$: a) $(a, b)=(0, 3)$; b) $(a, b)=(3, 0)$; 
c) $(a, b)=(1, 3)$; d) $(a, b)=(7, 5)$; e) $(a, b)=(1, 4)$.}
\label{fig:cycle:p1}
\end{figure}

\subsection{Graph structure of $\var{IPRNG}$~\eqref{eq:inv:zpe} over $\Z{}$ with $ab\equiv0 \pmod p$}
\label{subsec:ab0:zp}

As there are no zero-divisors in $\Z{}$, $ab\equiv0 \pmod p$ can be divided into two cases: $a=0, b\in\Z{}$; $a\in\Zx{}, b=0$.
As for the first case, one has $x_n=b$ from $\var{IPRNG}$~\eqref{eq:inv:zpe}
for any $n\ge 1$. 
It means $S(x_0; a, b)=\{x_0, b, b, b, \cdots\}$. So, it's period $T(x_0; a, b) =1$ and 
every state evolves to state $b$.
As for the second case, one can get the concrete numbers of 
cycles of length two and self-loop composing the whole functional graph of $\var{IPRNG}$~\eqref{eq:inv:zpe}, as shown in Proposition~\ref{Th:ab0:zp}.

\begin{Proposition}\label{Th:ab0:zp}
When $a\in\Zx{}$ and $b=0$, $\mathcal{G}(a, b, p, 1)$ is composed of $\frac{p-1-|\mathbf{S}|}{2}$ cycles of length two and $|\mathbf{S}|+1$ self-loops, where $\mathbf{S}=\{x \mid x^2\equiv a\pmod p\}$ and $|\cdot|$ denotes the cardinality of a set.
\end{Proposition}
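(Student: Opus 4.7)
The plan is to exploit that, when $b = 0$ and $a \in \Zx{}$, the map $\phi$ restricts to an involution on $\Zx{}$, so the functional graph decomposes as a disjoint union of fixed points and 2-cycles, which is then counted explicitly.

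First I would observe that $\phi$ is a bijection of $\Z{}$: it sends $0 \mapsto 0$, and on $\Zx{}$ the map $x \mapsto a x^{-1}$ is a bijection because multiplication by $a$ and inversion are both bijections on the multiplicative group $\Zx{}$. Consequently every node in $\mathcal{G}(a, 0, p, 1)$ has in-degree and out-degree exactly one, so the graph is a disjoint union of cycles and no transient branches appear.

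Next I would compute $\phi^2$ on $\Zx{}$. For $x \in \Zx{}$, one has $\phi(x) = a x^{-1} \in \Zx{}$, hence $\phi^2(x) = a \cdot (a x^{-1})^{-1} = a \cdot a^{-1} x = x$. Therefore every cycle containing a nonzero element has length dividing $2$, i.e., length $1$ (a self-loop) or length $2$. The self-loops in $\Zx{}$ are exactly the solutions of $a x^{-1} = x$, that is, of $x^2 \equiv a \pmod{p}$; by definition this set is $\mathbf{S}$, contributing $|\mathbf{S}|$ self-loops. Together with the self-loop at $0$ coming from $\phi(0) = 0$, this accounts for $|\mathbf{S}| + 1$ self-loops in total.

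Finally, the remaining nodes form the set $\Zx{} \setminus \mathbf{S}$ of cardinality $p - 1 - |\mathbf{S}|$, and by the above each of these lies on a cycle of length exactly two of the form $\{x, a x^{-1}\}$. Dividing by the cycle length yields $\tfrac{p - 1 - |\mathbf{S}|}{2}$ cycles of length two, matching the claim. There is no substantive obstacle: the only thing worth flagging is that $p - 1 - |\mathbf{S}|$ is automatically even, which is immediate from the pairing $x \leftrightarrow a x^{-1}$ on $\Zx{} \setminus \mathbf{S}$ (so the count is well defined regardless of whether $|\mathbf{S}|$ is $0$ or $2$).
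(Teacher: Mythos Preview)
Your proposal is correct and follows essentially the same approach as the paper: both arguments observe that $\phi(0)=0$ and that on $\Zx{}$ the map $x\mapsto ax^{-1}$ is an involution, so every orbit has length $1$ or $2$, with the fixed points being exactly $\mathbf{S}\cup\{0\}$. The paper's version is slightly terser (it simply writes down the sequence $\{x_0, ax_0^{-1}, x_0, \ldots\}$), while you add the harmless preliminary check that $\phi$ is a bijection and the remark on parity of $p-1-|\mathbf{S}|$, but the substance is identical.
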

\begin{proof}
From Eq.~\eqref{eq:inv:zpe}, one has
\[
S(x_0; a, b)=
\begin{cases}
\{x_0, ax_0^{-1}, x_0, ax_0^{-1}, \cdots \} & \mbox{if } x_0\in\Zx{};\\
\{0, 0, \cdots\}       & \mbox{if } x_0=0. 
\end{cases}
\]
It means $T(x_0; a, b)=2$ if $x_0^2 \not\equiv a\pmod p$; $T(x_0; a, b)=1$ otherwise.
So, the proposition holds.
\end{proof}

As a typical example, when $(a, b, p)=(3, 0, 13)$, one has $\mathbf{S}=\{x \mid x^2\equiv 3\pmod {13}\}=\{4, 9\}$ and $|\mathbf{S}|=2$.
And the corresponding $\mathcal{G}(3, 0, 13, 1)$ shown in Fig.~\ref{fig:cycle:p1}b) is composed of $\frac{13-1-2}{2}=5$ cycles of length two and $2+1=3$ self-loops, which is consistent with Proposition~\ref{Th:ab0:zp}.

\subsection{Graph structure of $\var{IPRNG}$~\eqref{eq:inv:zpe} over $\Z{}$ with $ab\not\equiv 0\pmod p$}
\label{subsec:abn0:zp}

From Lemma~\ref{lemma:xnToyne}, one can know an $\var{IPRNS}$ generated by iterating $\var{IPRNG}$~\eqref{eq:inv:zpe} can be converted to a $\var{SLRS}$ generated by iterating relation~\eqref{eq:yn:pe} with $e=1$.
The solution of the corresponding indirect characteristic polynomial $f(t)$ has two possible cases: one root with multiplicity two; two distinct roots with multiplicity one. The general term of the associated $\var{SLRS}$ is different in the two cases, which is determined by whether congruence $4a+b^2 \equiv 0\pmod p$ exists.
Thus, the structure of $\mathcal{G}(a, b, p, 1)$ in such two cases are separately discussed in the following. 

\subsubsection{$4a+b^2\equiv 0\pmod p$}
\label{sec:root:1}
In such case, one can know $f(t)=t^2-bt-a=(t-\alpha)^2$ is reducible in $\mathbb{Z}_p[t]$, namely $\alpha=\beta\in\mathbb{A}_1=\mathbb{Z}_p$ in Eq.~\eqref{eq:chaacter:poly}.
Then,
\begin{equation}
\label{eq:ab:alpha}
\begin{cases}
a=-\alpha^2, \\
b=2\alpha.
\end{cases}
\end{equation}
Furthermore, one can get the general term of relation~\eqref{eq:yn:pe} as
\[
y_n=\alpha^n(1+c\cdot n)\bmod p, 
\]
where $c\in \mathbb{Z}_p$. Substituting $y_0=1$ and $y_1=x_0$ into the above equation yields
\begin{equation}
\label{eq:yn:root1}
y_n=\alpha^n(n(\alpha^{-1}x_0-1)+1)\bmod p.
\end{equation}
As shown in Proposition~\ref{pro:period:root2}, we can prove that the generator can produce 
maximum-length sequences (also called m-sequences).

\begin{Proposition}
\label{pro:period:root2}
When $4a+b^2\equiv 0\pmod p$, $\mathcal{G}(a, b, p, 1)$ is composed of one cycle of length $p-1$ and one self-loop. 
\end{Proposition}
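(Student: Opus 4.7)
The plan is to decompose $\mathcal{G}(a, b, p, 1)$ into one self-loop at $\alpha$ and one $(p-1)$-cycle containing every other node. First I would verify the self-loop directly: from $a = -\alpha^2$, $b = 2\alpha$ in Eq.~\eqref{eq:ab:alpha}, the image $\phi(\alpha) = a\alpha^{-1} + b = -\alpha + 2\alpha = \alpha$. Here $\alpha \not\equiv 0 \pmod p$ because $a \not\equiv 0 \pmod p$, and $p$ must be odd (otherwise $b = 2\alpha \equiv 0$ would contradict $ab \not\equiv 0 \pmod p$).

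Next I would pin down the cycle by tracing the orbit of $0$. Because $\phi(0) = b = 2\alpha$, it suffices to study the orbit beginning at $x_0 = 2\alpha$. For this choice $\alpha^{-1}x_0 - 1 = 1$, so Eq.~\eqref{eq:yn:root1} specializes to $y_n = \alpha^n(n+1) \bmod p$. The smallest positive $n$ with $y_n \equiv 0 \pmod p$ is $n = p - 1$; Lemma~\ref{lemma:xnToyne}(ii) translates this to $x_{p-2} \equiv 0 \pmod p$ being the first hit, after which $x_{p-1} = \phi(0) = b = 2\alpha = x_0$, so the orbit is purely periodic with period exactly $p - 1$. Moreover, Lemma~\ref{lemma:xnToyne}(i) yields the explicit cycle values $x_n = \alpha(n+2)(n+1)^{-1} \bmod p$ for $0 \le n \le p - 3$, together with $x_{p-2} = 0$.

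It remains to check that these $p - 1$ states are pairwise distinct and disjoint from $\{\alpha\}$. The map $n \mapsto 1 + (n+1)^{-1}$ is injective on $\{0, 1, \ldots, p - 3\}$ because $(n+1)^{-1}$ ranges over distinct nonzero residues avoiding $-1$, and the value $1$ is never attained; hence $x_0, \ldots, x_{p-3}$ are pairwise distinct and all differ from $\alpha$, while $x_{p-2} = 0 \ne \alpha$. A counting argument ($1$ fixed point plus $p - 1$ cycle nodes equals the full $p$ states) then rules out any transient tails. The main obstacle is handling the IPRNS--SLRS correspondence with care: the closed form in Eq.~\eqref{eq:yn:root1} is valid only while $y_n$ lies in the unit group, and part (ii) of Lemma~\ref{lemma:xnToyne} (rather than a naive substitution in (i)) is what identifies the exact step at which the IPRNS falls out of the units and restarts at $b$; once that is in hand, injectivity of $(n+1)^{-1}$ delivers both the cycle length and the absence of tails with essentially no further computation.
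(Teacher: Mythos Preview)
Your proof is correct and follows essentially the same approach as the paper: both verify the self-loop at $\alpha$ via $a\alpha^{-1}+b=\alpha$, both specialize Eq.~\eqref{eq:yn:root1} at $x_0=b$ to obtain $y_n=(n+1)\alpha^n$, and both use Lemma~\ref{lemma:xnToyne}(ii) to read off the first return at step $p-1$. The only difference is in how the remaining $p-1$ states are shown to lie on a single cycle: the paper argues that for every $x_0\neq\alpha$ the congruence $n(\alpha^{-1}x_0-1)+1\equiv 0$ has a solution, so every such orbit eventually hits $0$ and hence shares the period of $S(b;a,b)$; you instead parametrize the orbit of $b$ explicitly as $x_n=\alpha\bigl(1+(n+1)^{-1}\bigr)$, check injectivity and avoidance of $\alpha$, and finish by counting. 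Both finishes are clean and equivalent in effort.
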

\begin{proof}
When $x_0=\alpha$, one can know $x_1=\phi(x_0)=a\alpha^{-1} +b=\alpha$ from Eq.~\eqref{eq:ab:alpha}, which means sequence
$S(x_0; a, b)=\{\alpha, \alpha, \cdots\}$ and $T(x_0; a, b)=1$. 

When $x_0\neq\alpha$, there must exist an integer number $n$ such that $n(\alpha^{-1}x_0-1)+1\equiv 0 \pmod p$. 
It yields from Eq.~\eqref{eq:yn:root1} that sequence 
$\{y_n\}_{n\ge 0}$ must contain zero element. So
does sequence $S(x_0; a, b)$ from Lemma~\ref{lemma:xnToyne}-ii), which further deduces $T(x_0; a, b)=T(0;a, b)=T(b; a, b)$.
Let $x_0=b$, then $y_n=(n+1)\alpha^n\bmod p$ from Eqs.~\eqref{eq:ab:alpha} and~\eqref{eq:yn:root1}.
It means $p-1$ is the smallest integer $n$ such that $y_n=0$.
From Lemma~\ref{lemma:xnToyne}-ii), one can know $p-2$ is the smallest integer $n$ satisfying $x_n=0$. 
Then, one can get $T(b; a, b)=p-1$.
For any $x_0\in\Z{}$, there are $p-1$ initial states of value $x_0$ satisfying $x_0\neq \alpha$.
All such states make up a cycle of length $p-1$ in $\mathcal{G}(a, b, p, 1)$
with their mapping relation. 

Combining the above two cases, one can know $\mathcal{G}(a, b, p, 1)$ is composed of one cycle of length $p-1$ and one self-loop. 
\end{proof}

When $(a, b, p)=(1, 3, 13)$, one has $4a+b^2\equiv 0\pmod p$.
And the corresponding $\mathcal{G}(1, 3, 13, 1)$ shown in Fig.~\ref{fig:cycle:p1}c) is composed of one cycle of length 12 and one self-loop, which is consistent with Proposition~\ref{pro:period:root2}. 

\subsubsection{$4a+b^2\not\equiv 0\pmod p$}
\label{sec:root:2}
In such case, one has $\alpha \neq \beta$ and $\alpha,\beta \in 
\mathbb{A}_1$ from Eq.~\eqref{eq:chaacter:poly}.
Then, the general term of relation~\eqref{eq:yn:pe} is
$y_n=\lambda_1\alpha^n+\lambda_2\beta^n\bmod p$,
where $\lambda_1, \lambda_2 \in \mathbb{A}_1$. 
Substituting $y_0=1$ and $y_1=x_0$ into the previous equation, one has $\lambda_1=(x_0-\beta)(\alpha-\beta)^{-1}$ and $\lambda_2=(\alpha-x_0)(\alpha-\beta)^{-1}$. 
So, 
\begin{equation}
\label{eq:yn:root2}
y_n=(\alpha-\beta)^{-1}((x_0-\beta)\alpha^n+(\alpha-x_0)\beta^n)\bmod p. 
\end{equation}
Referring to Lemma~\ref{lemma:xnToyne},
one can know that whether $y_n=0$ for some $n$ is the key to calculate the period of sequence $S(x_0; a, b)$. 
Depending on such condition, Lemma~\ref{lemma:periodL:1root} reveals the three possible values
of the period of sequence $S(x_0; a, b)$ according to the relation between $x_0$ and $(\alpha, \beta)$.

\begin{lemma}
\label{lemma:periodL:1root}
When $a, b\in\Zx{}$ and $4a+b^2\not\equiv 0\pmod p$,
sequence $S(x_0; a, b)$ is periodic and its least period
\begin{equation*}
T(x_0; a, b)=
\begin{cases}
 1                         & \mbox{if~} x_0\in\{\alpha, \beta\}; \\
\ord(\alpha\beta^{-1})_p-1 & \mbox{if~} (x_0-\alpha)(x_0-\beta)^{-1}\in\Omega; \\
\ord(\alpha\beta^{-1})_p   & \mbox{if~} (x_0-\alpha)(x_0-\beta)^{-1}\notin\Omega,
\end{cases}
\end{equation*}
where $x_0\in\mathbb{Z}_p$ and
\[
\Omega =\{ \alpha\beta^{-1}, (\alpha\beta^{-1})^2, \cdots, (\alpha\beta^{-1})^{\ord(\alpha\beta^{-1})_p-1} \}.
\]
\end{lemma}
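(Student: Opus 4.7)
The plan is to combine the bijectivity of $\phi$ with the closed-form expression \eqref{eq:yn:root2} and Lemma~\ref{lemma:xnToyne}. First, I would observe that when $a,b\in\Zx{}$, the map $\phi$ is a bijection on $\Z{}$: on $\Zx{}$ the map $x\mapsto ax^{-1}+b$ is injective, and its image misses only the value $b=\phi(0)$. Hence every orbit of $\phi$ is a cycle, $S(x_0;a,b)$ is purely periodic, and $T(x_0;a,b)$ equals the length of the cycle through $x_0$. The case $x_0\in\{\alpha,\beta\}$ is then immediate from $f(\alpha)=f(\beta)=0$: for instance $\phi(\alpha)=(b\alpha+a)/\alpha=\alpha^2/\alpha=\alpha$.

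For $x_0\notin\{\alpha,\beta\}$, the next step is to solve $y_n=0$ in \eqref{eq:yn:root2}; after substitution this rearranges to $(\alpha\beta^{-1})^n=(x_0-\alpha)(x_0-\beta)^{-1}=:\rho$, where $\rho\neq 1$ because $\alpha\neq\beta$. Set $\gamma=\alpha\beta^{-1}$ and $d=\ord(\gamma)_p$. If $\rho\notin\Omega$ no such $n$ exists, so $y_n\neq 0$ for all $n\geq 0$ and the identity $x_n=y_{n+1}/y_n$ holds globally. Using $\gamma^d=1$ one gets $\alpha^d=\beta^d$ and therefore $y_{n+d}=\alpha^d y_n$, which yields $x_{n+d}=x_n$ and $T\mid d$. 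Conversely, $x_{n+T}=x_n$ for all $n$ implies $y_{n+T}=y_T\cdot y_n$; comparing the $\alpha^n$- and $\beta^n$-components, whose coefficients are both nonzero since $x_0\notin\{\alpha,\beta\}$, forces $\alpha^T=\beta^T$, i.e.\ $d\mid T$. Hence $T=d$.

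The remaining subcase $\rho\in\Omega$ is the most delicate. Here there is a smallest $s\in\{1,\ldots,d-1\}$ with $\gamma^s=\rho$, so $y_s=0$; Lemma~\ref{lemma:xnToyne}(ii) (or the triviality $x_0=0$ when $\rho=\gamma$) then yields $x_{s-1}=0$, meaning the orbit of $x_0$ under the bijection $\phi$ meets the cycle through $0$, so $T(x_0;a,b)=T(0;a,b)$. To compute $T(0;a,b)$ I would shift indices and consider the sequence starting at $x_1=b$, whose auxiliary $\hat y$ satisfies $\hat y_0=1$, $\hat y_1=b=\alpha+\beta$, and by \eqref{eq:yn:root2} simplifies cleanly to $\hat y_n=(\alpha^{n+1}-\beta^{n+1})/(\alpha-\beta)$. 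Setting $\hat y_n=0$ gives $\gamma^{n+1}=1$; the smallest positive such $n$ is $d-1$, because $b\neq 0$ rules out $\gamma=-1$ and hence $d\neq 2$. A second application of Lemma~\ref{lemma:xnToyne}(ii) to this shifted sequence shows that the smallest $n\geq 1$ at which it returns to $0$ is $d-2$, so the cycle through $0$ has length $(d-2)+1=d-1$, as claimed.

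The main obstacle is the index bookkeeping in the last paragraph: one has to apply Lemma~\ref{lemma:xnToyne}(ii) carefully around the zero of $y$, and use $b\neq 0$ to exclude the degenerate value $d=2$ (which would otherwise give the nonsensical cycle length $d-1=1$ conflicting with $b\neq\phi(b)$). The other parts reduce to algebra with the two geometric terms $\alpha^n,\beta^n$ and the standard distinct-eigenvalue argument for second-order linear recurrences.
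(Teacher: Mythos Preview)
Your proposal is correct and follows essentially the same strategy as the paper: both use the closed form~\eqref{eq:yn:root2} together with Lemma~\ref{lemma:xnToyne}, split on whether $\rho=(x_0-\alpha)(x_0-\beta)^{-1}$ lies in $\Omega$, reduce the $\Omega$-case to computing $T(b;a,b)$ via the first zero of the shifted $y$-sequence, and handle the complementary case by solving $x_m=x_0$. The only cosmetic differences are that you establish pure periodicity up front via the bijectivity of $\phi$ (the paper instead closes the $\Omega$-case with a counting argument) and that you make the exclusion $d\neq 2$ explicit, which the paper leaves tacit.
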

\begin{proof}
When $x_0\in\{\alpha, \beta\}$, one has $y_n=x^n_0$ in value from Eq.~\eqref{eq:yn:root2}. 
Equation~\eqref{eq:relation:xnyn} yields $x_n=x_0$ for any $n\geq 0$. 
Thus, sequence $S(x_0; a, b)$ is periodic and $T(x_0; a, b)=1$.

When $x_0\not\in\{\alpha, \beta\}$, one has $(x_0-\alpha)(x_0-\beta)$ is a unit,
an invertible element for the multiplication of the associated ring.
From Eq.~(\ref{eq:yn:root2}), one can obtain that $y_n=0$ if and only if
\begin{equation*}
(x_0-\alpha)\beta^n\equiv (x_0-\beta)\alpha^n \pmod p. 
\end{equation*}
Then, the above congruence is equivalent to
\begin{equation}
\label{eq:equal0:2}
(x_0-\alpha)(x_0-\beta)^{-1}\equiv(\alpha\beta^{-1})^n \pmod p.
\end{equation}
Hence, $y_n=0$ if and only if
\begin{equation}
\label{eq:set:omega}
 (x_0-\alpha)(x_0-\beta)^{-1}\in\Omega. 
\end{equation}
According to whether condition~\eqref{eq:set:omega} exists, the proof is divided into the following two cases:
\begin{itemize}
    \item $(x_0-\alpha)(x_0-\beta)^{-1}\in\Omega$: There exists a positive integer $m$ such that $y_{m}=0$ from condition~\eqref{eq:set:omega}. Then one has $x_{m-1}=0$ from Eq.~\eqref{eq:relation:xnyn}. Thus, sequence $S(x_0; a, b)$ contains zero element and $T(x_0; a, b)=T(0;a, b)=T(b; a, b)$. 
    Setting $x_0=b=\alpha+\beta$ and $y_m=0$, one has ${(\alpha\beta^{-1})}^{m+1}=1$ from Eq.~\eqref{eq:yn:root2}. 
    It means $m=\ord(\alpha\beta^{-1})_p-1$ is the smallest integer satisfying $y_{m}=0$. 
    Thus, $T(x_0; a, b)=\ord(\alpha\beta^{-1})_p-1$. Since $(b-\alpha)(b-\beta)^{-1}=\beta\alpha^{-1}=(\alpha\beta^{-1})^{\ord(\alpha\beta^{-1})_p-1}\in \Omega$, 
    and $\Omega$ is closed concerning the power of its any element, there are $\ord(\alpha\beta^{-1})_p-1$ initial states of value $x_0$ such that $(x_0-\alpha)(x_0-\beta)^{-1}\in\Omega$.
    Thus, sequence $S(x_0; a, b)$ is periodic.
    
    \item $(x_0-\alpha)(x_0-\beta)^{-1}\notin\Omega$:
    It follows that $y_n \neq 0$ for any $n$ from condition~\eqref{eq:set:omega}, 
    which yields $x_n\neq 0$ for any $n$ from Lemma~\ref{lemma:xnToyne}. 
    Then, let $x_{m}=x_0$, that is $y_{m+1}y^{-1}_{m}=x_0$, one can get $${(\alpha\beta^{-1})}^m\equiv1\pmod p$$ from Eq.~\eqref{eq:yn:root2}. 
    Thus, sequence $S(x_0; a, b)$ is periodic, and $T(x_0; a, b)=\ord(\alpha\beta^{-1})_p$ from the definition of the order of an element. 
\end{itemize}
\end{proof}

In number theory, an integer $q$ is called a quadratic residue modulo $n$ if it is congruent to a perfect square modulo $n$, that is $q \equiv x^2\pmod n$, where $x\in\{0,1,\cdots n-1\}$.
Then, if $4a+b^2$ is a quadratic residue modulo $p$, polynomial $f(t)$ is reducible over $\mathbb{Z}_p$ and $\alpha, \beta \in \Z{}$.
It follows from Lemma~\ref{lemma:periodL:1root} that $\mathcal{G}(a, b, p, 1)$ is composed of two cycles of different lengths and two self-loops, as presented in Proposition~\ref{pro:root:Zn}.
If $4a+b^2$ is a non-quadratic residue modulo $p$, $f(t)$ is irreducible over $\mathbb{Z}_p$ and $\alpha, \beta \not\in \Z{}$. So, there is no self-loop in $\mathcal{G}(a, b, p, 1)$.

\begin{Proposition}
\label{pro:root:Zn}
When $a, b\in\Zx{}$ and $4a+b^2\not\equiv 0\pmod p$, $\mathcal{G}(a, b, p, 1)$ is composed of
one cycle of length $k-1$, $\frac{p-k+1-v}{k}$ cycles of length $k$, and $v$ self-loops, 
where $k=\ord(\alpha\beta^{-1})_p$, $v=2$ if $4a+b^2$ is a quadratic residue modulo $p$; $v=0$ otherwise.
\end{Proposition}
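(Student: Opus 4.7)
The plan is to read off the decomposition of $\mathcal{G}(a,b,p,1)$ directly from Lemma~\ref{lemma:periodL:1root}, which assigns every $x_0\in\mathbb{Z}_p$ one of three period values, $T\in\{1,\,k-1,\,k\}$. Under the standing hypothesis $ab\not\equiv 0\pmod p$, Lemma~\ref{lemma:periodL:1root} asserts that every $\var{IPRNS}$ is purely periodic, so every connected component of $\mathcal{G}(a,b,p,1)$ is a simple cycle whose length is the least period of its states. Counting the states in each period class and dividing by the period therefore yields the full structure.

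First I would count the fixed points ($T=1$), which by Lemma~\ref{lemma:periodL:1root} are exactly $\{\alpha,\beta\}\cap\mathbb{Z}_p$. If $4a+b^2$ is a quadratic residue modulo $p$, then $f(t)=t^2-bt-a$ splits as a product of two distinct linear factors over $\mathbb{Z}_p$ (distinctness coming from $4a+b^2\not\equiv 0\pmod p$, which excludes a repeated root), giving $\alpha,\beta\in\mathbb{Z}_p$ and $v=2$ self-loops. If $4a+b^2$ is a non-residue, $f(t)$ is irreducible over $\mathbb{Z}_p$, so $\alpha,\beta\in\mathbb{A}_1\setminus\mathbb{Z}_p$ and $v=0$.

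Next I would count the states with period $k-1$, namely those $x_0\in\mathbb{Z}_p\setminus\{\alpha,\beta\}$ with $(x_0-\alpha)(x_0-\beta)^{-1}\in\Omega$. The M\"obius map $\mu(x)=(x-\alpha)(x-\beta)^{-1}$ is an injection on $\mathbb{Z}_p$, and the proof of Lemma~\ref{lemma:periodL:1root} already exhibits $x_0=b$ as the preimage of $(\alpha\beta^{-1})^{k-1}\in\Omega$; combined with $\Omega$ being closed under powers, the preimage of $\Omega$ under $\mu$ inside $\mathbb{Z}_p$ is a set of size exactly $|\Omega|=k-1$. Since these $k-1$ states share the period $k-1$ and the dynamics identifies them with one orbit of that length, they form a single cycle of length $k-1$.

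Finally, the remaining $p-v-(k-1)=p-k+1-v$ states each carry period $k$ by Lemma~\ref{lemma:periodL:1root}, and their orbits partition them into $\frac{p-k+1-v}{k}$ cycles of length $k$. Integrality of this count is ensured by the fact that $k=\ord(\alpha\beta^{-1})_p$ divides $p-1$ when $\alpha,\beta\in\mathbb{Z}_p$ and divides $p+1$ when $\alpha\beta^{-1}$ lies in the norm-one subgroup of $\mathbb{A}_1^{\times}$. The main subtlety is in the second step: verifying that when $\alpha,\beta$ live in $\mathbb{A}_1\setminus\mathbb{Z}_p$, the M\"obius preimage of every element of $\Omega$ nevertheless lands in $\mathbb{Z}_p$. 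This reduces to showing that $\mu^{-1}(z)\in\mathbb{Z}_p$ is equivalent to $z$ having norm one under the conjugation swapping $\alpha$ and $\beta$, a condition automatically satisfied by every power of $\alpha\beta^{-1}$.
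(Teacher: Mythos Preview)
Your proposal is correct and follows essentially the same route as the paper: invoke Lemma~\ref{lemma:periodL:1root} to sort all $x_0\in\mathbb{Z}_p$ into the three period classes $\{1,\,k-1,\,k\}$, count the fixed points according to whether $4a+b^2$ is a residue, identify the $k-1$ states forming the single short cycle, and let the remaining $p-k+1-v$ states fall into length-$k$ cycles. The paper's proof is terser at the second step---it simply points back to the orbit computation inside the proof of Lemma~\ref{lemma:periodL:1root} (the orbit of $b$ has length $k-1$ and coincides with the preimage of $\Omega$)---whereas you justify the same count via injectivity of the M\"obius map $\mu$ and, in the irreducible case, the norm-one argument; both arguments reach the same conclusion.
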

\begin{proof}
If $4a+b^2$ is a quadratic residue modulo $p$, one has $\alpha, \beta\in \mathbb{Z}_p$.
Referring to Lemma~\ref{lemma:periodL:1root}, one can see there are three possible least periods in 
$\mathcal{G}(a, b, p, 1)$. So, there is one self-loop of length 1 corresponding to each value of $x_0$.
If $4a+b^2$ is a non-quadratic residue modulo $p$, parameters $\alpha, \beta$ belong to $GR(p,2)$ not $\Z{}$. So, the first case on $T(x_0; a, b)$ in Lemma~\ref{lemma:periodL:1root} does not exist, and there is no self-loop.
In any case, we summarize there are $v$ self-loops.
As the proof on the second case in Lemma~\ref{lemma:periodL:1root}, one can set $x_0=\alpha+\beta$ and conclude that there is only one cycle of length $\ord(\alpha\beta^{-1})_p-1$.
Excluding the $v+(\ord(\alpha\beta^{-1})_p-1)$ discussed nodes in $\mathcal{G}(a, b, p, 1)$, the remaining ones compose $\frac{p-k+1-v}{k}$ cycles of the same length, namely $k=\ord(\alpha\beta^{-1})_p$.
\end{proof}

Let $g(t)=t^2 +((a^{-1}b^2+2)\bmod p)t+1$. One can know $\alpha\beta^{-1}$ and $\alpha^{-1}\beta$ are two roots of $g(t)$~\cite[Lemma 3]{wsChou:IPVG:FFTA1995}. 
Note that $\ord(g(t))_p$ is equal to the order of any root of the polynomial~\cite[Theorem~3.3]{Rudolf:Introfinite:1994},
where $\ord(g(t))_p$ is the least positive integer $s$ satisfying $g(t)\mid t^s-1$.
Since $(\alpha\beta^{-1})^{-1}=\alpha^{-1}\beta$, 
one has $\ord(\alpha\beta^{-1})_p=\ord(\alpha^{-1}\beta)_p$.
When $4a+b^2$ is a non-quadratic residue modulo $p$, calculating the value of $\ord(\alpha\beta^{-1})_p$ is relatively complicated.
However, one can get it by
\begin{equation}\label{eq:ord:gt}
\ord(\alpha\beta^{-1})_p=\ord(g(t))_p.
\end{equation}

When $(a, b, p)=(7, 5, 13)$, one can calculate $4a+b^2\equiv1^2 \pmod{13}$, $\alpha=3$, $\beta=2$, $\beta^{-1}\equiv 7 \pmod {13}$, and $\ord(3\times7)_{13}=4$.
The corresponding $\mathcal{G}(7, 5, 13, 1)$ shown in Fig.~\ref{fig:cycle:p1}d) is composed of one cycle of length 4-1=3, $\frac{13-4+1-2}{4}=2$ cycles of length four, and two self-loops.
When $(a, b, p)=(1, 4, 13)$, one has $4a+b^2=20$ is a non-quadratic residue modulo $13$
and $g(t)=t^2+((a^{-1}b^2+2)\bmod p)t+1=t^2+5t+1$.
Assume $\epsilon$ is a root of $g(t)$, one has $\epsilon^2\equiv-5\epsilon-1\pmod {13}$. 
Enumerating the power of $\epsilon$, one can get $\epsilon^7\equiv 1\pmod {13}$. It means $\ord(\epsilon)_{13}=\ord(g(t))_{13}=7$.
The corresponding $\mathcal{G}(1, 4, 13, 1)$ shown in Fig.~\ref{fig:cycle:p1}e) is composed of one cycle of length six and one cycle of length seven. The two cases are both consistent with Proposition~\ref{pro:root:Zn}. 

\section{Graph structure of $\var{IPRNG}$~\eqref{eq:inv:zpe} over $\Z{e}$}\label{sec:PeriodZe}

Using the unilateral connected digraph $G(L, n)$, this section discloses the graph structure of $\var{IPRNG}$~\eqref{eq:inv:zpe} over $\Z{e}$ when $e\ge2$ (See some examples in Fig.~\ref{fig:cycle:p2}).
Since there are some zero-divisors in $\Z{e}$, congruence $ab\equiv 0 \pmod{p^e}$ means $a=0$; $b=0$; or $a\equiv 0\pmod{p^{e-s}}$ and $b\equiv 0\pmod{p^s}$, 
where $a, b \in \Z{e}$ and $s\in \{1, 2, \cdots, e-1\}$. 
So, solving the equation $ab=0$ in $\Z{e}$ is more complicated than that in $\Z{}$, which means 
the period analysis of $\var{IPRNS}$ is much more complex than that given in the previous section. 
Then, this section divides the whole analysis into three cases: $a\in(p)$; $a\in\Zx{e}, b\in(p)$; $a, b\in\Zx{e}$.

\begin{figure}[!htb]
\centering
\begin{minipage}{0.8\twofigwidth}
 \centering
 \includegraphics[width=0.8\twofigwidth]{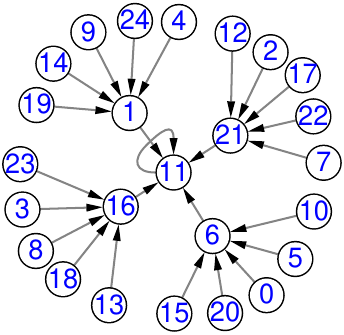}
 a) 
\end{minipage}
\hspace{1em}
\begin{minipage}{1.2\twofigwidth}
 \centering
 \includegraphics[width=1.1\twofigwidth]{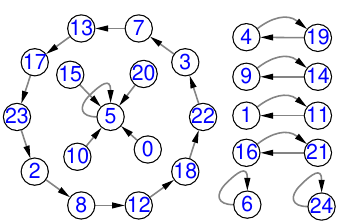}
 b)
\end{minipage}\\
\begin{minipage}{1.3\twofigwidth}
 \centering
 \includegraphics[width=1.3\twofigwidth]{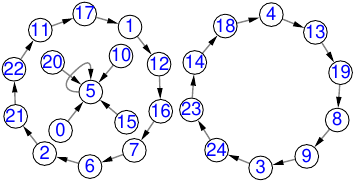}
 c)
\end{minipage}
\begin{minipage}{0.8\twofigwidth}
 \centering
 \includegraphics[width=0.8\twofigwidth]{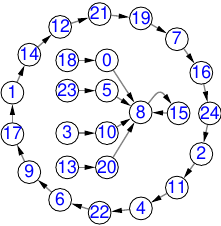}\\
 d)
\end{minipage}\\
\begin{minipage}{1.3\twofigwidth}
 \centering
 \includegraphics[width=1.3\twofigwidth]{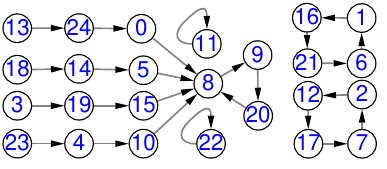}
 e)
\end{minipage}
\begin{minipage}{1.3\twofigwidth}
 \centering 
 \includegraphics[width=1.3\twofigwidth]{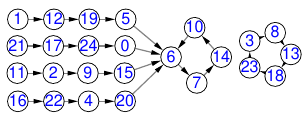}
 f)
\end{minipage}
\caption{Functional graphs of $\var{IPRNG}$~\eqref{eq:inv:zpe} over $\mathbb{Z}_{5^2}$: a) $(a, b)=(5, 6)$; b) $(a, b)=(6, 5)$; c) $(a, b)=(7, 5)$; d) $(a, b)=(6, 8)$; e) $(a, b)=(8, 8)$; f) $(a, b)=(6, 6)$.}
\label{fig:cycle:p2}
\end{figure}

\subsection{Graph structure of $\var{IPRNG}$~\eqref{eq:inv:zpe} with $a\in(p)$}
\label{subsec:ain:zpe}

When $a=0$ and $b\in\Z{e}$ or $a,b\in(p)$, one has $\phi^2(x_0)=b$ for any $x_0\in\Z{e}$, where $\phi$ is the function determined by $\var{IPRNG}$~\eqref{eq:inv:zpe}.
Thus, the corresponding functional graphs in such cases are trivial.
When $a\in(p)$, $a\neq 0$, and $b\not\in(p)$, Lemma~\ref{lemma:phir-1b} reveals that sequence $S(x_0; a, b)$ converges to a constant when its index is sufficiently large,
namely there is a self-loop in $\mathcal{G}(a, b, p, e)$.
Combining the self-loop, Lemma~\ref{lemma:condition:leafnode} gives the necessary and sufficient conditions for any node whose value is not equal to $b$ to be a non-leaf node in $\mathcal{G}(a, b, p, e)$.
Since $\phi(x_0)=b$ when $x_0\in(p)$, node of value $b$ is a non-leaf node.
Then, the in-degree of any non-leaf node is presented in Lemma~\ref{lemma:indegree:non-leafNode}.

\begin{lemma}\label{lemma:phir-1b}
When $a\in(p)$, $a\neq 0$, and $b\not\in(p)$, one has 
$\phi^l(x_0)=\phi^{l-1}(b)$ for any $x_0\in\Z{e}$, where $\phi$ is the function determined by $\var{IPRNG}$~\eqref{eq:inv:zpe}, 
$l\ge \lceil \frac{e}{\logp(a)} \rceil$, $\lceil\cdot \rceil$ is the ceiling function, and 
\[
\logp(x)=\min\{\max\{k \mid x\equiv 0 \pmod {p^k}\}, e\}.
\]
\end{lemma}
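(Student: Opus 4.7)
The plan is to split the argument according to whether $x_0$ lies in the maximal ideal $(p)$ or in the unit group $\Zx{e}$. Writing $s=\logp(a)$, the hypothesis $a\in(p)$ with $a\ne 0$ forces $1\le s\le e-1$. If $x_0\in(p)$, the definition of $\phi$ in~\eqref{eq:inv:zpe} gives $\phi(x_0)=b$ directly, so $\phi^l(x_0)=\phi^{l-1}(b)$ for every $l\ge 1$, which in particular covers $l\ge\lceil e/s\rceil$. The remainder of the argument therefore concerns the unit case $x_0\in\Zx{e}$.

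For $x_0\in\Zx{e}$, the first observation is that both trajectories $\{\phi^n(x_0)\}_{n\ge 0}$ and $\{\phi^n(b)\}_{n\ge 0}$ stay inside $\Zx{e}$: since $b\notin(p)$ and $az^{-1}\in(p)$ whenever $z\in\Zx{e}$, the value $\phi(z)=az^{-1}+b$ is a unit plus an element of $(p)$, hence itself a unit, and a straightforward induction propagates this. Setting $x_1=\phi(x_0)$, it follows that $v_p(x_1-b)=v_p(ax_0^{-1})=s$, where $v_p$ denotes the $p$-adic valuation on $\Z{e}$. The core step is the contraction inequality
\[
v_p\!\bigl(\phi^n(x_1)-\phi^n(b)\bigr)\;\ge\;v_p(x_1-b)+ns\;=\;(n+1)s,\qquad n\ge 0,
\]
proved by induction on $n$. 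The inductive step rests on the identity $\phi^{n+1}(u)-\phi^{n+1}(v)=a\bigl(\phi^n(u)^{-1}-\phi^n(v)^{-1}\bigr)$ together with $u^{-1}-v^{-1}=(v-u)(uv)^{-1}$ for $u,v\in\Zx{e}$: since $uv$ is a unit, this yields $v_p(u^{-1}-v^{-1})=v_p(v-u)$, and multiplying by $a$ contributes exactly one extra factor $p^s$ per step.

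Once $(n+1)s\ge e$, the contraction inequality forces $\phi^n(x_1)=\phi^n(b)$ in $\Z{e}$, i.e.\ $\phi^{n+1}(x_0)=\phi^n(b)$. Writing $l=n+1$, the threshold $(n+1)s\ge e$ is precisely $l\ge\lceil e/s\rceil=\lceil e/\logp(a)\rceil$, which is the bound claimed by the lemma. The main obstacle is the inductive contraction step, which requires the ``inverse valuation'' identity $v_p(u^{-1}-v^{-1})=v_p(v-u)$ to hold inside the finite ring $\Z{e}$ rather than in a torsion-free $p$-adic setting; once one notes that multiplication by the unit $(uv)^{-1}$ preserves $v_p$, this is immediate, and the ``free'' factor $p^s$ coming from $v_p(x_1-b)=s$ is what accounts for the exponent $l-1$ on the right-hand side.
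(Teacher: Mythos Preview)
Your proposal is correct and follows essentially the same approach as the paper: both split on whether $x_0\in(p)$, both observe the orbits remain in $\Zx{e}$, and both prove by induction on the iteration count that the difference $\phi^m(x_0)-\phi^{m-1}(b)$ gains at least $\logp(a)$ in $p$-adic valuation at each step. Your valuation identity $v_p(u^{-1}-v^{-1})=v_p(v-u)$ is exactly the computation the paper carries out more explicitly via $(\phi^{i-1}(b)+jp^{i\logp(a)})^{-1}\equiv(\phi^{i-1}(b))^{-1}\pmod{p^{i\logp(a)}}$, so the arguments coincide up to notation and your reindexing $l=n+1$.
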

\begin{proof}
When $x_0\in (p)$, one has $\phi(x_0)=b$, which means $\phi^l(x_0)=\phi^{l-1}(b)$ for any integer $l$. So, this lemma holds for $x_0\in(p)$. 

When $x_0\not\in (p)$, by mathematical induction on integer $s$, one can prove
\begin{equation}
\label{eq:phir-1b}
\phi^s(x_0)\equiv \phi^{s-1}(b) \pmod {p^{s\cdot\logp(a)}}.
\end{equation}
When $s=1$, since $\phi(x_0)=(ax_0^{-1}+b)\bmod {p^e}$ and $a\equiv 0\pmod {p^{\logp(a)}}$, $\phi(x_0) \equiv \phi^0(b)\pmod {p^{\logp(a)}}.$
Assume that congruence~\eqref{eq:phir-1b} holds for $s=i$, namely $\phi^i(x_0)\equiv \phi^{i-1}(b) \pmod {p^{i\cdot\logp(a)}}$ and 
\begin{equation}\label{eq:phi:ix}
\phi^i(x_0)=\phi^{i-1}(b) +jp^{i\cdot\logp(a)}, 
\end{equation}
where $j$ is an integer.
It is noted that $\phi^i(x_0)$ and $\phi^{i-1}(b)$ are invertible element in $\Z{e}$ for any $i$ because $a\in(p)$ and $b\not\in(p)$. 
When $s=i+1$, from Eq.~\eqref{eq:phi:ix}, one has 
\begin{align*}
\phi^{i+1}(x_0)
& =\phi\circ\phi^i(x_0)=(a(\phi^i(x_0))^{-1}+b) \bmod p^e\\
& =(a(\phi^{i-1}(b) +jp^{i\cdot\logp(a)})^{-1}+b) \bmod p^e \\
&\equiv a(\phi^{i-1}(b))^{-1}+b \pmod{p^{(i+1)\cdot\logp(a)}} \\
&\equiv \phi^{i}(b) \pmod{p^{(i+1)\cdot\logp(a)}}.
\end{align*}
Thus, congruence~\eqref{eq:phir-1b} holds for $s=i+1$. 
The above induction completes the proof of congruence~\eqref{eq:phir-1b}. 
From congruence~\eqref{eq:phir-1b}, one can know $\phi^{l}(x_0)\equiv \phi^{l-1}(b)\pmod {p^{l\cdot\logp(a)}}$ and 
$l\cdot\logp(a)\ge e$. 
It further deduces $\phi^{l}(x_0)=\phi^{l-1}(b)$.
\end{proof}

\begin{lemma}\label{lemma:condition:leafnode}
When $a\in (p)$, $a\neq0$, and $b\not\in (p)$,
any node of value $y\neq b$ is a non-leaf node in $\mathcal{G}(a, b, p, e)$ if and only if $\logp(y-\tilde{x})\geq \logp(a)$ and $\logp(y-b)\neq \logp(a)+1$,
where $\phi(\tilde{x})=\tilde{x}$.
\end{lemma}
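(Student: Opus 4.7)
The plan is to reduce the statement ``$y$ is a non-leaf'' to the solvability of a linear congruence, resolve that via the $p$-adic valuation $\logp$, and then recast the criterion in terms of the fixed point $\tilde{x}$.

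First, since $\phi(x)=b$ for every $x\in(p)$ and $y\neq b$, any preimage of $y$ under $\phi$ must be a unit $x\in\Zx{e}$ satisfying $ax^{-1}\equiv y-b\pmod{p^{e}}$. Setting $w=x^{-1}\in\Zx{e}$, the question becomes whether $aw\equiv y-b\pmod{p^{e}}$ has a unit solution. Writing $a=p^{k}a'$ with $k=\logp(a)$ and $a'$ a unit of $\Z{e}$, the image $a\Zx{e}=p^{k}\Zx{e}$ coincides, by a short cardinality argument (or directly by checking the valuation), with $\{v\in\Z{e}:\logp(v)=k\}$. Hence a preimage exists iff $\logp(y-b)=k$, which is the substantive criterion I aim to repackage in the lemma's form.

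Next I would locate $\tilde{x}$ precisely. From $\phi(\tilde{x})=\tilde{x}$ together with $b\notin(p)$, one derives $\tilde{x}\in\Zx{e}$ and $\tilde{x}^{2}\equiv b\tilde{x}+a\pmod{p^{e}}$; reducing modulo $p$ forces $\tilde{x}\equiv b\pmod p$. Writing $\tilde{x}=b+pz$ and substituting, the fixed-point relation becomes $pz(b+pz)=a$ in $\Z{e}$; because $b+pz$ is a unit, comparing valuations yields $\logp(\tilde{x}-b)=\logp(pz)=\logp(a)=k$ exactly. Existence of $\tilde{x}$ itself follows by Hensel-style lifting of the base solution $\tilde{x}\equiv b\pmod p$, with the unit $b+pz$ playing the role of a nonvanishing derivative.

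Finally, I would translate the criterion $\logp(y-b)=k$ into the stated form. Using $y-\tilde{x}=(y-b)-(\tilde{x}-b)$ together with the exact equality $\logp(\tilde{x}-b)=k$, the condition $\logp(y-\tilde{x})\ge k$ becomes equivalent to $\logp(y-b)\ge k$; combining this with the exclusion $\logp(y-b)\neq k+1$ and the assumption $y\neq b$ then isolates the correct valuation class and yields the stated necessary-and-sufficient characterization. The main obstacle is the fixed-point step: one must simultaneously construct $\tilde{x}$ by Hensel lifting at full precision $p^{e}$ and pin down $\logp(\tilde{x}-b)$ to be exactly $\logp(a)$ rather than strictly greater, since this exact equality is what makes the conditions ``$\logp(y-\tilde{x})\ge\logp(a)$'' and ``$\logp(y-b)=\logp(a)$'' compatible.
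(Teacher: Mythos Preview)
Your reduction is the right one: for $y\neq b$ the existence of a preimage under $\phi$ is equivalent to $\logp(y-b)=\logp(a)$, and your determination that $\logp(\tilde{x}-b)=\logp(a)$ \emph{exactly} is also correct. The gap is in the final repackaging step. From $\logp(\tilde{x}-b)=k$ you correctly deduce $\logp(y-\tilde{x})\ge k\iff\logp(y-b)\ge k$; but then the two stated hypotheses together with $y\neq b$ only pin $\logp(y-b)$ to the set $\{k\}\cup\{k+2,\dots,e-1\}$, not to $\{k\}$ alone. A concrete witness: with $p=5$, $e=4$, $a=5$, $b=1$ one finds $\tilde{x}\equiv 6\pmod{25}$, and $y=126$ satisfies $\logp(y-\tilde{x})=1\ge\logp(a)$ and $\logp(y-b)=3\neq\logp(a)+1$, yet $y-b=125$ cannot be written as $aw$ with $w\in\Zx{e}$, so $y$ is in fact a leaf.

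This is not a flaw in your method but in the lemma as printed: the second condition should read $\logp(y-b)<\logp(a)+1$ (equivalently, $\logp(y-b)=\logp(a)$), not merely $\neq\logp(a)+1$. The paper's own proof of the converse direction silently uses this stronger reading when it passes from ``$\logp(y-b)\neq\logp(a)+1$'' to ``$y-b\not\equiv 0\pmod{p^{\logp(a)+1}}$''. Your approach---first isolating the clean criterion $\logp(y-b)=\logp(a)$ and only then translating via the fixed point---is different from, and more transparent than, the paper's direct manipulation of the fixed-point identity; indeed it is precisely what exposes the slip. With the corrected hypothesis your argument goes through verbatim.
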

\begin{proof}
When a node of value $y\neq b$ is a non-leaf node in $\mathcal{G}(a, b, p, e)$. Then, there exists a node of value $x$ such that
\begin{equation}\label{eq:y:nodeAB}
y=(ax^{-1}+b)\bmod {p^e}.
\end{equation}
From the above equation and 
\begin{equation}\label{eq:x:selfloop}
\tilde{x}=\phi(\tilde{x})=(a\tilde{x}^{-1}+b) \bmod p^e, 
\end{equation}
one has 
$y-\tilde{x}= a(x^{-1}-\tilde{x}^{-1})\bmod p^e$.
It means $\logp(y-\tilde{x})\ge \logp(a)$.
Next, we prove $\logp(y-b)\neq \logp(a)+1$ via contradiction.
Since $a\in (p)$ and $a\neq0$, one can set 
\begin{equation}\label{eq:a:term}
a=\tau_a\cdot p^{\logp(a)},
\end{equation}
where $\tau_a\not\equiv 0\pmod p$ and $1\leq\logp(a)< e$.
If $\logp(y-b)=\logp(a)+1$, one has $y\equiv b\pmod {p^{\logp(a)+1}}$.
It yields from Eqs.~\eqref{eq:y:nodeAB} and~\eqref{eq:a:term} that
$(\tau_a\cdot p^{\logp(a)}x^{-1}+b)\equiv b\pmod {p^{\logp(a)+1}}$.
It further deduces $\tau_a \equiv 0 \pmod{p}$, which contradicts with the definition of $\tau_a$. 
So, $\logp(y-b)\neq \logp(a)+1$. 

When a node whose value $y$ satisfies $\logp(y-\tilde{x})\geq \logp(a)$ and $\logp(y-b)\neq\logp(a)+1$, 
one has $y= \tilde{x}+jp^{\logp(a)}$ and $\tilde{x}+jp^{\logp(a)}-b\not\equiv 0\pmod {p^{\logp(a)+1}}$, where $j$ is an integer.
It yields from Eqs.~\eqref{eq:x:selfloop} and \eqref{eq:a:term} that 
\[
\tau_ap^{\logp(a)}(\tilde{x}^{-1}+\tau_a^{-1}j) \not\equiv  0\pmod {p^{\logp(a)+1}}.
\]
Thus, $\tilde{x}^{-1}+\tau_a^{-1}j$ is invertible.
When $x=(\tilde{x}^{-1}+\tau_a^{-1}j)^{-1}$, one can get 
\begin{align*}
\phi(x)& = a(\tilde{x}^{-1}+\tau_a^{-1}j)+b\bmod {p^e}\\
& = a\tilde{x}^{-1}+b+jp^{\logp(a)} \bmod {p^e}.
\end{align*}
Substituting Eq.~\eqref{eq:x:selfloop} and $y-\tilde{x}=jp^{\logp(a)}$ into the above equation, 
one has $\phi(x) =y$. 
Thus, the node is a non-leaf node in $\mathcal{G}(a, b, p, e)$.
\end{proof}

\begin{lemma}\label{lemma:indegree:non-leafNode}
When $a\in (p)$, $a\neq0$, and $b\not\in (p)$, the in-degree of any non-leaf node of state value $y$ in $\mathcal{G}(a, b, p, e)$ is $p^{\logp(a)}$ if $y\neq b$; $p^{e-1}$ otherwise.
\end{lemma}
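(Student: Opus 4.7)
The plan is to count preimages of $y$ under $\phi$ directly, splitting into the cases $y=b$ and $y\neq b$.

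For $y=b$: every $x\in(p)$ satisfies $\phi(x)=b$ by Eq.~\eqref{eq:inv:zpe}, contributing $|(p)|=p^{e-1}$ preimages. I would then rule out any unit preimage: $\phi(x)=b$ with $x\in\Zx{e}$ would force $ax^{-1}\equiv 0\pmod{p^e}$, and since $x^{-1}$ is a unit this forces $a=0$ in $\Z{e}$, contradicting $a\neq 0$. Hence the in-degree at $b$ is exactly $p^{e-1}$.

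For $y\neq b$: since every non-unit maps to $b\neq y$, all preimages must lie in $\Zx{e}$, so I count units $x$ solving $ax^{-1}\equiv y-b\pmod{p^e}$. Writing $a=\tau_a p^{\logp(a)}$ with $\tau_a$ a unit (as in Eq.~\eqref{eq:a:term}) and setting $k=\logp(a)$, the left-hand side has $p$-adic valuation exactly $k$ for any unit $x$. The equation is therefore solvable only when $\logp(y-b)=k$, and the hypothesis that $y$ is a non-leaf (hence admits at least one preimage) forces this equality.

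Writing $y-b=p^k w$ with $w$ a unit modulo $p^{e-k}$, the congruence $\tau_a p^k x^{-1}\equiv p^k w\pmod{p^e}$ collapses to $x^{-1}\equiv \tau_a^{-1}w\pmod{p^{e-k}}$. This single class modulo $p^{e-k}$ lifts to exactly $p^k$ representatives in $\Z{e}$, each of the form $\tau_a^{-1}w+tp^{e-k}$ for $t\in\{0,1,\ldots,p^k-1\}$; each lift is a unit because its reduction modulo $p$ equals the unit $\tau_a^{-1}w$. Thus there are $p^k=p^{\logp(a)}$ valid preimages. The only mild obstacle is making the reduction from modulus $p^e$ to modulus $p^{e-k}$ precise and confirming that every lift remains a unit, which falls out cleanly once the valuation of $\tau_a p^k x^{-1}$ is pinned to exactly $k$.
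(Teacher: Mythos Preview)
Your proof is correct and follows essentially the same approach as the paper: both arguments count preimages directly, handling $y=b$ via $|(p)|=p^{e-1}$ and $y\neq b$ by noting that the top $\logp(a)$ ``digits'' of $x^{-1}$ are free (the paper phrases this via the $p$-adic expansion $x^{-1}=\sum d_ip^i$, you phrase it as lifting a congruence from modulus $p^{e-\logp(a)}$ to $p^e$). Your treatment of $y=b$ is in fact slightly more careful than the paper's, since you explicitly rule out unit preimages, whereas the paper leaves that step implicit.
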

\begin{proof}
If $y=b$, one has $\phi(x)=b$ for any $x\in(p)$ from Eq.~\eqref{eq:inv:zpe}. 
So, the in-degree of the non-leaf node in $\mathcal{G}(a, b, p, e)$ is $|(p)|=p^{e-1}$.

If $y\neq b$, one can know there exists $x\in\Zx{e}$ such that
$y=\phi(x)=(ax^{-1}+b) \bmod{p^e}$.
According to Eq.~\eqref{eq:GR:form}, one can set 
$x^{-1} =\sum_{i=0}^{e-1}d_ip^{i}$,
where $d_i\in\{0,1,\cdots, p-1\}$ and $d_0\neq 0$.
Then, from Eq.~\eqref{eq:a:term}, one has 
\begin{align*}
y & = \left(\tau_a\cdot\sum_{i=0}^{e-1}d_ip^{i+\logp(a)} +b \right)\bmod{p^e}\\
  & = \left(\tau_a\cdot\sum_{i=0}^{e-\logp(a)-1}d_ip^{i+\logp(a)}+b\right) \bmod{p^e},
\end{align*}
which holds for any $i\in\{e-\logp(a),e-\logp(a)+1, \cdots,e-1\}$, and any $d_i\in \{0,1,\cdots, p-1\}$,
namely there are $p^{\logp(a)}$ states of value $x$ satisfying the above equation.
So, the in-degree of the non-leaf node in $\mathcal{G}(a, b, p, e)$ is $p^{\logp(a)}$.
\end{proof}

As shown in Fig.~\ref{fig:cycle:p2}a), when $(a, b, p, e)=(5, 6, 5, 2)$, one can know $\logp(a)=1$ and the self-loop $\tilde{x}=11$.
For edges $``3\rightarrow16"$ and $``13\rightarrow16"$, one can calculate $\logp(x-11)=0$ when $x=3,13$ and $\logp(16-11)=1$. So, there is a rule $\logp(\phi(x)-\tilde{x})=\logp(x-\tilde{x})+ \logp(a)$ in such case.
Then, the similar rule between the self-loop in $\mathcal{G}(a, b, p, e)$ and any node and its linking node is presented in Lemma~\ref{lemma:nodeATonodeB}.
Using the Lemma, Proposition~\ref{Th:a0:Zpe} discloses how any node evolves to the self-loop.

\begin{lemma}\label{lemma:nodeATonodeB}
When $a\in (p)$, $a\neq0$, and $b\not\in (p)$, one has 
\begin{equation}\label{eq:phi(x):self}
\logp(\phi(x)- \tilde{x})= \min\{\logp(a)+\logp(x-\tilde{x}), e\},
\end{equation}
where $\phi(\tilde{x})=\tilde{x}$.
\end{lemma}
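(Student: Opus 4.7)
The plan is to decompose on whether $x$ lies in the multiplicative group $\Zx{e}$ or in the maximal ideal $(p)$, since $\phi$ has two different defining rules on these two pieces. Before starting the case split, I would first check that the fixed point $\tilde{x}$ must itself lie in $\Zx{e}$: if instead $\tilde{x}\in(p)$ then $\tilde{x}=\phi(\tilde{x})=b$, contradicting the hypothesis $b\notin(p)$. Hence $\tilde{x}^{-1}$ exists in $\Z{e}$, which we shall need for the algebraic manipulation.

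For the case $x\in\Zx{e}$, I would subtract the fixed-point identity $\tilde{x}=a\tilde{x}^{-1}+b$ from the definition $\phi(x)=ax^{-1}+b$, obtaining
\[
\phi(x)-\tilde{x}\;=\;a\bigl(x^{-1}-\tilde{x}^{-1}\bigr)\;=\;a\,(x\tilde{x})^{-1}\,(\tilde{x}-x).
\]
Since $(x\tilde{x})^{-1}$ is a unit (both $x$ and $\tilde{x}$ are units), taking $\logp$ of this product reduces to adding the valuations of $a$ and $\tilde{x}-x$, capped at $e$ because we are in $\Z{e}$. That is precisely the right-hand side of \eqref{eq:phi(x):self}. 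This step relies on the elementary fact, which I would state or verify in passing, that for $u,v\in\Z{e}$ one has $\logp(uv)=\min\{\logp(u)+\logp(v),e\}$.

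For the case $x\in(p)$, the generator gives $\phi(x)=b$ directly, so $\phi(x)-\tilde{x}=b-\tilde{x}$. Using the fixed-point relation $\tilde{x}-b=a\tilde{x}^{-1}$ and the fact that $\tilde{x}^{-1}$ is a unit, we get $\logp(\phi(x)-\tilde{x})=\logp(a)$. On the other hand, since $x\in(p)$ while $\tilde{x}\in\Zx{e}$, the difference $x-\tilde{x}$ is a unit, so $\logp(x-\tilde{x})=0$ and the right-hand side of \eqref{eq:phi(x):self} becomes $\min\{\logp(a),e\}=\logp(a)$, because $a\neq 0$ and $a\in(p)$ force $1\le \logp(a)\le e-1$. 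Both sides agree, closing this case.

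The only mildly delicate point is the interaction of $\logp$ with the cap at $e$: one has to make sure that when $\logp(a)+\logp(x-\tilde{x})\ge e$ the product $a(\tilde{x}-x)$ genuinely vanishes in $\Z{e}$, so that $\logp$ of the left-hand side returns $e$ and matches the min on the right. This is immediate from $a(\tilde{x}-x)\equiv 0\pmod{p^{\logp(a)+\logp(\tilde{x}-x)}}$ combined with the definition of $\logp$, but it is the one spot where the bookkeeping needs care. Everything else is a direct computation using the fixed-point equation.
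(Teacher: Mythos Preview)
Your argument is correct and follows essentially the same route as the paper: both proofs split on whether $x\in\Zx{e}$ or $x\in(p)$, use the fixed-point identity $\tilde{x}-b=a\tilde{x}^{-1}$, and reduce $\phi(x)-\tilde{x}$ to $a$ times $(\tilde{x}-x)$ times a unit. Your presentation is a bit more streamlined---you invoke the multiplicativity $\logp(uv)=\min\{\logp(u)+\logp(v),e\}$ once, whereas the paper writes $x=\tilde{x}+jp^{t}$ and tracks the power of $p$ explicitly---but the underlying computation is identical.
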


\begin{proof}
Let $t=\logp(x-\tilde{x})$, one has 
$x=\tilde{x}+jp^t$ and $j\not\equiv0\pmod p$. 
Then, one can get
\begin{equation}\label{eq:selfloop:ite}
\phi(x)= 
\begin{cases}
(a(\tilde{x}+jp^t)^{-1}+b)\bmod {p^e} & \mbox{if } x\in\Zx{e};\\
b      & \mbox{if } x\notin\Zx{e}.
\end{cases}
\end{equation}
For the first case of Eq.~\eqref{eq:selfloop:ite}, 
one can obtain
\begin{align*}
\phi(x)
& = ( a(\tilde{x}^{-1}-jp^t\tilde{x}^{-1}(\tilde{x}+jp^t)^{-1})+b) \bmod{p^e}\\ 
& = (a\tilde{x}^{-1}+b-ajp^t\tilde{x}^{-1}(\tilde{x}+jp^t)^{-1}) \bmod{p^e}.
\end{align*}
It yields from Eqs.~\eqref{eq:x:selfloop} and~\eqref{eq:a:term} that
\[
\phi(x)=(\tilde{x}- p^{\logp(a)+t}\cdot\tau_a\cdot j\tilde{x}^{-1}(\tilde{x}+jp^t)^{-1}) \bmod{p^e}.
\]
Then, referring to
$\tau_a\cdot j\tilde{x}^{-1}(\tilde{x}+jp^t)^{-1})\not\equiv0\pmod p$, one can get 
\[
\phi(x)\equiv 
\begin{cases}
\tilde{x}\pmod {p^{\logp(a)+t}}  & \mbox{if } 0\le t\le e-\logp(a);\\
\tilde{x}\pmod {p^e} & \mbox{otherwise.}
\end{cases}
\]
Namely, Eq.~\eqref{eq:phi(x):self} holds.

For the second case of Eq.~\eqref{eq:selfloop:ite}, one has $t=0$ from $x=\tilde{x}+jp^t$.
According to Eqs.~\eqref{eq:x:selfloop} and~\eqref{eq:a:term}, one has $\phi(x)-\tilde{x}\equiv -p^{\logp(a)}\cdot\tau_a\tilde{x}^{-1}\pmod{p^e}$,
namely, $\logp(\phi(x)-\tilde{x})=\logp(a)$.
So, Eq.~\eqref{eq:phi(x):self} holds.
\end{proof}

\begin{Proposition}
\label{Th:a0:Zpe}
When $a\in (p)$, $a\neq0$, and $b\not\in (p)$, 
any node of value $x$ evolves to a unique state of value $\tilde{x}=h-j$ via a shortest path of length $t(x)=h-j$ in $\mathcal{G}(a, b, p, e)$, where
$h=\lceil \frac{e}{\logp(a)}\rceil-1$,
$j\in \{-1,0, \cdots, h-1\}$ satisfies $\logp(x-\tilde{x})\in T_j$,
$T_{-1}=\{0,1,\cdots, k-1\}$,
$T_j=\{j\cdot \logp(a)+k,j\cdot \logp(a)+k+1, \cdots,(j+1)\cdot \logp(a)+k-1\}$ when $j\neq -1$, 
$k=e-h\logp(a)$, and $\tilde{x}=\phi^{h}(b)$. 
\end{Proposition}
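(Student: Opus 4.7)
The plan is to combine Lemma~\ref{lemma:phir-1b} (which guarantees convergence to a point depending only on $b$) with the sharp quantitative step formula of Lemma~\ref{lemma:nodeATonodeB}, and then reduce the path-length question to an elementary ceiling-function computation that decodes the $T_j$-partition.

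\textbf{Step 1: Existence and uniqueness of the fixed point $\tilde{x}=\phi^{h}(b)$.} Lemma~\ref{lemma:phir-1b} gives $\phi^{l}(x_0)=\phi^{l-1}(b)$ whenever $l\ge h+1=\lceil e/\logp(a)\rceil$ for every $x_0\in\Z{e}$. Taking $x_0=b$ and $l=h+1$ yields $\phi^{h+1}(b)=\phi^{h}(b)$, so $\tilde{x}:=\phi^{h}(b)$ is a fixed point of $\phi$. For uniqueness, suppose $\phi(y)=y$; then $y=\phi^{h+1}(y)$, but Lemma~\ref{lemma:phir-1b} also gives $\phi^{h+1}(y)=\phi^{h}(b)=\tilde{x}$, so $y=\tilde{x}$. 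In particular every trajectory must terminate at this single fixed point, establishing the ``unique state'' claim.

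\textbf{Step 2: Iterating the $p$-adic distance formula.} Writing $m=\logp(x-\tilde{x})$ and iterating Eq.~\eqref{eq:phi(x):self} from Lemma~\ref{lemma:nodeATonodeB} by induction on $s$ gives
\[
\logp\bigl(\phi^{s}(x)-\tilde{x}\bigr)=\min\{\,m+s\cdot\logp(a),\,e\,\}.
\]
Since $\logp(\cdot)=e$ is equivalent to the argument being $0$ in $\Z{e}$, the trajectory first meets $\tilde{x}$ at step $t(x)=\min\{s\ge 0:\,m+s\cdot\logp(a)\ge e\}=\lceil(e-m)/\logp(a)\rceil$. This identifies $t(x)$ intrinsically in terms of the single integer $m=\logp(x-\tilde{x})$, and also shows the path is strictly monotone in the $p$-adic distance, hence shortest.

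\textbf{Step 3: Decoding the $T_j$-partition.} It remains to verify that the ceiling expression collapses to $h-j$ on each slab $T_j$. For $j\in\{0,1,\dots,h-1\}$, write $m=j\cdot\logp(a)+k+r$ with $0\le r\le \logp(a)-1$. Using $k=e-h\cdot\logp(a)$,
\[
\frac{e-m}{\logp(a)}=(h-j)-\frac{r}{\logp(a)},
\]
whose ceiling is $h-j$ whether $r=0$ or $0<r<\logp(a)$. For $j=-1$, $m\in\{0,1,\dots,k-1\}$ gives $(e-m)/\logp(a)\in(h,h+1]$, so the ceiling is $h+1=h-(-1)$. The intervals $T_{-1},T_0,\dots,T_{h-1}$ partition $\{0,1,\dots,e-1\}$ by construction, so every $x\neq\tilde{x}$ lands in exactly one $T_j$ and the case $x=\tilde{x}$ corresponds to $m=e$ with $t(x)=0$.

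\textbf{Main obstacle.} The conceptual content is entirely in Step~2; what is delicate is the bookkeeping in Step~3, in particular checking that the ceiling formula is correct both at the left endpoint of each $T_j$ (where $r=0$ and the quotient is exactly integral) and at the right endpoint (where $r=\logp(a)-1$), as well as correctly handling the boundary slab $T_{-1}$, which has length $k$ rather than $\logp(a)$ precisely because $e$ need not be divisible by $\logp(a)$. Once that indexing is pinned down, combining Steps~1--3 yields the full proposition.
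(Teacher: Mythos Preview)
Your proof is correct and follows essentially the same approach as the paper: both identify the fixed point via Lemma~\ref{lemma:phir-1b}, iterate Lemma~\ref{lemma:nodeATonodeB} to obtain $\logp(\phi^{s}(x)-\tilde{x})=\min\{s\cdot\logp(a)+m,e\}$, and then read off $t(x)$ on each slab $T_j$. The only cosmetic difference is that you first package the hitting time as the closed form $t(x)=\lceil(e-m)/\logp(a)\rceil$ and then evaluate it on each $T_j$, whereas the paper works slab-by-slab, showing directly that $\phi^{h-1-j}(x)\neq\tilde{x}$ while $\phi^{h-j}(x)=\tilde{x}$; your added uniqueness argument for the fixed point is a small bonus not spelled out in the paper.
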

\begin{proof}
Setting $l=h+1$ and $x_0=\phi^{h}(b)$ in Lemma~\ref{lemma:phir-1b}, one has $\phi(\phi^{h}(b))=\phi^{h}(b)$, namely $\tilde{x}=\phi^{h}(b)$.
From the definition of $t(x)$, one has $t(x)$ is the least non-negative integer such that
$\phi^{t(x)}(x)=\tilde{x}$.
Referring to Lemma~\ref{lemma:nodeATonodeB}, 
one can prove
\begin{equation}\label{eq:phis:x}
\logp(\phi^s(x)- \tilde{x})=\min\{s\cdot\logp(a)+\logp(x-\tilde{x}), e\},
\end{equation}
where $s$ is an integer.
Thus, one can know the value of $t(x)$ depends on the value of $\logp(x-\tilde{x})$, where $\logp(x-\tilde{x})\in\{0,1,\cdots e\}$.
Combining a partition of the set
\[
\{0,1,\cdots, e\}=\cup_{j=-1}^{h-1}T_j\cup\{e\},
\]
the proof is divided into the following two cases:
\begin{itemize}
    \item $\logp(x-\tilde{x})=e$: The node of value $x=\tilde{x}$ composes a self-loop.
    \item $\logp(x-\tilde{x})\in T_j$: It yields from Eq.~\eqref{eq:phis:x} that 
    \begin{multline*}
    \logp(\phi^{h-1-j}(x)-\tilde{x})=\\\min\{(h-1-j)\logp(a)+\logp(x-\tilde{x}),e \}.       
    \end{multline*}
    From the definition of set $T_j$ and $k=e-h\logp(a)$, one can get
    \[\logp(\phi^{h-1-j}(x)-\tilde{x})\leq h\logp(a)+k-1=e-1.\]
    It means $\phi^{h-1-j}(x)\neq\tilde{x}$.
    Then, it follows from Eq.~\eqref{eq:phis:x} that 
    $\logp(\phi^{h-j}(x)-\tilde{x})=\min\{h', e\},$
    where $h'=(h-j)\logp(a)+\logp(x-\tilde{x})$.
    Note that 
    \begin{align*}
    h' &\ge 
    \begin{cases}
     (h+1)\logp(a)  & \mbox{if } j=-1;\\
     h\logp(a)+k  & \mbox{otherwise;}
    \end{cases}\\\
    &\ge e.
    \end{align*}
    It means $\logp(\phi^{h-j}(x)-\tilde{x})=e$, namely $\phi^{h-j}(x)=\tilde{x}$.
    So, $t(x)=h-j$.
\end{itemize}
\end{proof}

\subsection{Graph structure of $\var{IPRNG}$~\eqref{eq:inv:zpe} with $a\in\Zx{e}$ and $b\in(p)$}\label{subsec:bin:zpe}

As $\var{IPRNG}$~\eqref{eq:inv:zpe} maps any initial state $x_0\not\in\Zx{e}$ 
into a fixed value $b$, the structure of the associated connected component is trivial and known.
As for any initial state belonging to $\Zx{e}$, $\var{IPRNG}$~\eqref{eq:inv:zpe} composes a permutation when
$a\in\Zx{e}$ and $b\in(p)$ \cite{Niederreiter:Exmpe:AA2000}, which means sequence $S(x_0; a, b)$ is periodic.
So, the graph structure of $\var{IPRNG}$~\eqref{eq:inv:zpe} can be disclosed by studying the period of every sequence starting from
every initial state $x_0\in \Zx{e}$.

From Eq.~\eqref{eq:ab:alphabeta}, one has $a=-\alpha\beta\in\Zx{e}$ and $b=\alpha+\beta\in(p)$.
It yields that $\beta\notin(p)$ and
\begin{equation}\label{eq:albeta:no0}
\alpha-\beta=b-2\beta \not\in(p).
\end{equation}
Similar to Eq.~\eqref{eq:yn:root2}, one has
the general term of relation~\eqref{eq:yn:pe} as
\begin{equation}\label{eq:ynroot2:pe}
y_n=(\alpha-\beta)^{-1}((x_0-\beta)\alpha^n+(\alpha-x_0)\beta^n)\bmod {p^e}.
\end{equation}
For any $x_0\in\Zx{e}$, let $x_{m}=x_0$, that is $y_{m+1}\cdot y_m^{-1}=x_0$ from Eq.~\eqref{eq:relation:xnyn}. 
It follows from Eq.~\eqref{eq:ynroot2:pe} that 
\begin{equation}
\label{eq:alpha:zpe}
(x_0-\alpha)(x_0-\beta)\alpha^m \equiv (x_0-\alpha)(x_0-\beta)\beta^m \pmod {p^e}.
\end{equation}
Thus, the period of the sequence $S(x_0; a, b)$ equals the smallest integer $m$ satisfying congruence~\eqref{eq:alpha:zpe}. 
Furthermore, Lemma~\ref{lemma:L:an0b0} discloses 
two possible cases on the period of sequence $S(x_0; a, b)$, which is determined by the relation between $x_0$ and $(\alpha, \beta)$.

\begin{lemma}
\label{lemma:L:an0b0}
When $a\in\Zx{e}$ and $b\in(p)$, the least period of sequence $S(x_0; a, b)$ is 
\begin{equation}
\label{eq:L:an0b0}
T(x_0; a, b)=
\begin{cases}
1 & \mbox{if } x_0\in\{\alpha, \beta\}; \\
\ord(\alpha\beta^{-1})_{p^{e-k'}},
& \mbox{if } x_0\not\in\{\alpha, \beta\},
\end{cases}
\end{equation}
where $x_0\in\Zx{e}$, 
$k'=\max(k_{\alpha}, k_{\beta})$,
$k_{\alpha}=\logp(x_0-\alpha)$, and
$k_{\beta}=\logp(x_0-\beta)$.
\end{lemma}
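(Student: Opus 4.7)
The plan is to invoke the characterization of the period already in hand: by Eq.~\eqref{eq:ynroot2:pe} and Lemma~\ref{lemma:xnToyne}, $T(x_0;a,b)$ is the smallest positive integer $m$ satisfying congruence~\eqref{eq:alpha:zpe}. I would split on whether $x_0\in\{\alpha,\beta\}$.

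For the first case, substitute $x_0=\alpha$ directly into Eq.~\eqref{eq:ynroot2:pe}: the coefficient $(\alpha-x_0)$ vanishes, leaving $y_n=\alpha^n$. Since $\alpha\beta=-a\in\Zx{e}$ forces $\alpha$ itself to be a unit in $\mathbb{A}_e$, Eq.~\eqref{eq:relation:xnyn} gives $x_n=y_{n+1}y_n^{-1}=\alpha$ for every $n$, so $T(\alpha;a,b)=1$. The argument for $x_0=\beta$ is symmetric.

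For the second case I would rewrite congruence~\eqref{eq:alpha:zpe} as
\[
(x_0-\alpha)(x_0-\beta)\bigl(\alpha^m-\beta^m\bigr)\equiv 0\pmod{p^e}.
\]
The key observation is that Eq.~\eqref{eq:albeta:no0} makes $\alpha-\beta$ a unit, hence $\alpha\not\equiv\beta\pmod p$. Using that $\mathbb{A}_e$ is a local ring in which $\logp$ is additive under multiplication, this rules out $k_\alpha\geq 1$ and $k_\beta\geq 1$ holding simultaneously, for otherwise $\alpha-\beta=(x_0-\beta)-(x_0-\alpha)$ would lie in $(p)$. Hence $\min(k_\alpha,k_\beta)=0$ and $k'=k_\alpha+k_\beta=\logp((x_0-\alpha)(x_0-\beta))$. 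Writing $(x_0-\alpha)(x_0-\beta)=p^{k'}u$ with $u$ a unit, the congruence reduces to $\alpha^m\equiv\beta^m\pmod{p^{e-k'}}$. Multiplying by $\beta^{-m}$ (valid since $\beta$ is a unit) converts it to $(\alpha\beta^{-1})^m\equiv 1\pmod{p^{e-k'}}$, whose smallest positive solution is $\ord(\alpha\beta^{-1})_{p^{e-k'}}$ by definition of multiplicative order.

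The main obstacle will be the bookkeeping when $\alpha,\beta$ live in the extension $\mathbb{A}_e\cong\GR(p^e,2)$ rather than in $\Z{e}$: I must verify that $\logp$ extends coherently to $\mathbb{A}_e$ and remains additive under multiplication, that $(x_0-\alpha)(x_0-\beta)=x_0^2-bx_0-a$ (a symmetric function of the roots) indeed lies in $\Z{e}$ with the claimed valuation $k'$, and that the cancellation from modulus $p^e$ down to $p^{e-k'}$ is legitimate in $\mathbb{A}_e$. Once these local-ring facts are in place, identifying the minimal $m$ with $\ord(\alpha\beta^{-1})_{p^{e-k'}}$ is immediate.
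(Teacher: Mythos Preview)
Your proposal is correct and follows essentially the same approach as the paper: both start from congruence~\eqref{eq:alpha:zpe}, handle $x_0\in\{\alpha,\beta\}$ by direct substitution into Eq.~\eqref{eq:ynroot2:pe}, and for $x_0\notin\{\alpha,\beta\}$ use Eq.~\eqref{eq:albeta:no0} to conclude that at most one of $x_0-\alpha$, $x_0-\beta$ lies in $(p)$, then cancel the $p^{k'}$ factor to reduce to $(\alpha\beta^{-1})^m\equiv 1\pmod{p^{e-k'}}$. The only cosmetic difference is that the paper splits the second case into the sub-cases $(x_0-\alpha)(x_0-\beta)\notin(p)$ and $(x_0-\alpha)(x_0-\beta)\in(p)$, whereas you treat them uniformly via $\min(k_\alpha,k_\beta)=0\Rightarrow k'=k_\alpha+k_\beta$; your explicit flagging of the extension-ring bookkeeping is a point the paper leaves implicit.
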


\begin{proof}
If $x_0\in\{\alpha, \beta\}$, one has $y_n=x_0^n\bmod p^e$ from Eq.~\eqref{eq:ynroot2:pe}.
Thus, $x_n=y_{n+1}\cdot y_n^{-1}= x_0$ for any $n\geq0$ and $T(x_0; a, b)=1$. 
If $x_0\not\in\{\alpha, \beta\}$, the proof of this lemma is divided into the following two cases:
\begin{itemize}
    \item $(x_0-\alpha)(x_0-\beta)\not\in(p)$: 
    In such case, $\max(k_{\alpha}, k_{\beta})=0$ and congruence~\eqref{eq:alpha:zpe} is equivalent to $(\alpha\beta^{-1})^m\equiv 1\pmod {p^{e}}$. 
    Thus, $\ord(\alpha\beta^{-1})_{p^{e}}$ is the smallest integer satisfying congruence~\eqref{eq:alpha:zpe} and $T(x_0; a,  b)=\ord(\alpha\beta^{-1})_{p^{e}}=\ord(\alpha\beta^{-1})_{p^{e-k'}}$. 
    
    \item $(x_0-\alpha)(x_0-\beta)\in(p)$: If $x_0-\alpha, x_0-\beta \in (p)$,
    one has $\alpha -\beta \in(p)$. It contradicts Eq.~\eqref{eq:albeta:no0}.
    So, such case includes two possible sub-cases:
    $x_0-\alpha \in(p)$ and $x_0-\beta \not\in(p)$; $x_0-\alpha \not\in(p)$ and $x_0-\beta \in(p)$.
    For the first sub-case, congruence~\eqref{eq:alpha:zpe} is equivalent to
    \[(\alpha\beta^{-1})^m\equiv 1\pmod {p^{e-k_\alpha}}.\]
    So, $\ord(\alpha\beta^{-1})_{p^{e-k_\alpha}}$ is the smallest integer $m$ satisfying the above congruence, which means $T(x_0; a, b)=\ord(\alpha\beta^{-1})_{p^{e-k_\alpha}}$. 
    The proof of the second sub-case is similar and omitted.
\end{itemize}
\end{proof}

From Eqs.~\eqref{eq:ab:alphabeta} and \eqref{eq:albeta:no0}, one has
$4a+b^2 =(\alpha-\beta)^2\not\equiv 0\pmod p$.
If $4a+b^2$ is a quadratic residue modulo $p^e$, $f(t)$ is reducible over $\Z{e}$ and $\alpha, \beta \in \Z{e}$. 
It follows from Lemma~\ref{lemma:L:an0b0} that there are 
more than two cases on the lengths of cycles in $\mathcal{G}(a, b, p, e)$ for any $e$.
As shown in Table~\ref{tab:ab=6_25_cycle}, the number of cycles of length $T_c$ remains unchanged with the increase of $e$. 
Furthermore, Proposition~\ref{pro:fzpe:unit} describes the structure of $\mathcal{G}(a, b, p, e)$ in such case.
If $4a+b^2$ is a non-quadratic residue modulo $p^e$, $f(t)$ is irreducible over $\Z{e}$ and $\alpha, \beta \not\in \Z{e}$, and there is no self-loop in $\mathcal{G}(a,b, p, e)$, as shown in Proposition~\ref{pro:abn0b0:fnot}. 

\begin{table}[!htb]
\centering
\caption{The number of cycles of length $T$, $N_{T}$, in $\mathcal{G}(a, b, p, e)$ with $(a, b)=(6, 25)$.}
\begin{tabular}{*{7}{c|}c}
\hline
\diagbox[width=6em]{$e$}{$N_{T}$}{$T$}
 &1	&2	&10	&$2\cdot5^2$ &$2\cdot5^3$ &$2\cdot5^4$	&$2\cdot5^5$ \\ \hline
1	&2	&1	&	&	&	&	&	\\ \hline
2	&2	&9	&	&	&	& &	\\ \hline
3	&2	&24	&5	&	&	&	&	\\ \hline
4	&2	&24	&20	&5	&	&	&	\\ \hline
5	&2	&24	&20	&20	&5	&	&	\\ \hline
6	&2	&24	&20	&20	&20	&5	&	\\ \hline
7	&2	&24	&20	&20	&20	&20	&5	\\ \hline
\end{tabular}
\label{tab:ab=6_25_cycle}
\end{table}

\begin{Proposition}
\label{pro:fzpe:unit}
When $a\in\Zx{e}$, $b\in(p)$, and $4a+b^2$ is a quadratic residue modulo $p^e$, $\mathcal{G}(a, b, p, e)$ is composed of
connected component $G(1, p^{e-1}-1)$, $\frac{(p-3)p^{k_b-1}}{2}$ cycles of length $2p^{e-k_b}$, $w$ cycles of length $2p^{e-k_b-k'}$, $v$ cycles of length two,
and two self-loops, 
where $k_b=\logp(b)$, $k'\in\{1,2,\cdots, e-k_b-1\}$, and
\begin{equation*}
(v, w)=
\begin{cases}
(p^{k_b}-1, (p-1)p^{k_b-1}) & \mbox{if } e\ge k_b+2; \\
(p^{e-1}-1, 0)              & \mbox{otherwise}.
\end{cases} 
\end{equation*}
\end{Proposition}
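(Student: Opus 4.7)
The plan is to partition $\Z{e}$ into three parts---the maximal ideal $(p)$, the two roots $\alpha,\beta\in\Z{e}$ (which are units because $\alpha\beta=-a\in\Zx{e}$), and the remaining units---and to determine the cycle structure of each via Lemma~\ref{lemma:L:an0b0}. For $x_0\in(p)$, the map sends $x_0$ directly to $b$, and $\phi(b)=b$ since $b\in(p)$; the $p^{e-1}$ elements of $(p)$ therefore assemble into exactly the digraph $G(1,p^{e-1}-1)$, with $b$ as its self-loop. On $\Zx{e}$, $\phi$ is a permutation~\cite{Niederreiter:Exmpe:AA2000}, so the remaining sequences are purely periodic, and the two nodes $\alpha,\beta$ contribute the two additional self-loops via the first case of Lemma~\ref{lemma:L:an0b0}.

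The key step is to turn $\ord(\alpha\beta^{-1})_{p^{e-k'}}$ into an explicit value. Since $\alpha+\beta=b$ with $\logp(b)=k_b$ and $\beta$ is a unit, $\alpha\beta^{-1}\equiv -1\pmod{p^{k_b}}$ sharply; Lemma~\ref{lemma:ordk_to_e} then gives two regimes: $\ord(\alpha\beta^{-1})_{p^{e-k'}}=2p^{e-k_b-k'}$ when $e-k'>k_b$, and $=2$ when $e-k'\le k_b$. Meanwhile, Eq.~\eqref{eq:albeta:no0} forces $\alpha-\beta$ to be a unit, which rules out having $k_\alpha\ge 1$ and $k_\beta\ge 1$ simultaneously. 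Hence, for each $k\ge 1$, the condition $k'=k$ means $x_0$ is congruent to exactly one of $\alpha,\beta$ modulo $p^k$ but not modulo $p^{k+1}$, contributing $2(p-1)p^{e-k-1}$ units to that class.

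With these ingredients the enumeration is direct. The $k'=0$ class contains the remaining $p^{e-1}(p-3)$ units, all of period $2p^{e-k_b}$, giving $(p-3)p^{k_b-1}/2$ cycles. For each $k'\in\{1,\dots,e-k_b-1\}$ (which is nonempty exactly when $e\ge k_b+2$), division yields $w=(p-1)p^{k_b-1}$ cycles of length $2p^{e-k_b-k'}$. The period-$2$ regime $k'\ge e-k_b$ absorbs $2(p-1)\sum_{j=0}^{k_b-1}p^{j}=2(p^{k_b}-1)$ units, giving $v=p^{k_b}-1$ cycles of length two; when $e=k_b+1$ this regime instead spans the whole range $k'\in\{1,\dots,e-1\}$, and the middle family vanishes, producing $v=p^{e-1}-1$ and $w=0$. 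The main technical obstacle is cleanly splitting the two regimes in the order formula and checking that the boundary $e=k_b+1$ collapses the middle family into the period-$2$ one as claimed; once that is in hand, everything else reduces to routine counting.
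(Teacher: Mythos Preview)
Your proposal is correct and follows essentially the same approach as the paper: partition $\Z{e}$ into $(p)$ (yielding $G(1,p^{e-1}-1)$), the two roots (two self-loops), and the remaining units, then combine Lemma~\ref{lemma:L:an0b0} with Lemma~\ref{lemma:ordk_to_e} applied to $\alpha\beta^{-1}\equiv -1\pmod{p^{k_b}}$ to split the units into the $k'=0$ class, the classes $1\le k'\le e-k_b-1$, and the period-$2$ tail $k'\ge e-k_b$, counting each exactly as the paper does. The only cosmetic difference is that the paper phrases the boundary case as $k_b\in\{e-1,e\}$ rather than $e=k_b+1$, thereby also covering $b=0$; you may want to note that your argument handles $k_b=e$ identically.
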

\begin{proof}
When $x_0\in (p)$, one has $S(x_0; a, b)=\{x_0, b, b, \cdots\}$ from $\var{IPRNG}$~\eqref{eq:inv:zpe}. So, all states whose values belong to $(p)$ compose connected component $G(1, p^{e-1}-1)$. 

When $x_0\not\in(p)$, the related analysis is divided into the following two cases:
\begin{itemize}
    \item $(x_0-\alpha)(x_0-\beta)\notin(p)$: Referring to Lemma~\ref{lemma:L:an0b0}, one has
    $T(x_0; a, b)=\ord(\alpha\beta^{-1})_{p^e}. $
    Since $b\in(p)$ and the definition of $k_b$, 
    $b=\alpha+\beta\equiv 0\pmod {p^{k_b}}$ from Eq.~\eqref{eq:ab:alphabeta}, which further deduces $\alpha\beta^{-1}+1\equiv 0 \pmod {p^{k_b}}$. 
    It yields from Lemma~\ref{lemma:ordk_to_e} that 
    $\ord(\alpha\beta^{-1})_{p^e}=2p^{e-k_b}.$
    So, $T(x_0; a, b)=2p^{e-k_b}.$
    From Eq.~\eqref{eq:GR:form}, one can get the $p$-adic representation of $\alpha, \beta, x_0$:
    \begin{equation}\label{eq:p-adic}
    \begin{cases}
    \alpha=\sum^{e-1}_{i=0}c_ip^i;\\
    \beta=\sum^{e-1}_{i=0}d_ip^i;\\
     x_0=\sum^{e-1}_{i=0}h_ip^i, 
    \end{cases}
    \end{equation}
    where $c_0, d_0, h_0\in \{1, 2, \cdots, p-1\}$ and $c_i, d_i, h_i\in \{0, 1, \cdots, p-1\}$ for any $i \in \{1, 2, \cdots, e-1\}$. 
    Since $x_0-\alpha, x_0-\beta, x_0\in\Zx{e}$, 
    one has $h_0\notin \{0, c_0, d_0\}$ from Eq.~\eqref{eq:p-adic}. 
    Thus, there are $(p-3)p^{e-1}$ initial states of value $x_0$ such that 
    $(x_0-\alpha)(x_0-\beta)\not\in(p)$, and these states compose $\frac{(p-3)p^{e-1}}{2p^{e-k_b}}=\frac{(p-3)p^{k_b-1}}{2}$ cycles of length $2p^{e-k_b}$ in $\mathcal{G}(a, b, p, e)$.
    
    \item 
    $(x_0-\alpha)(x_0-\beta)\in(p)$: One can assume $x_0-\alpha\in(p)$ and $x_0-\beta\not\in (p)$.
    The proof of the case $x_0-\alpha\not\in(p)$ and $x_0-\beta\in (p)$ is similar and omitted.
    When $x_0=\alpha$, one has $T(x_0; a, b)=1$ from Lemma~\ref{lemma:L:an0b0}. 
    It means the state of value $\alpha$ composes a self-loop in $\mathcal{G}(a, b, p, e)$. 
    When $x_0\neq\alpha$, from the definitions of $k_\alpha$ and $k_\beta$, one has $1\leq k_\alpha\leq e-1$ and $k_\beta=0$.
    It further deduces 
    $T(x_0; a, b)=\ord(\alpha\beta^{-1})_{p^{e-k_\alpha}}$
    from Lemma~\ref{lemma:L:an0b0}. 
    Since $b=\alpha+\beta\in(p)$, one has $\alpha\beta^{-1}+1 \equiv 0 \pmod {p^{k_b}}$ and 
    \[
    \alpha\beta^{-1} \equiv
    \begin{cases}
    -1\pmod {p^{e-k_\alpha}} & \mbox{if } e-k_\alpha\leq k_b \leq e;\\
    -1\pmod {p^{k_b}}  & \mbox{if } 1\leq k_b<e-k_\alpha. 
    \end{cases}
    \]
    According to Lemma~\ref{lemma:ordk_to_e} and the above congruence, one can get
    \begin{align}
    T(x_0; a, b) & =\ord(\alpha\beta^{-1})_{p^{e-k_\alpha}} \nonumber \\
                 & =                                  \label{eq:T:k_a}
    \begin{cases}
    2p^{e-k_\alpha-k_b} & \mbox{if } 1\le k_\alpha< e-k_b;\\
    2                   & \mbox{if } e-k_b\le k_\alpha < e.
    \end{cases}
    \end{align}
    If $k_b \in \{e-1, e\}$, only the second case of Eq.~\eqref{eq:T:k_a} holds, which means $T(x_0; a, b)=2$. 
    Referring to Eq.~\eqref{eq:p-adic}, one can get there are $p^{e-1}-1$ initial states of value $x_0$ satisfying
    $x_0-\alpha \in(p)$ and $x_0\neq\alpha$.
    Thus, all states of value $x_0$ satisfying $(x_0-\alpha)(x_0-\beta)\in(p)$ compose $p^{e-1}-1$ cycles of length two and two self-loops in $\mathcal{G}(a, b, p, e)$.
    
    If $k_b\leq e-2$, the two cases of Eq.~\eqref{eq:T:k_a} hold.
    From the definition of $k_\alpha$, one has $p^{k_\alpha}\mid (x_0-\alpha)$ and $p^{k_\alpha+1}\nmid (x_0-\alpha)$. It yields from Eq.~\eqref{eq:p-adic} that $h_i=c_i$ for any $0\le i\le k_\alpha-1$, $h_{k_\alpha}\neq c_{k_\alpha}$, and $1\leq h_i\leq p-1$ for any $k_\alpha+1\le i\le e-1$.
    So, there are $(p-1)p^{e-k_\alpha-1}$ initial states of value $x_0$ satisfying $\logp(x_0-\alpha)=k_\alpha$.
    Then, there are $(p-1)p^{e-k_\alpha-1}$ and 
    $\sum^{e-1}_{k_\alpha=e-k_b}(p-1)p^{e-k_\alpha-1}=(p-1)\frac{p^{k_b-1}(1-p^{-k_b})}{1-p^{-1}}=p^{k_b}-1$
    initial states of value $x_0$ that satisfy the first and second cases of Eq.~\eqref{eq:T:k_a}, respectively.
    Thus, one can know all states of value $x_0$ satisfying $(x_0-\alpha)(x_0-\beta)\in(p)$
    compose $2\cdot \frac{(p-1)p^{e-k_\alpha-1}}{2p^{e-k_b-k_\alpha}}=(p-1)p^{k_b-1}$ cycles of length $2p^{e-k_b-k'}$, 
    $p^{k_b}-1$ cycles of length two, and two self-loops in $\mathcal{G}(a, b, p, e)$. 
    \end{itemize}
\end{proof}

When $(a, b, p, e)=(6, 5, 5, 2)$, one has $k_b=1$ and $4a+b^2\equiv7^2\pmod{5^2}$.
As shown in Fig.~\ref{fig:cycle:p2}b), $\mathcal{G}(6, 5, 5, 2)$ is composed of connected component $G(1, 4)$, $\frac{5-3}{2}=1$ cycle of length $2\times5=10$, $5^{2-1}-1=4$ cycles of length two, and two self-loops. The result is consistent with Proposition~\ref{pro:fzpe:unit}.

\begin{Proposition}
\label{pro:abn0b0:fnot}
When $a\in\Zx{e}$, $b\in(p)$, and $4a+b^2$ is a non-quadratic residue modulo $p^e$, $\mathcal{G}(a, b, p, e)$ is composed of
connected component $G(1, p^{e-1}-1)$ and $\frac{(p-1)p^{k_b-1}}{2}$ cycles of length $2p^{e-k_b}$, where $k_b=\logp(b)$.
\end{Proposition}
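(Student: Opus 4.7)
The plan is to mimic the structure of the proof of Proposition~\ref{pro:fzpe:unit} but exploit the irreducibility of $f(t)$ over $\Z{e}$ to collapse several of its cases. First I would separate the state space into $(p)$ and $\Zx{e}$. For any $x_0\in(p)$ one immediately has $\phi(x_0)=b$; since $b\in(p)$ we further have $\phi(b)=b$, so the $p^{e-1}$ states in $(p)$ supply the self-loop at $b$ together with $p^{e-1}-1$ transient predecessors of length one, producing the connected component $G(1,p^{e-1}-1)$.

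Next, for $x_0\in\Zx{e}$ the crucial observation is that because $4a+b^2$ is a non-quadratic residue modulo $p^e$, the indirect characteristic polynomial $f(t)=t^2-bt-a$ is irreducible over $\Z{e}$, so by~\eqref{eq:ring:Ak} the roots $\alpha,\beta$ lie in $\GR(p^e,2)\setminus\Z{e}$. Consequently their reductions modulo $p$ are not in the prime field, which forces $x_0-\alpha\notin(p)$ and $x_0-\beta\notin(p)$ for every $x_0\in\Z{e}$. In particular $k_\alpha=k_\beta=0$ and $x_0\notin\{\alpha,\beta\}$, so Lemma~\ref{lemma:L:an0b0} gives $T(x_0;a,b)=\ord(\alpha\beta^{-1})_{p^e}$ uniformly over $x_0\in\Zx{e}$.

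To evaluate this order, I would use $b=\alpha+\beta\in(p)$ and $k_b=\logp(b)$ to get $\alpha\equiv-\beta\pmod{p^{k_b}}$, hence $\alpha\beta^{-1}\equiv-1\pmod{p^{k_b}}$. Applying Lemma~\ref{lemma:ordk_to_e} to the element $\alpha\beta^{-1}\in\GR^\times(p^e,2)$ then yields $\ord(\alpha\beta^{-1})_{p^e}=2p^{e-k_b}$, so every orbit in $\Zx{e}$ has length exactly $2p^{e-k_b}$. A simple count, using $|\Zx{e}|=(p-1)p^{e-1}$, produces $(p-1)p^{e-1}/(2p^{e-k_b})=(p-1)p^{k_b-1}/2$ disjoint cycles of that length, which combined with the $G(1,p^{e-1}-1)$ component exhausts all $p^e$ states and gives the claimed decomposition.

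The main technical hurdle is the step in the second paragraph: justifying that irreducibility of $f(t)$ over $\Z{e}$ actually implies $\alpha,\beta$ have nontrivial components in the Galois extension, so that $x_0-\alpha$ and $x_0-\beta$ are units in $\GR(p^e,2)$ for every $x_0\in\Z{e}$. This is where Lemma~\ref{lemma:ft:baisc} is needed, since the hypothesis ``$4a+b^2$ is a non-quadratic residue modulo $p^e$'' must be translated into $4a+b^2\not\equiv 0\pmod p$ to guarantee that $f(t)$ is in fact basic irreducible (not merely irreducible by failure of the squareness condition at higher levels); from there the isomorphism $\Z{e}[t]/(f(t))\cong\GR(p^e,2)$ places $\alpha,\beta$ outside both $\Z{e}$ and the maximal ideal $(p)$, so the difference arguments go through. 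Everything else in the proof is routine arithmetic in the Galois ring.
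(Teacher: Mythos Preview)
Your proposal is correct and follows essentially the same route as the paper's proof: partition into $(p)$ (yielding $G(1,p^{e-1}-1)$) and $\Zx{e}$, use Lemma~\ref{lemma:ft:baisc} to obtain basic irreducibility and place $\alpha,\beta$ in $\GR(p^e,2)\setminus\Z{e}$ so that $(x_0-\alpha)(x_0-\beta)$ is always a unit, then apply Lemmas~\ref{lemma:L:an0b0} and~\ref{lemma:ordk_to_e} and count. One small fix to your last paragraph: the condition $4a+b^2\not\equiv 0\pmod p$ does not follow from the non-quadratic-residue hypothesis alone (for instance $p$ is a non-residue modulo $p^2$ yet $p\equiv 0\pmod p$); it comes instead from $a\in\Zx{e}$ and $b\in(p)$ via~\eqref{eq:albeta:no0}, and that is exactly how the paper invokes it.
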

\begin{proof}
Similar to Proposition~\ref{pro:fzpe:unit}, one can know all states whose values belong to $(p)$ compose connected component $G(1, p^{e-1}-1)$. 

Combining the condition of this proposition and $4a+b^2 \not\equiv0\pmod p$, one has $f(t)$ is basic irreducible from Lemma~\ref{lemma:ft:baisc}. 
It means ring $\Z{e}[t]/(f(t))$ is isomorphic to $\GR(p^e, 2)$. 
So, 
it from Eq.~\eqref{eq:GR:form} that $\alpha$ and $\beta$ can be expressed as 
$\alpha=\sum^{e-1}_{i=0}c'_ip^i$ and $\beta=\sum^{e-1}_{i=0}d'_ip^i$, 
where $c'_i, d'_i\in\{0, 1, \xi, \cdots, \xi^{p^{2}-2}\}$ and $\xi$ is an element of order $p^2-1$ in $\GR(p^e, 2)$. 
For any $x_0\in\Zx{e}$, since $\alpha, \beta \notin\Zx{e}$, one has
$x_0\not\equiv c'_0 \text{~or~} d'_0 \pmod p$, which yields
$(x_0-\alpha)(x_0-\beta)\not\in(p)$. 
It yields from Lemma~\ref{lemma:L:an0b0} that $T(x_0; a, b)=\ord(\alpha\beta^{-1})_{p^e}=2p^{e-k_b}$. 
In addition, there are $(p-1)p^{e-1}$ initial states of value $x_0$ such that $x_0\in\Zx{e}$, which makes up $\frac{(p-1)p^{e-1}}{2p^{e-k_b}}$ cycles of length $2p^{e-k_b}$ in $\mathcal{G}(a, b, p, e)$.
\end{proof}

When $(a, b, p, e)=(7, 5, 5, 2)$, $k_b=1$ and $4a+b^2=53$ is a non-quadratic residue modulo $25$.
As shown in Fig.~\ref{fig:cycle:p2} c), $\mathcal{G}(7, 5, 5, 2)$ is composed of connected component $G(1, 4)$ and $\frac{(5-1)\times5}{2\times5}=2$ cycles of length $2\times5 =10$, which is consistent with Proposition~\ref{pro:abn0b0:fnot}.

\subsection{Graph structure of $\var{IPRNG}$~\eqref{eq:inv:zpe} with $a, b\in\Zx{e}$}
\label{subsec:abno:zpe}

When $a, b\in\Zx{e}$, $\var{IPRNG}$~\eqref{eq:inv:zpe} does not compose a permutation over $\Zx{e}$. 
Concretely, some states whose values belong to $\Zx{e}$ locate on a transient branch in $\mathcal{G}(a, b, p, e)$ (See Fig.~\ref{fig:cycle:p2}d).
By analyzing the pre-period and period of the sequence generated by $\var{IPRNG}$~\eqref{eq:inv:zpe} from such nodes, the structure of the corresponding connected components can be disclosed.
Equation~\eqref{eq:ynroot2:pe} plays an important role in the process of periodic analysis and is related to $(\alpha-\beta)^{-1}$.
When $a, b\in\Zx{e}$, $\alpha-\beta$ may not be a unit, which is different from the case in the previous subsection.
From Eq.~\eqref{eq:ab:alphabeta}, one has $4a+b^2=(\alpha-\beta)^2$. It deduces $\alpha-\beta$ is a unit if and only if $4a+b^2\not\equiv 0\pmod p$. 
The analysis of this subsection is divided according to the condition.

\subsubsection{$4a+b^2\not\equiv 0\pmod p$}

In such case, $\alpha-\beta$ must be a unit and Eq.~\eqref{eq:ynroot2:pe} still holds for any $x_0\in\Zx{e}$. 
The permutation nature of $\var{IPRNG}$~\eqref{eq:inv:zpe} ensures sequence $S(x_0; a, b)$ with $x_0\in\Zx{e}$ does not contain any elements in ideal $(p)$ when $a\in\Zx{e}$ and $b\in(p)$. 
However, when $a, b\in\Zx{e}$, sequence $S(x_0; a, b)$ may contain an element in ideal $(p)$. 
Referring to the condition, Lemma~\ref{lemma:abn0L:zpe} gives the explicit expression of the period of the sequence. 
\begin{lemma}\label{lemma:abn0L:zpe}
When $a, b\in\Zx{e}$ and $4a+b^2\not\equiv 0\pmod p$, the pre-period of sequence $S(x_0; a, b)$ is not zero if and only if $(\bar{x}_0-\bar{\alpha})(\bar{x}_0-\bar{\beta})^{-1}\in \bar{\Omega}.$
And its least period
\begin{equation}
\label{eq:L:an0bn0}
T(x_0; a, b)=
\begin{cases}
1        & \mbox{if } x_0\in\{\alpha, \beta\};\\
\ord(\bar{\alpha}\bar{\beta}^{-1})_p-1 & \mbox{if } (\bar{x}_0-\bar{\alpha})(\bar{x}_0-\bar{\beta})^{-1}\in \bar{\Omega};\\
\ord(\alpha\beta^{-1})_{p^{e-k'}} & \mbox{otherwise, }
\end{cases}
\end{equation}
where $x_0\in\Z{e}$, 
$k'=\max(k_{\alpha}, k_{\beta})$,
$k_{\alpha}=\logp(\alpha)$,
$k_{\beta}=\logp(\beta)$,
and 
\[\bar{\Omega}
=\{\bar{\alpha}\bar{\beta}^{-1}, (\bar{\alpha}\bar{\beta}^{-1})^2, \cdots, (\bar{\alpha}\bar{\beta}^{-1})^{\ord(\bar{\alpha}\bar{\beta}^{-1})_p-1}\}.
\]
\end{lemma}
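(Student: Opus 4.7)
My plan is to parallel the proof of Lemma~\ref{lemma:periodL:1root} for $\Z{}$, using Lemma~\ref{lemma:xnToyne} to translate between $\var{IPRNS}$ and $\var{SLRS}$ dynamics while accounting for the refined $p$-adic behaviour in $\Z{e}$. Since $4a+b^2=(\alpha-\beta)^2\not\equiv 0\pmod p$, the difference $\alpha-\beta$ is a unit in $\mathbb{A}_e$ and the closed form Eq.~\eqref{eq:ynroot2:pe} remains valid. I would first dispose of the trivial sub-case $x_0\in\{\alpha,\beta\}$ by substituting directly into Eq.~\eqref{eq:ynroot2:pe}: it collapses to $y_n\equiv x_0^n\pmod{p^e}$, so $x_n=y_{n+1}y_n^{-1}=x_0$ and $T(x_0;a,b)=1$.

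For $x_0\notin\{\alpha,\beta\}$, the crux is characterising when $y_n\in(p)$ for some $n$, since by Lemma~\ref{lemma:xnToyne}-ii) this is exactly when the pre-period is nonzero. Because $(\alpha-\beta)^{-1}$ is a unit, reducing Eq.~\eqref{eq:ynroot2:pe} modulo $p$ yields that $y_n\in(p)$ iff $(\bar{x}_0-\bar{\beta})\bar{\alpha}^n\equiv(\bar{x}_0-\bar{\alpha})\bar{\beta}^n\pmod p$. A short check (using that $\alpha,\beta$ are units and $\bar{\alpha}\neq\bar{\beta}$) rewrites this as $(\bar{x}_0-\bar{\alpha})(\bar{x}_0-\bar{\beta})^{-1}=(\bar{\alpha}\bar{\beta}^{-1})^n$ in the appropriate residue field, i.e., as $(\bar{x}_0-\bar{\alpha})(\bar{x}_0-\bar{\beta})^{-1}\in\bar{\Omega}$ (the value $1$ being excluded because $\bar{\alpha}\neq\bar{\beta}$). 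This settles the pre-period claim and separates the remaining two cases.

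When the condition holds, let $m$ be minimal with $y_m\in(p)$; then Lemma~\ref{lemma:xnToyne}-ii) gives $x_{m-1}\in(p)$ and $x_m=\phi(x_{m-1})=b$, so $T(x_0;a,b)=T(b;a,b)$. Specialising Eq.~\eqref{eq:ynroot2:pe} to $x_0=b=\alpha+\beta$ simplifies to $y_n=(\alpha-\beta)^{-1}(\alpha^{n+1}-\beta^{n+1})$, whose reduction vanishes iff $(\bar{\alpha}\bar{\beta}^{-1})^{n+1}=1$; the minimum is $n=\ord(\bar{\alpha}\bar{\beta}^{-1})_p-1$, giving the stated period.

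When the condition fails, $y_n$ is a unit for every $n$ and the sequence is purely periodic. The equation $x_m=x_0$ is equivalent to $y_{m+1}=x_0y_m$, and a direct computation from Eq.~\eqref{eq:ynroot2:pe} simplifies this to $(x_0-\alpha)(x_0-\beta)(\alpha^m-\beta^m)\equiv 0\pmod{p^e}$. Because $\alpha-\beta$ is a unit, $x_0-\alpha$ and $x_0-\beta$ cannot simultaneously lie in $(p)$, so $k_\alpha+k_\beta=\max(k_\alpha,k_\beta)=k'$, and the congruence reduces to $(\alpha\beta^{-1})^m\equiv 1\pmod{p^{e-k'}}$, whose least solution is $\ord(\alpha\beta^{-1})_{p^{e-k'}}$. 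The main obstacle will be bookkeeping across the two sub-cases where $4a+b^2$ is or is not a quadratic residue modulo $p^e$: in the latter case, $\alpha,\beta$ live in $\GR(p^e,2)$ rather than $\Z{e}$, and the orders and residues must be consistently read in the Galois extension.
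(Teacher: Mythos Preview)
Your proposal is correct and follows essentially the same route as the paper's own proof: reduce Eq.~\eqref{eq:ynroot2:pe} modulo $p$ to characterise when some $y_n\in(p)$ via the set $\bar{\Omega}$, handle the ultimately periodic case by specialising to $x_0=b$, and in the purely periodic case reduce $x_m=x_0$ to $(x_0-\alpha)(x_0-\beta)(\alpha^m-\beta^m)\equiv 0\pmod{p^e}$. The paper is in fact terser than you in the last step---it simply invokes the argument of Lemma~\ref{lemma:L:an0b0} rather than rewriting the congruence---and it does not separately flag the quadratic-residue/non-residue distinction within this lemma, deferring that bookkeeping to the subsequent Propositions~\ref{pro:2root:unit:Zpe} and~\ref{pro:2root:unit:noZpe}; but your explicit treatment of $k'=k_\alpha+k_\beta=\max(k_\alpha,k_\beta)$ and your caution about reading orders in $\GR(p^e,2)$ are sound and only make the argument clearer.
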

\begin{proof}
Since $\bar{y}_n=y_n\bmod p$, one has
\begin{equation}\label{eq:yn:w1w2}
\bar{y}_n=(\bar{\alpha}-\bar{\beta})^{-1}((\bar{x}_0-\bar{\beta})\bar{\alpha}^n+(\bar{\alpha}-\bar{x}_0)\bar{\beta}^n)\bmod p. 
\end{equation}
from Eq.~\eqref{eq:ynroot2:pe}. 
It yields from congruence~\eqref{eq:yn:w1w2} that
$\bar{y}_n=0$ is equivalent to
\begin{equation}\label{eq:w1w2}
(\bar{\alpha}\bar{\beta}^{-1})^n\equiv (\bar{x}_0-\bar{\alpha})(\bar{x}_0-\bar{\beta})^{-1}\pmod p. 
\end{equation}
If $(\bar{x}_0-\bar{\alpha})(\bar{x}_0-\bar{\beta})^{-1}\in \bar{\Omega}$, 
there exists an integer $s$ such that $\bar{y}_s=0$ from the above congruence, namely $y_s\in(p)$.
It yields from Lemma~\ref{lemma:xnToyne}-ii) that $x_{s-1}\in(p)$.
Hence, $T(x_0; a, b)=T(b; a, b)$. 
Setting $x_0=b$, from Eq.~\eqref{eq:ab:alphabeta}, one has 
\begin{equation}\label{eq:bar:x0}
\bar{x}_0=\bar{b}=\bar{\alpha}+\bar{\beta}. 
\end{equation}
It yields from Eq.~\eqref{eq:w1w2} that 
$\bar{y}_n=0$ if and only if
$(\bar{\alpha}\bar{\beta}^{-1})^{n+1}\equiv 1\pmod p.$
So, $m=\ord(\bar{\alpha}\bar{\beta}^{-1})_p-1$ is the smallest integer satisfying $\bar{y}_m=0$, namely $y_m\in(p)$.
Then, $x_{m-1} \in(p)$ from Lemma~\ref{lemma:xnToyne}.
Thus, sequence $S(x_0; a, b)$ is ultimately periodic and 
\[
T(x_0; a, b)=\ord(\bar{\alpha}\bar{\beta}^{-1})_p-1. 
\]
If $(\bar{x}_0-\bar{\alpha})(\bar{x}_0-\bar{\beta})^{-1}\notin \bar{\Omega}$ and $\bar{x}_0\in \{ \bar{\alpha}, \bar{\beta}\}$, one has $y_n\notin (p)$ for any integer $n$ from congruence~\eqref{eq:yn:w1w2}.
Then similar to Lemma~\ref{lemma:L:an0b0}, 
one can prove sequence $S(x_0; a, b)$ is periodic and 
\[
T(x_0; a, b)=
\begin{cases}
1                                 & \mbox{if } x_0\in\{\alpha, \beta\}; \\
\ord(\alpha\beta^{-1})_{p^{e-k'}} & \mbox{otherwise.}
\end{cases}
\]
\end{proof}

From Fig.~\ref{fig:cycle:p2}d), one can see that there exists connected component $G(2, 4)$ in $\mathcal{G}(6, 8, 5, 2)$.
Such structure illustrates there exist some nodes, whose values belong to $\Zx{e}$, located on a transient branch, namely, there are some initial states of value $x_0\in\Zx{e}$ satisfying sequence $S(x_0; a, b)$ is ultimately periodic.
Lemma~\ref{lemma:B:trasient} gives a set of such initial states of value.
Moreover, all nodes whose values belong to the set are leaf-nodes in the functional graph of $\var{IPRNG}$~\eqref{eq:inv:zpe}, as shown in Lemma~\ref{lemma:B:leafNode}.

\begin{lemma}
\label{lemma:B:trasient}
When $a, b\in\Zx{e}$ and $4a+b^2\not\equiv 0\pmod p$, 
the pre-period and least period of sequence $S(x_0; a, b)$
are both equal to $k-1$, and $x_{k-1}=b$,
where $k=\ord(\bar{\alpha}\bar{\beta}^{-1})_p$
and $x_0$ belongs to
\begin{equation}\label{set:B}
\mathbf{B}=\{x \mid x \equiv b \pmod p, x\neq b, x\in\Z{e}\}. 
\end{equation}
\end{lemma}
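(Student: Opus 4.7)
The plan is to cast $x_0\in\mathbf{B}$ as the second case of Lemma~\ref{lemma:abn0L:zpe} and then locate the first time the orbit lands on the cycle. Since $\bar{x}_0=\bar{b}=\bar{\alpha}+\bar{\beta}$ by Eq.~\eqref{eq:ab:alphabeta}, we get
\[
(\bar{x}_0-\bar{\alpha})(\bar{x}_0-\bar{\beta})^{-1}=\bar{\beta}\bar{\alpha}^{-1}=(\bar{\alpha}\bar{\beta}^{-1})^{k-1}\in\bar{\Omega},
\]
so Lemma~\ref{lemma:abn0L:zpe} immediately yields $T(x_0;a,b)=k-1$. To pin down $x_{k-1}$, I would reduce the closed form~\eqref{eq:ynroot2:pe} modulo $p$: $\bar{y}_n=0$ iff $(\bar{\alpha}\bar{\beta}^{-1})^n\equiv(\bar{\alpha}\bar{\beta}^{-1})^{k-1}\pmod{p}$, so the least such $n$ equals $k-1$. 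Lemma~\ref{lemma:xnToyne}-ii) then forces $x_{k-2}\in(p)$ to be the first zero divisor of $S(x_0;a,b)$, whence $x_{k-1}=\phi(x_{k-2})=b$.

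It remains to show the pre-period equals $k-1$ rather than something smaller. I would compare $\{x_n\}$ to the purely periodic sequence $\{x_n^{(b)}\}_{n\ge 0}$ of the same recurrence starting at $b$: applying the previous step with initial value $b$ exhibits a cycle $C=\{b,\phi(b),\ldots,\phi^{k-2}(b)\}$ of exactly $k-1$ distinct states. Both branches of~\eqref{eq:inv:zpe} commute with reduction mod $p$, and $\bar{x}_0=\bar{b}$, so a one-line induction gives $\bar{x}_n=\bar{x}_n^{(b)}$ for all $n\ge 0$. Because the $k-1$ projections $\bar{x}_0^{(b)},\ldots,\bar{x}_{k-2}^{(b)}$ are distinct, any collision $x_j=x_i^{(b)}$ with $0\le i,j\le k-2$ forces $i=j$. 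So it suffices to rule out $x_j=x_j^{(b)}$ for $0\le j\le k-2$. I would do this by induction: the base $x_0\ne b=x_0^{(b)}$ is the defining property of $\mathbf{B}$; for $j\le k-3$ both $x_j,x_j^{(b)}\in\Zx{e}$ and
\[
x_{j+1}-x_{j+1}^{(b)}=a\bigl(x_j^{-1}-(x_j^{(b)})^{-1}\bigr)=\frac{a\,(x_j^{(b)}-x_j)}{x_j\,x_j^{(b)}},
\]
a unit multiple of $x_j^{(b)}-x_j\ne 0$, hence nonzero. This carries the non-equality up through $j=k-2$, so $x_j\notin C$ for every $j<k-1$, while $x_{k-1}=b\in C$; the pre-period is exactly $k-1$.

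The subtle point is the very last step of the induction, going from $j=k-3$ to $j=k-2$, because $x_{k-2}$ and $x_{k-2}^{(b)}$ both fall into $(p)$ and invertibility is lost at that index. The resolution is that the displayed identity only requires the divisors $x_{k-3},x_{k-3}^{(b)}$ to be units, which they are; the numerator $a\,(x_{k-3}^{(b)}-x_{k-3})$ inherits its nonzeroness from the previous induction step. This last nonvanishing is precisely what prevents the two orbits from meeting one step before the common landing at $b$ and thereby shortcutting the pre-period.
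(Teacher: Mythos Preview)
Your argument is correct. The derivation of $T(x_0;a,b)=k-1$ and of $x_{k-1}=b$ mirrors the paper's route exactly: reduce $x_0$ mod $p$, land in the second case of Lemma~\ref{lemma:abn0L:zpe}, then use congruence~\eqref{eq:ynroot2:pe} modulo $p$ together with Lemma~\ref{lemma:xnToyne}-ii) to identify the first index at which the orbit drops into $(p)$.

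Where you depart from the paper is in pinning down the pre-period. The paper simply records that $S(b;a,b)$ is purely periodic of period $k-1$ and concludes, without further comment, that the pre-period of $S(x_0;a,b)$ is $k-1$; it does not explain why the orbit cannot hit the cycle $C=\{b,\phi(b),\ldots,\phi^{k-2}(b)\}$ at some index $j<k-1$. Your step-by-step comparison with $\{x_n^{(b)}\}$ fills precisely this gap: the identity $x_{j+1}-x_{j+1}^{(b)}=a\,(x_j^{(b)}-x_j)/(x_jx_j^{(b)})$ propagates the nonzero difference through the unit region and, crucially, across the last step from $j=k-3$ to $j=k-2$, where only the \emph{inputs} need to be units. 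The one ingredient you invoke without a pointer is the pairwise distinctness of the $k-1$ reductions $\bar{x}_0^{(b)},\ldots,\bar{x}_{k-2}^{(b)}$; this is exactly Lemma~\ref{lemma:periodL:1root} applied to $\bar{x}_0=\bar{b}$ over $\mathbb{Z}_p$, and is worth one citing sentence. With that in place, your argument is a genuine strengthening of the paper's terse conclusion.
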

\begin{proof}
For any $x_0\in \mathbf{B}$, one has $\bar{x}_0=\bar{\alpha}+\bar{\beta}$ from Eq. \eqref{eq:bar:x0}.
So, 
\[
(\bar{x}_0-\bar{\alpha})(\bar{x}_0-\bar{\beta})^{-1}=\bar{\beta}\bar{\alpha}^{-1}=(\bar{\alpha}\bar{\beta}^{-1})^{-1}\in \bar{\Omega}.
\]
It yields from Lemma~\ref{lemma:abn0L:zpe} that sequence $S(x_0; a, b)$ is ultimately periodic and its period is $k-1$.
According to congruence~\eqref{eq:w1w2} and the above relation, one can know $\bar{y}_n=0$ is equivalent to
\begin{equation}
(\bar{\alpha}\bar{\beta}^{-1})^{n+1}\equiv 1\pmod p.
\end{equation}
So, $k-1$ is the smallest integer satisfying $\bar{y}_{k-1}=0$, namely $y_{k-1}\in(p)$.
Referring to Lemma~\ref{lemma:xnToyne}-ii), one has $x_{k-2} \in(p)$.
It means $x_{k-1}=\phi^{k-1}(x_0)=b$. 
From Lemma~\ref{lemma:abn0L:zpe}, one has $S(b; a, b)$ is periodic and its period is $k-1$. 
Thus, one can get the pre-period of sequence $S(x_0; a, b)$ is $k-1$.
\end{proof}
 
\begin{lemma}
\label{lemma:B:leafNode}
When $a, b\in\Zx{e}$, the in-degree of any node in set~\eqref{set:B} in $\mathcal{G}(a, b, p, e)$ is zero. 
\end{lemma}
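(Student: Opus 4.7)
The plan is to show that for any $y \in \mathbf{B}$, the equation $\phi(x) = y$ has no solution $x \in \Z{e}$, by splitting on whether $x$ is a unit or a zero-divisor.

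First, I would handle the non-unit case. If $x \notin \Zx{e}$, then by the definition of $\var{IPRNG}$~\eqref{eq:inv:zpe} we have $\phi(x) = b$; but every element $y \in \mathbf{B}$ satisfies $y \neq b$, so this case is immediately impossible.

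Next, I would handle the unit case. Suppose $x \in \Zx{e}$ and $\phi(x) = y$. Then $ax^{-1} + b \equiv y \pmod{p^e}$, so $ax^{-1} \equiv y - b \pmod{p^e}$. The key observation is that $y \in \mathbf{B}$ forces $y \equiv b \pmod p$, hence $y - b \in (p)$. On the other hand, $a \in \Zx{e}$ and $x^{-1} \in \Zx{e}$, so $ax^{-1}$ is a unit and therefore $ax^{-1} \notin (p)$. This contradicts the congruence above.

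Combining the two cases, no node in $\Z{e}$ maps to $y$ under $\phi$, so every node in $\mathbf{B}$ has in-degree zero. There is no real obstacle here; the argument is a short two-case check leveraging the unit/non-unit dichotomy in $\Z{e}$.
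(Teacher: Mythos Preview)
Your proof is correct and matches the paper's approach: both arguments split on whether a hypothetical preimage $x$ is a unit, using that $y\equiv b\pmod p$ forces $y-b\in(p)$ while $ax^{-1}$ is a unit, and that $x\in(p)$ forces $\phi(x)=b\neq y$. The paper phrases this as a single proof by contradiction rather than an explicit two-case split, but the content is identical.
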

\begin{proof}
For any node of value $y\in \mathbf{B}$, assume its in-degree is not zero. 
Then there exists a node of value $x\in \Z{e}$ such that
\[
y=\phi(x)=
\begin{cases}
(ax^{-1}+b)\bmod {p^e} & \mbox{if } x\in\Zx{e};\\
b& \mbox{if } x \in (p). 
\end{cases}
\]
From set~\eqref{set:B} and $a\in\Zx{e}$, one has $x\in (p)$. 
Thus $\phi(x)=b$, 
it contradicts $y\neq b$. 
So, the lemma is proved. 
\end{proof}

\begin{Proposition}
\label{pro:2root:unit:Zpe}
When $a, b\in\Zx{e}$ and $4a+b^2$ is a quadratic residue modulo $p^e$, $\mathcal{G}(a, b, p, e)$ is composed of connected component $G(k-1, p^{e-1}-1)$, 
$\frac{(p-1-k)p^{e-1}}{\ord(\alpha\beta^{-1})_{p^e}}$
cycles of length $\ord(\alpha\beta^{-1})_{p^e}$, 
$\frac{2(p-1)p^{e-k'-1}}{\ord(\alpha\beta^{-1})_{p^{e-k'}}}$ cycles of length $\ord(\alpha\beta^{-1})_{p^{e-k'}}$, and two self-loops, where $k=\ord(\bar{\alpha}\bar{\beta}^{-1})_p$ and $k'\in \{1, 2, \cdots, e-1\}$.
\end{Proposition}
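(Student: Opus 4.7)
The plan is to partition $\Z{e}$ according to the three period classes of Lemma~\ref{lemma:abn0L:zpe}, identify the connected component containing $b$ as $G(k-1,p^{e-1}-1)$, and count the pure cycles contributed by the remaining residues. Because $4a+b^2$ is a quadratic residue modulo $p^e$, the roots satisfy $\alpha,\beta\in\Z{e}$, so the fixed-point case $x_0\in\{\alpha,\beta\}$ supplies the two required self-loops immediately.

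For the large component, I first extract the cycle by applying Lemma~\ref{lemma:abn0L:zpe} at $x_0=b$: one checks $(\bar{b}-\bar{\alpha})(\bar{b}-\bar{\beta})^{-1}=\bar{\beta}\bar{\alpha}^{-1}=(\bar{\alpha}\bar{\beta}^{-1})^{k-1}\in\bar{\Omega}$, so $T(b;a,b)=k-1$ and the cycle is $b,\phi(b),\ldots,\phi^{k-2}(b)$. To show $b$ is the sole entry point from outside the cycle, I would observe that any cycle node $y\neq b$ has $y-b\in\Zx{e}$, hence a unique $\Zx{e}$-preimage $a(y-b)^{-1}$ (which must coincide with the previous cycle node), while every element of $(p)$ maps directly to $b$. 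Lemmas~\ref{lemma:B:trasient} and~\ref{lemma:B:leafNode} then identify the $p^{e-1}-1$ leaves of the transient portion with $\mathbf{B}$, each at pre-period exactly $k-1$ from $b$. A M\"obius count on the map $\bar{x}_0\mapsto(\bar{x}_0-\bar{\alpha})(\bar{x}_0-\bar{\beta})^{-1}$ shows that exactly $k-1$ residues in $\mathbb{F}_p$ land in $\bar{\Omega}$, so the component contains $(k-1)p^{e-1}$ states, and together with the leaf and entry information this forces the shape $G(k-1,p^{e-1}-1)$.

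For the residues outside this component, Lemma~\ref{lemma:abn0L:zpe} supplies the periods. The $p-(k-1)-2=p-k-1$ residues with $\bar{x}_0\notin\{\bar{\alpha},\bar{\beta}\}$ and M\"obius image outside $\bar{\Omega}$ give $k'=0$ and period $\ord(\alpha\beta^{-1})_{p^e}$, producing $\frac{(p-k-1)p^{e-1}}{\ord(\alpha\beta^{-1})_{p^e}}$ cycles. For $\bar{x}_0=\bar{\alpha}$ with $x_0\neq\alpha$, the invariant $k'=\logp(x_0-\alpha)$ ranges over $\{1,\ldots,e-1\}$, with each level carrying $(p-1)p^{e-k'-1}$ lifts; the case $\bar{x}_0=\bar{\beta}$ contributes symmetrically, so pooling the two branches gives $\frac{2(p-1)p^{e-k'-1}}{\ord(\alpha\beta^{-1})_{p^{e-k'}}}$ cycles of length $\ord(\alpha\beta^{-1})_{p^{e-k'}}$ at each $k'$. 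A final accounting $2+(p-2)p^{e-1}+2(p^{e-1}-1)=p^e$ confirms that no state has been missed.

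The main obstacle I anticipate is showing that the off-cycle portion of the large component really is a disjoint union of $p^{e-1}-1$ chains rather than a branching tree. I plan to resolve this by a node-by-node in-degree analysis: an off-cycle $y\in\Zx{e}$ admits a $(p)$-preimage only when $y=b$, and has at most one $\Zx{e}$-preimage given by $a(y-b)^{-1}$ (which exists iff $y-b$ is a unit); by Lemma~\ref{lemma:B:leafNode} the chain terminates exactly when it reaches $\mathbf{B}$. Walking this observation backward from $b$ through $k-1$ layers produces precisely $p^{e-1}-1$ disjoint chains of length $k-1$, matching the cardinality derived above.
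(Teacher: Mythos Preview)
Your proposal is correct and follows essentially the same three-case partition as the paper, invoking Lemmas~\ref{lemma:abn0L:zpe}, \ref{lemma:B:trasient}, and \ref{lemma:B:leafNode} in the same places and arriving at identical counts. The only notable divergence is in establishing that the transient part of the $G(k-1,p^{e-1}-1)$ component consists of disjoint chains: you plan an explicit in-degree analysis tracing preimages back from $b$, whereas the paper argues by cardinality---since $\bigcup_{b_i\in\mathbf{B}}S(b_i;a,b)$ has at most $(k-1)p^{e-1}$ elements and there are exactly $(k-1)p^{e-1}$ states in the $\bar{\Omega}$ class, the chains are forced to be pairwise disjoint outside the cycle. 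Your in-degree argument is slightly more constructive; the paper's count is quicker once the class sizes are in hand.
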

\begin{proof}
From the known condition of this proposition, one has 
the two roots $\alpha, \beta\in\Zx{e}$. 
Depending on the relation between the initial state of value $x_0$ and the two roots, the proof of this proposition is divided into the 
following three cases:
\begin{itemize}
    \item $\bar{x}_0\in \{\bar{\alpha}, \bar{\beta}\}$: One has 
    $(x_0-\alpha)(x_0-\beta)\in(p)$. 
    Assume $x_0-\alpha\in(p)$ and $x_0-\beta\in\Zx{e}$. 
    When $x_0=\alpha$, $T(x_0; a, b)=1$ from Lemma~\ref{lemma:abn0L:zpe}.
    It means the node of value $\alpha$ is a self-loop in $\mathcal{G}(a, b, p, e)$. 
    When $x_0\neq \alpha$, it follows from Lemma~\ref{lemma:abn0L:zpe} that sequence $S(x_0; a, b)$ is periodic and 
    \[
    T(x_0; a, b)=\ord(\alpha\beta^{-1})_{p^{e-k_\alpha}}.
    \]
    From Eq.~\eqref{eq:p-adic}, one can know there are $(p-1)p^{e-k_\alpha-1}$ initial states of value $x_0$ satisfying $\logp(x_0)=k_\alpha$. 
    In addition, the case $x_0-\alpha\in\Zx{e}$ and $x_0-\beta\in(p)$ is similar and omitted. 
    Thus, all states of value $x_0$ satisfying $\bar{x}_0\in \{ \bar{\alpha}, \bar{\beta}\}$ make up $\frac{2(p-1)p^{e-k'-1}}{\ord(\alpha\beta^{-1})_{p^{e-k'}}}$ cycles of length $\ord(\alpha\beta^{-1})_{p^{e-k'}}$ and two self-loops in $\mathcal{G}(a, b, p, e)$. 
    
    \item $(\bar{x}_0-\bar{\alpha})(\bar{x}_0-\bar{\beta})^{-1}\notin \bar{\Omega}$: 
    It follows from Lemma~\ref{lemma:abn0L:zpe} that
    $S(x_0; a, b)$ is periodic and
    \[
    T(x_0; a, b)=\ord(\alpha\beta^{-1})_{p^e}. 
    \]
    Referring to $\bar{x}_0\notin\{\bar{\alpha}, \bar{\beta}\}$ and the definition of $\bar{\Omega}$, one can know there are $(p-2-(k-1))p^{e-1}$ initial states of value $x_0$ satisfying $(\bar{x}_0-\bar{\alpha})(\bar{x}_0-\bar{\beta})^{-1}\notin\bar{\Omega}$. 
    It means all states $x_0$ satisfying $(\bar{x}_0-\bar{\alpha})(\bar{x}_0-\bar{\beta})^{-1}\notin \bar{\Omega}$ make up
    $\frac{(p-1-k)p^{e-1}}{\ord(\alpha\beta^{-1})_{p^e}}$
    cycles of length $\ord(\alpha\beta^{-1})_{p^e}$ in $\mathcal{G}(a, b, p, e)$. 
    
    \item $(\bar{x}_0-\bar{\alpha})(\bar{x}_0-\bar{\beta})^{-1}\in\bar{\Omega}$: It follows from Lemma~\ref{lemma:abn0L:zpe} that sequence $S(x_0; a, b)$ is ultimately periodic and $T(x_0; a, b)=k-1$. 
    Let $\mathbf{B}=\{b_i\}^{|\mathbf{B}|}_{i=1}$. 
    It follows from Lemma~\ref{lemma:B:trasient} that
    \begin{multline*}
    S(b_i; a, b)=\{b_i, \phi(b_i), \phi^2(b_i), \cdots, \phi^{k-2}(b_i), \\
    b, \phi(b), \phi^2(b), \cdots, \phi^{k-2}(b), b, \phi(b), \cdots \}.
    \end{multline*}
    So, one can get 
    \begin{equation}\label{eq:joint:Sx0}
    \bigcap\limits_{b_i\in \mathbf{B}}S(b_i; a, b)=\{ b, 
    \phi(b), \phi^2(b), \cdots, \phi^{k-2}(b)\}.
    \end{equation}
    From set~\eqref{set:B}, one can know $|\mathbf{B}|=p^{e-1}-1$, which further deduces $|\bigcup_{b_i\in B}S(b_i; a, b)|= (k-1)p^{e-1}$.
    From Lemma~\ref{lemma:B:leafNode} and set~\eqref{eq:joint:Sx0}, one can know all states belonging to $\bigcup_{b_i\in B}S(b_i; a, b)$ compose a unilateral connected digraph $G(k-1, p^{e-1}-1)$.
    From the definition of $\bar{\Omega}$ and Eq.~\eqref{eq:p-adic}, one can know there are $(k-1)p^{e-1}$ initial states of value $x_0$ satisfying $(\bar{x}_0-\bar{\alpha})(\bar{x}_0-\bar{\beta})^{-1}\in\bar{\Omega}$. 
    Thus, these states compose connected component $G(k-1, p^{e-1}-1)$ in $\mathcal{G}(a, b, p, e)$. 
\end{itemize}
\end{proof}

When $(a, b, p, e)=(8, 8, 5, 2)$, one can calculate $4a+b^2\equiv 14^2 \pmod {p^2}$, and $\alpha=22$, $\beta=11$, $\beta^{-1}=16$. 
So, $k=\ord(22\times 16)_{5}=4$.
As shown in Fig.~\ref{fig:cycle:p2}e), $\mathcal{G}(8, 8, 5, 2)$ is composed of connected component $G(3, 4)$, $\frac{2\times(5-1)}{4}=2$ cycles of length four
and two self-loops, which is consistent with Proposition~\ref{pro:2root:unit:Zpe}.

\begin{Proposition}
\label{pro:2root:unit:noZpe}
When $a, b\in\Zx{e}$ and $4a+b^2$ is a non-quadratic residue modulo $p^e$, $\mathcal{G}(a, b, p, e)$ is composed of connected component $G(k-1, p^{e-1}-1)$, 
$\frac{(p-k+1)p^{e-1}}{\ord(\alpha\beta^{-1})_{p^e}}$
cycles of length $\ord(\alpha\beta^{-1})_{p^e}$, where $k=\ord(\bar{\alpha}\bar{\beta}^{-1})_p$. 
\end{Proposition}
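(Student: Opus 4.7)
The plan is to mimic the proof of Proposition~\ref{pro:2root:unit:Zpe}, exploiting the stronger structural constraint that $\alpha,\beta$ now lie in $\GR(p^e,2)\setminus\Z{e}$. The irreducibility hypothesis on $4a+b^2$ modulo $p^e$ forces $4a+b^2\not\equiv 0\pmod p$, so by Lemma~\ref{lemma:ft:baisc} the indirect characteristic polynomial $f(t)=t^2-bt-a$ is basic irreducible in $\Z{e}[t]$. Consequently $\bar{\alpha},\bar{\beta}\in\F{2}\setminus\F{}$, and for every $x_0\in\Z{e}$ neither $\bar{x}_0=\bar{\alpha}$ nor $\bar{x}_0=\bar{\beta}$ can occur. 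In the terminology of Lemma~\ref{lemma:abn0L:zpe} this means $k_\alpha=k_\beta=0$, so $k'=0$ always, and the subcases producing self-loops or cycles of length $\ord(\alpha\beta^{-1})_{p^{e-k'}}$ with $k'>0$ are vacuous.

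With that simplification, I would split $\Z{e}$ according to the trichotomy of Lemma~\ref{lemma:abn0L:zpe}, obtaining only two nontrivial classes. In the class where $(\bar{x}_0-\bar{\alpha})(\bar{x}_0-\bar{\beta})^{-1}\in\bar{\Omega}$, sequences are ultimately periodic of least period $k-1$; in the complementary class, sequences are purely periodic of least period $\ord(\alpha\beta^{-1})_{p^e}$. The M\"obius transformation $z\mapsto(\bar{\beta}z-\bar{\alpha})/(z-1)$ is a bijection sending $\bar{\Omega}$ back into $\F{}$: under the nontrivial Galois automorphism $\sigma$ of $\F{2}/\F{}$ one has $\sigma(\bar{\alpha})=\bar{\beta}$ and $\sigma(z)=z^{-1}$ for $z\in\bar{\Omega}$, and a direct computation shows the preimage is $\sigma$-fixed, hence lies in $\F{}$. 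Thus there are exactly $k-1$ residues $\bar{x}_0\in\F{}$ satisfying the $\bar{\Omega}$-membership condition, which lift to $(k-1)p^{e-1}$ initial states in $\Z{e}$, while the remaining $p^e-(k-1)p^{e-1}=(p-k+1)p^{e-1}$ states fall into the purely periodic class.

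For the ultimately periodic class, I would invoke Lemmas~\ref{lemma:B:trasient} and~\ref{lemma:B:leafNode} essentially verbatim as in case~3 of Proposition~\ref{pro:2root:unit:Zpe}: the set $\mathbf{B}$, of cardinality $p^{e-1}-1$, supplies the leaf nodes of transient branches of length $k-1$, every such branch merges into the orbit of $b$, and the orbit of $b$ itself is a cycle of length $k-1$. Since Lemma~\ref{lemma:B:leafNode} rules out any other in-edges into $\mathbf{B}$, these $(k-1)p^{e-1}$ states assemble into a single unilateral connected digraph $G(k-1,p^{e-1}-1)$. For the purely periodic class, each orbit has length $\ord(\alpha\beta^{-1})_{p^e}$, so counting yields $\frac{(p-k+1)p^{e-1}}{\ord(\alpha\beta^{-1})_{p^e}}$ cycles of that length, completing the enumeration.

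The main obstacle is justifying the count $(k-1)p^{e-1}$ of ultimately periodic initial states; this rests on verifying that the M\"obius preimage of $\bar{\Omega}$ under $\bar{x}_0\mapsto(\bar{x}_0-\bar{\alpha})(\bar{x}_0-\bar{\beta})^{-1}$ lies inside $\F{}$ rather than in $\F{2}\setminus\F{}$, which is the only place where the non-quadratic residue hypothesis interacts nontrivially with the combinatorial counting. Everything else reduces to invoking the already established lemmas, so the proof should be noticeably shorter than that of Proposition~\ref{pro:2root:unit:Zpe}, owing to the absence of self-loops and of the second-order cycles indexed by $k'>0$.
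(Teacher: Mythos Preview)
Your proposal is correct and follows essentially the same route as the paper: both observe that basic irreducibility of $f(t)$ forces $\bar{x}_0\neq\bar{\alpha},\bar{\beta}$ for all $x_0\in\Z{e}$, eliminating the self-loops and the $k'>0$ cycle families, and then split the remaining states by $\bar{\Omega}$-membership, invoking Lemmas~\ref{lemma:B:trasient} and~\ref{lemma:B:leafNode} for the transient component exactly as in the third case of Proposition~\ref{pro:2root:unit:Zpe}. Your Galois-conjugation argument that the M\"obius preimage of $\bar{\Omega}$ lands in $\F{}$ is more explicit than the paper's treatment, which simply asserts the count $(p-(k-1))p^{e-1}$ by analogy with the earlier cases.
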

\begin{proof}
Similar to Proposition~\ref{pro:abn0b0:fnot}, one can know $\alpha,\beta$ belong to Galois ring $\GR(p^e, 2)$ not $\Zx{e}$. 
Thus, $\bar{x}_0-\bar{\alpha}$ and $\bar{x}_0-\bar{\beta}$ are units for any $x_0\in\Zx{e}$, which means $(x_0-\alpha)(x_0-\beta)\not\in(p). $
Depending on whether $(\bar{x}_0-\bar{\alpha})(\bar{x}_0-\bar{\beta})^{-1}\in\bar{\Omega}$ exists, 
the proof is divided into the following two cases:
\begin{itemize}
    \item $(\bar{x}_0-\bar{\alpha})(\bar{x}_0-\bar{\beta})^{-1}\notin\bar{\Omega}$: It follows that $T(x_0; a, b)=\ord(\alpha\beta^{-1})_{p^e}.$ 
    From the definition of $\bar{\Omega}$ and Eq.~\eqref{eq:p-adic}, one can know there are $(p-(k-1))p^{e-1}$ initial states of value $x_0$ satisfying $(\bar{x}_0-\bar{\alpha})(\bar{x}_0-\bar{\beta})^{-1}\notin\bar{\Omega}$. 
    Thus, these states compose $\frac{(p-k+1)p^{e-1}}{\ord(\alpha\beta^{-1})_{p^e}}$
    cycles of length $\ord(\alpha\beta^{-1})_{p^e}$ in $\mathcal{G}(a, b, p, e)$.
    
    \item $(\bar{x}_0-\bar{\alpha})(\bar{x}_0-\bar{\beta})^{-1}\in\bar{\Omega}$: Similar to Proposition~\ref{pro:2root:unit:Zpe}, one has
    $T(x_0; a, b)=\ord(\bar{\alpha}\bar{\beta}^{-1})_p-1$ and
    all states of value $x_0$ satisfying $(\bar{x}_0-\bar{\alpha})(\bar{x}_0-\bar{\beta})^{-1}\in\bar{\Omega}$ compose connected component $G(k-1, p^{e-1}-1)$ in $\mathcal{G}(a, b, p, e)$. 
    \end{itemize}
\end{proof}

Note that when $4a+b^2$ is a non-quadratic residue modulo $p^e$, similar to Eq.~\eqref{eq:ord:gt},
one can get
$\ord(\alpha\beta^{-1})_{p^e}=\ord(g(t))_{p^e}$,
$\ord(\bar{\alpha}\bar{\beta}^{-1})_p=\ord(\hat{g}(t))_p$,  
where $g(t)=t^2 +((a^{-1}b^2+2)\bmod p^e)t+1\in\mathbb{Z}_{p^e}[t]$ and $\hat{g}(t)=t^2+((a^{-1}b^2+2)\bmod p)t+1\in\mathbb{Z}_p[t]$.

When $(a, b, p, e)=(6, 8, 5, 2)$, $4a+b^2=88$ is a non-quadratic residue modulo $25$, 
$g(t)=t^2+21t+1\in\mathbb{Z}_{5^2}[t]$.
Assume $\epsilon$ is a root of $g(t)$, one has $\epsilon^2\equiv 4\epsilon-1 \pmod {5^2}$. Enumerating the power of $\epsilon$, one can get $\epsilon^{15}\equiv 1\pmod {5^2}$.
It means $\ord(\epsilon)_{5^2}=\ord(g(t))_{5^2}=\ord(\alpha\beta^{-1})_{5^2}=15$.
Similarly, one can calculate $\hat{g}(t)=t^2+t+1$ and $k=\ord(\bar{\alpha}\bar{\beta}^{-1})_5=\ord(\hat{g}(t))_5=3$.
As shown in Fig.~\ref{fig:cycle:p2}d), $\mathcal{G}(6, 8, 5, 2)$ is composed of connected component $G(2, 4)$ and $\frac{(5-3+1)\times5}{15}=1$ cycles of length $15$, which is consistent with Proposition~\ref{pro:2root:unit:noZpe}.

\subsubsection{$4a+b^2\equiv0\pmod p$} 
\label{subsec:abn0:zero}

In such case, $\alpha$ and $\beta$ belong to $\Z{e}$ if $f(t)$ is reducible in $\Z{e}[t]$; extension ring $\Z{e}[t]/(f(t))$ otherwise. 
Referring to $4a+b^2\equiv0\pmod p$ and Lemma~\ref{lemma:ft:baisc}, one can know $f(t)$ is not a basic irreducible polynomial. 
It means $\Z{e}[t]/(f(t))$ is not a Galois ring, and its element can be expressed as $q_0+q_1\beta$ where $q_0, q_1\in\Z{e}$~\cite{Chen:cat:TIT2012}. 
From $4a+b^2\equiv0\pmod p$, one has
$\hat{f}(t)=t^2-\bar{b}t-\bar{a}=
(t-\bar{\alpha})(t-\bar{\beta})=(t-\omega)^2$.
So, $\bar{\alpha}=\bar{\beta}=\omega$ and
\begin{equation}
\label{eq:barab:omega}
\bar{a}=-\omega^2 \text{~and~} \bar{b}=2\omega.
\end{equation}
Setting $\alpha=q_0+q_1 \beta$, it yields from $\bar{\alpha}=\bar{\beta}=\omega$ that
$q_0=p\mu\text{~and~}q_1=1+p\nu, $
where $\mu, \nu\in \Z{e}[t]/(f(t))$. 
Then $\alpha=p\mu +(1+p\nu)\beta =\beta+px$, 
where $x\in \Z{e}[t]/(f(t))$.
Then, one can calculate $\alpha^n=\beta^n+\sum_{i=1}^{n}\binom{n}{i}(px)^i\beta^{n-i}$
and
\begin{equation}\label{eq:alphabeta:p}
\begin{aligned}
\sum_{i=0}^{n-1} \alpha^{n-1-i}\beta^i
        & =\frac{\alpha^n-\beta^n}{\alpha-\beta} \\
        & =\frac{\sum_{i=1}^{n}\binom{n}{i}(px)^{i}\beta^{n-i}}{px}\\
        & =\sum_{i=1}^{n}\binom{n}{i}(px)^{i-1}\beta^{n-i}. 
\end{aligned}
\end{equation}
Lemma~\ref{lemma:pj:m} describes $p^j\mid n$ if the above equation is congruent to zero modulo $p^s$, where $s$ is a positive integer.
Referring to Lemma~\ref{lemma:pj:m}, the pre-period and period of sequence $S(x_0; a, b)$ for any $x_0\in \Z{e}$ is shown in Lemma~\ref{lemma:delta0:T}.

\begin{lemma}\label{lemma:pj:m}
For any integer $n$, if congruence $\sum_{i=1}^{n}\binom{n}{i}(px)^{i-1}\beta^{n-i}\equiv 0\pmod {p^s}$ holds, one has $p^j\mid n$, 
where $s\ge1$ and $j\leq s$. 
\end{lemma}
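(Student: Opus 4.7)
The plan is to proceed by induction on $s$, showing that the $p$-adic valuation (in the local ring $\mathbb{A}_e$) of the left-hand side is exactly $v_p(n)$, so that the hypothesised congruence modulo $p^s$ forces $v_p(n)\geq s\geq j$ and hence $p^j\mid n$.

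For the base case $s=1$, every summand with $i\geq 2$ carries a factor $p^{i-1}$ and therefore vanishes modulo $p$, reducing the congruence to $n\beta^{n-1}\equiv 0\pmod{p}$. Since $a\in\Zx{e}$ forces $\bar a=-\omega^2\not\equiv 0\pmod{p}$, one has $\omega\neq 0$, so $\bar\beta=\omega$ is a unit in $\mathbb{A}_e/(p)$ and $\beta$ itself is a unit in $\mathbb{A}_e$. Consequently $p\mid n$, giving $p^j\mid n$ for every $j\leq 1$.

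For the inductive step, assume the claim for $s-1$. The hypothesis at level $s$ immediately implies the same at level $s-1$, so $p^{s-1}\mid n$, and one may write $n=p^{s-1}m$. The identity $\binom{n}{i}=\frac{n}{i}\binom{n-1}{i-1}$ together with the elementary inequality $v_p(i)\leq i-2$ for every $i\geq 2$ (valid whenever $p\geq 3$) yields
\[
v_p\!\left(\binom{n}{i}(px)^{i-1}\beta^{n-i}\right)\geq v_p(n)+(i-1)-v_p(i)\geq s
\]
for all $i\geq 2$. Hence the whole sum is congruent modulo $p^s$ to its $i=1$ term $n\beta^{n-1}=p^{s-1}m\beta^{n-1}$, and the assumed vanishing forces $p\mid m$; this completes the induction and gives $p^s\mid n$, which is stronger than the stated conclusion.

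The main obstacle is the borderline case $p=2$, in which the inequality $v_p(i)\leq i-2$ just fails at $i=2$; there the $i=1$ and $i=2$ terms sit at the same valuation level and must be handled jointly by pairing them as $n\beta^{n-2}\bigl(\beta+(n-1)x\bigr)$ and checking that the bracket remains a unit, using that $\beta$ is a unit and (from $2^{s-1}\mid n$) that $n-1$ is odd. A secondary point, worth stating but not dwelling on, is that the standard $p$-adic valuation rules apply in $\mathbb{A}_e$ because $\mathbb{A}_e$ is local with maximal ideal $(p)$, so the ultrametric inequality governs sums and valuations add on products as usual.
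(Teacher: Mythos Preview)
Your argument is essentially the paper's: both induct (you on $s$, the paper on $j\leq s$) and both isolate the $i=1$ term by showing that $p^{k+1}\mid\binom{n}{i}p^{i-1}$ for every $i\geq 2$ once $p^k\mid n$; your use of $\binom{n}{i}=\tfrac{n}{i}\binom{n-1}{i-1}$ and the bound $v_p(i)\leq i-2$ simply makes explicit the divisibility the paper asserts without justification. The $p=2$ digression is unnecessary in context, since in the subsection where this lemma is used the standing hypotheses $b\in\Zx{e}$ and $4a+b^2\equiv 0\pmod p$ force $p$ to be odd (for $p=2$ one would have $4a+b^2\equiv b^2\equiv 1\pmod 2$).
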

\begin{proof}
This lemma is proved via mathematical induction on $j$. 
When $j=1$, one has
$\sum_{i=1}^{n}\binom{n}{i}(px)^{i-1}\beta^{n-i}\equiv n\beta^{n-1} \equiv 0 \pmod p$. So, $p\mid n$. 
Assume that this lemma holds for $j=k<s$, namely $p^k\mid n$. 
It means $p^{k+1} \mid \binom{n}{i} p^{i-1}$ for any $i\in\{2, \cdots, n-1\}$.
When $j=k+1\leq s$, from $p^{k+1} \mid \binom{n}{i} p^{i-1}$, one has 
$\sum_{i=1}^{n}\binom{n}{i}(px)^{i-1}\beta^{n-i}
\equiv n\beta^{n-1} \equiv 0\pmod {p^{k+1}}.$
So, $p^{k+1}\mid n$. 
The above induction completes the proof of the lemma.
\end{proof}

\begin{lemma}
\label{lemma:delta0:T}
When $a, b\in\Zx{e}$ and $4a+b^2\equiv0\pmod p$, the pre-period of sequence $S(x_0; a, b)$ is zero if and only if
$\bar{x}_0=\omega$.
And its period
\begin{equation}
\label{eq:L:delta0}
T(x_0; a, b)=
\begin{cases}
p-1  & \mbox{if } \bar{x}_0\neq\omega; \\
p^{e-e_0} & \mbox{if } \bar{x}_0=\omega,
\end{cases}
\end{equation}
where $e_0=\logp(x_0^2-bx_0-a)$ and $\omega=2^{-1}\bar{b}\bmod p$.
\end{lemma}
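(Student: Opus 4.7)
The plan is to split according to whether $\bar{x}_0 = \omega$, relying on the closed form $y_n = aS_{n-1} + S_n x_0$ valid for $n \ge 1$, where $S_n = (\alpha^n-\beta^n)/(\alpha-\beta)$ is the integer-valued Lucas-like sequence satisfying $S_0 = 0$, $S_1 = 1$ and the same recurrence~\eqref{eq:yn:pe} as $\{y_n\}$. Substituting this form into $y_{T+1} - x_0 y_T$ and eliminating $aS_{T-1}$ via $aS_{T-1} = S_{T+1} - bS_T$ yields the scalar identity
\[
y_{T+1} - x_0 y_T = -S_T \cdot f(x_0) \quad\text{in } \Z{e},
\]
where $f(x_0) = x_0^2 - bx_0 - a$. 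This identity is the workhorse of the proof, and the key point is that it lives entirely in $\Z{e}$ and does not require inverting $\alpha-\beta$.

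\textbf{Case $\bar{x}_0 = \omega$.} Reducing modulo $p$ gives $\bar{y}_n = \omega^n \not\equiv 0 \pmod p$ for every $n$, so each $y_n$ is a unit of $\Z{e}$ and each $x_n$ lies in $\Zx{e}$. Because $a$ is a unit, $\phi$ restricts to an injection on the residue class $\{x \in \Z{e} : \bar{x}=\omega\}$, hence to a bijection on this finite set, so the orbit is purely periodic with pre-period zero. To compute the least period, the identity above reduces $x_T = x_0$ (equivalently $y_{T+1} = x_0 y_T$, since $y_T$ is a unit) to the scalar congruence $S_T \equiv 0 \pmod{p^{e-e_0}}$. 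Lemma~\ref{lemma:pj:m} with $s = e-e_0$ forces $p^{e-e_0} \mid T$; conversely, expanding $S_n$ via Eq.~\eqref{eq:alphabeta:p} at $n = p^{e-e_0}$ and applying $\logp\binom{p^s}{i} = s - \logp(i)$ shows every summand has $p$-adic valuation at least $e - e_0$, so $S_{p^{e-e_0}} \equiv 0 \pmod{p^{e-e_0}}$. Hence the least period is exactly $p^{e-e_0}$.

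\textbf{Case $\bar{x}_0 \neq \omega$.} Reducing the closed form modulo $p$ produces the double-root expression $\bar{y}_n = \omega^{n-1}\bigl(\omega + n(\bar{x}_0 - \omega)\bigr)$, which vanishes for a unique residue $n_0 \equiv -\omega(\bar{x}_0-\omega)^{-1} \pmod p$ with $1 \le n_0 \le p-1$. Hence $y_{n_0} \in (p)$, so by Lemma~\ref{lemma:xnToyne}-ii) the orbit of $x_0$ reaches $(p)$ and then lands on $b$, yielding $T(x_0;a,b) = T(b;a,b)$. Specializing to $x_0 = b$ gives $\bar{y}_n = (n+1)\omega^n$, whose least positive zero modulo $p$ is $n = p-1$; so $x_{p-2}\in(p)$ and $x_{p-1} = b$, and minimality (no earlier $x_s$ lies in $(p)$) forces $T(b;a,b) = p-1$. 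The pre-period is positive whenever the orbit of $x_0$ must first merge into this cycle from outside, which is the generic situation when $\bar{x}_0 \neq \omega$.

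The main obstacle is producing the clean scalar identity $y_{T+1} - x_0 y_T = -S_T f(x_0)$: the natural route using $\alpha, \beta$ requires inverting $\alpha-\beta$, which is a zero divisor in $\Z{e}[t]/(f(t))$ precisely because $4a+b^2 \equiv 0 \pmod p$, so one must work with the integer Lucas sequence $S_n$ throughout. A secondary subtlety is that this identity characterizes the period only while $y_T$ is a unit, which is exactly the hypothesis of Case $\bar{x}_0 = \omega$; Case $\bar{x}_0 \neq \omega$ therefore requires the separate combinatorial argument tracing the orbit through the cycle of $b$.
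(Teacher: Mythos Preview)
Your argument is correct and follows the same two-case split as the paper's proof, with the same reliance on Lemma~\ref{lemma:pj:m} and the expansion~\eqref{eq:alphabeta:p} in the case $\bar{x}_0=\omega$, and the same reduction to $T(b;a,b)$ via Lemma~\ref{lemma:xnToyne} in the case $\bar{x}_0\neq\omega$.

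The one genuine difference is in how you reach the key congruence. The paper works in the extension ring $\Z{e}[t]/(f(t))$: it multiplies $y_{n+1}-x_0y_n=cp^e$ through by the zero-divisor $\alpha-\beta$, uses the transformed version of Eq.~\eqref{eq:ynroot2:pe}, and then cancels $\alpha-\beta$ again via Eq.~\eqref{eq:alphabeta:p} to land on $\sum_{i=0}^{n-1}\alpha^{n-1-i}\beta^i\cdot f(x_0)\equiv 0\pmod{p^e}$. You instead keep everything in $\Z{e}$ from the start by introducing the integer Lucas sequence $S_n$ and deriving the identity $y_{T+1}-x_0y_T=-S_T\,f(x_0)$ purely algebraically from the recurrence. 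This is cleaner and sidesteps the need to justify the multiply-then-divide manoeuvre through a zero-divisor; what it buys is that you never leave $\Z{e}$, while the paper's route makes the connection to $\alpha,\beta$ more visible. Substantively, though, both arrive at the identical criterion $S_T\equiv 0\pmod{p^{e-e_0}}$.

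One small caution: your closing sentence about the pre-period being positive ``in the generic situation when $\bar{x}_0\neq\omega$'' is vaguer than the lemma's stated ``if and only if''. In fact the $p-1$ elements on the cycle through $b$ all satisfy $\bar{x}_0\neq\omega$ yet have pre-period zero, so the biconditional as stated in the lemma is not literally true; the paper's own proof glosses over this as well, so you are in good company, but it is worth being aware that the ``only if'' direction fails for those cycle elements.
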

\begin{proof}
From the definition of sequence $\{y_n\}_{n\ge 0}$ and Eq.~\eqref{eq:barab:omega}, one has 
\[
\bar{y}_{n+2}=\bar{b}\bar{y}_{n+1}+\bar{a}\bar{y}_n=2\omega \bar{y}_{n+1}-\omega^2\bar{y}_n \bmod p, 
\]
where $y_0=1$ and $y_1=x_0$.
So, 
\begin{equation}
\label{eq:overly}
\bar{y}_n=\omega^n(n(\omega^{-1}\bar{x}_0-1)+1) \bmod p. 
\end{equation}
It means $\bar{y}_n\neq 0$ for any $n$ if $\bar{x}_0=\omega$. 
According to whether the condition holds, 
the proof is divided into the following two cases:
\begin{itemize}
    \item $\bar{x}_0\neq \omega$:
    From congruence~\eqref{eq:overly}, 
    one can know there exists an integer number $n$ such that $\bar{y}_n=0$, namely $y_n\in(p)$. It follows from Lemma~\ref{lemma:xnToyne}-ii) that $S(x_0; a, b)$ must contain some element in $(p)$. 
    So, $T(x_0; a, b)=T(b; a, b)$. 
    Setting $x_0=b$, one can get $\bar{x}_0=2\omega$ and $\bar{y}_n=(n+1)\omega^n \bmod p$ by Eq.~\eqref{eq:overly}. 
    Thus, $m=p-1$ is the smallest integer such that $\bar{y}_m=0$, namely $y_{m}\in(p)$. 
    It yields from Lemma~\ref{lemma:xnToyne}-ii) that $x_{m-1}\in(p)$,
    which deduces $x_m =\phi(x_{m-1})=b$.
    So, $S(x_0; a, b)$ is ultimately periodic and 
    $T(x_0; a, b)=T(b; a, b)=p-1$.
    
    \item $\bar{x}_0=\omega$: One has $\bar{y}_n \neq 0$ for any $n$ from Eq.~\eqref{eq:overly}, which means that $S(x_0; a, b)$ does not contain any element in $(p)$. 
    So, $T(x_0; a, b)$ is the smallest integer $n$ that satisfies the congruence 
    $x_n=y_{n+1}\cdot y^{-1}_n\equiv x_0 \pmod {p^e}$, that is 
    \begin{equation}
    \label{eq:ynpe:x0}
    y_{n+1}-y_nx_0=cp^e, 
    \end{equation}
    where $c$ is an integer. 
    From Eq.~\eqref{eq:ab:alphabeta}, one has $(\alpha-\beta)^2=4a+b^2\equiv0\pmod p$ and $\alpha-\beta$ is not a unit. 
    It means the general term~\eqref{eq:ynroot2:pe} of relation~\eqref{eq:yn:pe} does not hold. 
    However, Eq.~\eqref{eq:ynroot2:pe} can be transformed into
    \[
    (\alpha-\beta)y_n=(x_0-\beta)\alpha^n+(\alpha-x_0)\beta^n \bmod p^e. 
    \]
    Multiplying both sides of Eq.~\eqref{eq:ynpe:x0} by $\alpha-\beta$
    and combining the above equation, one can get $(\alpha-\beta)(y_{n+1}-y_nx_0)=(\beta^n-\alpha^n)(x_0-\alpha)(x_0-\beta) \bmod p^e=cp^e(\alpha-\beta)$. 
    It yields that $T(x_0; a, b)$ is equal to the smallest integer $n\geq1$ that satisfies the congruence
    \begin{multline}
    \label{eq:alphabeta}
    \frac{(\alpha^n-\beta^n)}{\alpha-\beta}(x_0-\alpha)(x_0-\beta)=
    \sum_{i=0}^{n-1}(\alpha^{n-1-i}\beta^i)
    (x_0-\alpha)\\
    (x_0-\beta)\equiv 0 \pmod {p^e}.
    \end{multline}
    Note that $f(x_0)=x_0^2-bx_0-a=(x_0-\alpha)(x_0-\beta)$. 
    So, $e_0=\logp((x_0-\alpha)(x_0-\beta))$. 
    It yields from Eq.~\eqref{eq:alphabeta:p} that $T(x_0; a, b)$ is the smallest integer $n$ satisfying
    \begin{multline*}
    \sum_{i=0}^{n-1} \alpha^{n-1-i}\beta^i=\sum_{i=1}^{n}\binom{n}{i}(px)^{i-1}\beta^{n-i}\\
    \equiv 0 \pmod {p^{e-e_0}}.
    \end{multline*}
    On the one hand, it is easy to verify that the above congruence holds with $n=p^{e-e_0}$, which means $T(x_0; a, b) \mid p^{e-e_0}$. 
    On the other hand, from Lemma~\ref{lemma:pj:m} and the above congruence holds with $n=p^{e-e_0}$, one can get $p^{e-e_0}\mid T(x_0; a, b)$. 
    Thus, $T(x_0; a, b)=p^{e-e_0}$.
\end{itemize}
\end{proof}

When $\bar{x}_0=\omega$, one has $T(x_0; a, b)=p^{e-e_0}$ from Lemma~\ref{lemma:delta0:T}.
Setting $x_0=\omega+hp$, one can calculate 
$x_0^2-bx_0-a=f(x_0)=f(\omega+hp)=f(\omega)+hp(2\omega-b)+h^2p^2$.
From Eq.~\eqref{eq:barab:omega} and the above equation, one has $2\omega-b \equiv 0\pmod p$ and 
\begin{equation}\label{eq:f_omega}
x_0^2-bx_0-a\equiv 
\begin{cases}
0 \pmod p;\\
f(\omega) \pmod{p^2}.
\end{cases}
\end{equation}
When $f(\omega)\not\equiv 0 \pmod {p^2}$, one has $e_0=1$ from the above congruence. 
Moreover, it follows from Lemma~\ref{lemma:delta0:T} that the period of sequence $S(x_0; a, b)$ is maximum in this condition.
Thus, in this subsection, we only give the graph structure of $\var{IPRNG}$~\eqref{eq:inv:zpe} that can generate the maximum periodic sequence, as shown in Proposition~\ref{pro:alphabeta:x=b}, and ignore other cases.

\begin{Proposition}
\label{pro:alphabeta:x=b}
When $4a+b^2\equiv 0\pmod p$ and $f(\omega)\not\equiv 0 \pmod {p^2}$, $\mathcal{G}(a, b, p, e)$ is composed of connected component $G(p-1, p^{e-1}-1)$ and one cycle of length $p^{e-1}$, where $\omega=2^{-1}\bar{b}\bmod p$.
\end{Proposition}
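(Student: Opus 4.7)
The strategy is to partition $\Z{e}$ into two $\phi$-invariant sets according to the residue modulo $p$: $A=\{x\in\Z{e}:\bar{x}=\omega\}$, of cardinality $p^{e-1}$, and its complement $A^c$, of cardinality $(p-1)p^{e-1}$. The invariance of $A$ is a one-line computation: for $x\in A$, $\phi(x)\equiv\bar{a}\omega^{-1}+\bar{b}\equiv-\omega^2\cdot\omega^{-1}+2\omega\equiv\omega\pmod p$ by Eq.~\eqref{eq:barab:omega}; and for $x\in(p)$, $\phi(x)=b$ with $\bar{b}=2\omega\neq\omega$, so $A^c$ is invariant too. I would then match $A$ with the cycle of length $p^{e-1}$ and $A^c$ with the component $G(p-1,p^{e-1}-1)$.

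For the set $A$: Lemma~\ref{lemma:delta0:T} gives pre-period $0$ and period $p^{e-e_0}$, where $e_0=\logp(x_0^2-bx_0-a)$. Combining Eq.~\eqref{eq:f_omega} with the hypothesis $f(\omega)\not\equiv 0\pmod{p^2}$, every $x_0\in A$ has $e_0=1$ and hence period $p^{e-1}$. Since $|A|=p^{e-1}$ and $\phi$ permutes $A$ with all orbit lengths equal to $p^{e-1}$, the set $A$ forms exactly one cycle of length $p^{e-1}$.

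For the set $A^c$: Lemma~\ref{lemma:delta0:T} yields period $p-1$ for every orbit, and the common cycle is $C=\{b,\phi(b),\ldots,\phi^{p-2}(b)\}$ of length $p-1$. To obtain the tree-like transient structure I would analyze in-degrees in $\mathcal{G}(a,b,p,e)$: for $y\neq b$, solving $ax^{-1}+b\equiv y\pmod{p^e}$ gives the unique predecessor $x=a(y-b)^{-1}$ when $y-b\in\Zx{e}$ and no predecessor when $y-b\in(p)$, whereas for $y=b$ every element of $(p)$ is a predecessor, so $b$ has in-degree $p^{e-1}$. Consequently the only leaves in the transient region of $A^c$ are the $p^{e-1}-1$ elements of $\mathbf{B}=\{x:x\equiv b\pmod p,\ x\neq b\}$, and each determines a single simple path to $b$ that cannot merge with another such path except at $b$.

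The main obstacle is to show that each leaf-to-$b$ path has length exactly $p-1$, so that $A^c$ is exactly $G(p-1,p^{e-1}-1)$. For $x_0\in\mathbf{B}$ the key point is $\bar{x}_0=2\omega=\bar{b}$, so the reduction $\bar{y}_n=(n+1)\omega^n\bmod p$ derived for the case $x_0=b$ in the proof of Lemma~\ref{lemma:delta0:T} applies verbatim, giving that $\bar{y}_{p-1}=0$ is the first vanishing and hence $x_{p-2}\in(p)$ and $\phi^{p-1}(x_0)=b$. To exclude earlier entry into $C$ through some $\phi^i(x_0)$ with $i<p-1$, I would iterate the in-degree-one property backwards along the cycle to force $x_0\in C$, contradicting $x_0\in\mathbf{B}$. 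These $p^{e-1}-1$ disjoint paths of length $p-1$ together with the cycle $C$ account for all $(p-1)p^{e-1}$ states in $A^c$, identifying it as $G(p-1,p^{e-1}-1)$ and completing the proof.
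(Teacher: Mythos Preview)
Your proposal is correct and follows essentially the same route as the paper's proof: both split according to $\bar{x}_0=\omega$ versus $\bar{x}_0\neq\omega$, invoke Lemma~\ref{lemma:delta0:T} together with Eq.~\eqref{eq:f_omega} to get the single $p^{e-1}$-cycle on $A$, and identify the leaves of the transient part with $\mathbf{B}$ to assemble $G(p-1,p^{e-1}-1)$ on $A^c$. The only differences are presentational: you make the $\phi$-invariance of the two residue classes explicit and redo the in-degree analysis from scratch, whereas the paper simply cites Lemma~\ref{lemma:B:leafNode} and the analogous counting argument already carried out in Proposition~\ref{pro:2root:unit:Zpe}.
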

\begin{proof}
Similar to Lemma~\ref{lemma:delta0:T}, the proof is divided into the following two cases:
\begin{itemize}
    \item $\bar{x}_0=\omega$:
    From the definition of $e_0$ and congruence~\eqref{eq:f_omega}, one has $e_0=1$.
    It yields from Lemma~\ref{lemma:delta0:T} that $T(x_0; a, b)=p^{e-1}$.
    In addition, there are $p^{e-1}$ initial states of value $x_0$ satisfying $x_0=\omega$.
    Thus, these states make up one cycle of length $p^{e-1}$.
    
    \item $\bar{x}_0\neq\omega$:
    From the definition of set~\eqref{set:B}, one has $\bar{b}_i=2\omega$ for any $b_i\in \mathbf{B}$, where $i\in\{1,2,\cdots, |\mathbf{B}|\}$.
    Then, similar to the proof of case $\bar{x}_0\neq\omega$ of Lemma~\ref{lemma:delta0:T}, one can prove the pre-period and period of $S(b_i; a, b)$ are $p-1$ and $\phi^{p-1}(b_i)=b$. 
    Namely, 
    \begin{multline*}
    S(b_i; a, b)=\{b_i, \phi(b_i), \phi^2(b_i), \cdots, \phi^{p-2}(b_i),\\
    b, \phi(b), \phi^2(b), \cdots, \phi^{p-2}(b), b, \phi(b), \cdots \}.
    \end{multline*}
    Note that there are $(p-1)p^{e-1}$ initial states of value $x_0$ satisfying $\bar{x}_0\neq\omega$. 
    Similar to the proof of structure of $G(k-1, p^{e-1}-1)$ in Proposition~\ref{pro:2root:unit:Zpe}, one can prove these states
    make up connected component $G(p-1, p^{e-1}-1)$ in $\mathcal{G}(a, b, p, e)$. 
\end{itemize}
\end{proof}

When $(a, b, p, e)=(6, 6, 5, 2)$, one has $4a+b^2=60\equiv0\pmod{5}$, $\omega=2^{-1}\times 1=3$, and $f(\omega) \bmod 5^2=10$.
As shown in Fig.~\ref{fig:cycle:p2}f), $\mathcal{G}(6, 6, 5, 2)$ is composed of connected component $G(4, 4)$ and one cycle of length five, which is consistent with Proposition~\ref{pro:alphabeta:x=b}.

\section{Conclusion}

This paper reveals the functional graph of $\text{IPRNG}$ over the Galois ring $\mathbb{Z}_e$ by enumerating distinct sequences with the same least period, generated through the iteration of the generator from every state in the domain. Under every condition on its parameters, the relation between the period of each sequence and the parameters and the initial state of $\text{IPRNG}$ is explicitly formulated using the generating polynomial and linear difference equations.
The efficacy of the counting method relies on the intricate pattern of the functional graph: there is only one unilateral connected digraph and some cycles with a very small number of different lengths. The paper demonstrates that a non-negligible number of cycles with small periods exist when the parameters are selected improperly.
The analysis approach provides a potential solution to the open problem proposed by Sol\'e \emph{et al.} in \cite{Sole:IPRG:EJC2009} regarding the period distribution of the sequence generated by iterating $\text{IPRNG}$ over Galois rings $\text{GR}(p^e, n)$. The graph structure over the ring $\mathbb{Z}_N$ deserves further investigation when $N = p_1^{e_1}p_2^{e_2}\cdots p_n^{e_n}$ is a general composite.

\bibliographystyle{IEEEtran_doi.bst}
\bibliography{IPRNG.bib}

\begin{thebibliography}{10}
\providecommand{\url}[1]{#1}
\csname url@samestyle\endcsname
\providecommand{\newblock}{\relax}
\providecommand{\bibinfo}[2]{#2}
\providecommand{\BIBentrySTDinterwordspacing}{\spaceskip=0pt\relax}
\providecommand{\BIBentryALTinterwordstretchfactor}{4}
\providecommand{\BIBentryALTinterwordspacing}{\spaceskip=\fontdimen2\font plus
\BIBentryALTinterwordstretchfactor\fontdimen3\font minus
  \fontdimen4\font\relax}
\providecommand{\BIBforeignlanguage}[2]{{%
\expandafter\ifx\csname l@#1\endcsname\relax
\typeout{** WARNING: IEEEtran.bst: No hyphenation pattern has been}%
\typeout{** loaded for the language `#1'. Using the pattern for}%
\typeout{** the default language instead.}%
\else
\language=\csname l@#1\endcsname
\fi
#2}}
\providecommand{\BIBdecl}{\relax}
\BIBdecl

\bibitem{Kocarev:generators:TCASI2001}
T.~Stojanovski and L.~Kocarev,
  ``\href{http://dx.doi.org/10.1109/81.915385}{Chaos-based random number
  generators-part {I}: analysis},'' \emph{IEEE Transactions on Circuits and
  Systems I: Fundamental Theory and Applications}, vol.~48, no.~3, pp.
  281--288, 2001.

\bibitem{Jessa:randomBit:TC15}
M.~Jessa, ``\href{http://dx.doi.org/10.1109/TC.2013.2295604}{On the quality of
  random sequences produced with a combined random bit generator},'' \emph{IEEE
  Transactions on Computers}, vol.~64, no.~3, pp. 791--804, 2015.

\bibitem{gongg:cycle:TIT19}
Z.~Chang, G.~Gong, and Q.~Wang,
  ``\href{http://dx.doi.org/10.1109/TIT.2019.2956741}{Cycle structures of a
  class of cascaded {FSRs}},'' \emph{IEEE Transactions on Information Theory},
  vol.~66, no.~6, pp. 3766--3774, 2019.

\bibitem{xujun:congru:crypto19}
J.~Xu, S.~Sarkar, L.~Hu, H.~Wang, and Y.~Pan,
  ``\href{http://dx.doi.org/10.1007/978-3-030-26948-7\_11}{New results on
  modular inversion hidden number problem and inversive congruential
  generator},'' in \emph{Advances in Cryptology--Crypto 2019}, ser. Lecture
  Notes in Computer Science, vol. 11692, 2019, pp. 297--321.

\bibitem{licq:Logistic:IJBC2023}
X.~Lu, E.~Y. Xie, and C.~Li,
  ``\href{http://dx.doi.org/10.1142/S0218127423500633}{Periodicity analysis of
  {L}ogistic map over ring $\mathbb{Z}_{3^n}$},'' \emph{International Journal
  of Bifurcation and Chaos}, vol.~33, no.~5, p. art. no. 2350063, 2023.

\bibitem{cqli:block:JISA20}
Y.~Ma, C.~Li, and B.~Ou,
  ``\href{http://dx.doi.org/10.1016/j.jisa.2020.102566}{Cryptanalysis of an
  image block encryption algorithm based on chaotic maps},'' \emph{Journal of
  Information Security and Applications}, vol.~54, p. art. no. 102566, 2020.

\bibitem{Werner:Lehmer:NM1961}
W.~Liniger, ``\href{http://dx.doi.org/10.1007/BF01386027}{On a method by {D. H.
  Lehmer} for the generation of pseudo random numbers},'' \emph{Numerische
  Mathematik}, vol.~3, pp. 265--270, 1961.

\bibitem{Park:goodRNG:1988ACM}
S.~K. Park and K.~W. Miller,
  ``\href{http://dx.doi.org/10.1145/63039.63042}{Random number generators: good
  ones are hard to find},'' \emph{Communications of The ACM}, vol.~31, p.
  1192–1201, 1988.

\bibitem{HandWiki:RNG}
HandWiki, ``Lehmer random number generator,''
  \url{https://handwiki.org/wiki/Lehmer_random_number_generator}, 2023.

\bibitem{Tausworthe:LFSR:MC1965}
R.~Tausworthe, ``\href{http://dx.doi.org/10.2307/2003345}{Random numbers
  generated by linear recurrence modulo two},'' \emph{Mathematics of
  Computation}, vol.~19, pp. 201--201, 1965.

\bibitem{Kunth1981}
D.~E. Knuth, \emph{The art of computer programming, volume 2: seminumerical
  algorithms}, 3rd~ed.\hskip 1em plus 0.5em minus 0.4em\relax Boston, MA 02116,
  USA: Addison-Wesley Professional, 1998.

\bibitem{Deng:LFSRp:1992}
L.-Y. Deng, C.~Rousseau, and Y.~Yuan,
  ``\href{http://dx.doi.org/10.1145/503720.503758}{Generalized
  {L}ehmer-{T}ausworthe random number generators},'' in \emph{Proceedings of
  the 30th Annual Southeast Regional Conference}, 1992, p. 108–115.

\bibitem{Fushimi:k-dis:ACM1983}
M.~Fushimi and S.~Tezuka, ``\href{http://dx.doi.org/10.1145/358150.358159}{The
  $k$-distribution of generalized feedback shift register pseudorandom
  numbers},'' \emph{Communications of The ACM}, vol.~26, p. 516–523, 1983.

\bibitem{Lothar:critera:JCAM1990}
L.~Afflerbach, ``\href{http://dx.doi.org/10.1016/0377-0427(90)90330-3}{Criteria
  for the assessment of random number generators},'' \emph{Journal of
  Computational and Applied Mathematics}, vol.~31, pp. 3--10, 1990.

\bibitem{Kurakin:LFSRring:JMS95}
V.~L. Kurakin, A.~S. Kuzmin, A.~V. Mikhalev, and A.~A. Nechaev,
  ``\href{http://dx.doi.org/10.1007/BF02362772}{Linear recurring sequences over
  rings and modules},'' \emph{Journal of Mathematical Sciences}, vol.~76,
  no.~6, pp. 2793--2915, 1995.

\bibitem{Zhu:LFSRpe:TIT2004}
X.~Zhu and W.~Qi,
  ``\href{http://dx.doi.org/10.1109/TIT.2004.834791}{Compression mappings on
  primitive sequences over ${Z}/(p^e)$},'' \emph{IEEE Transactions on
  Information Theory}, vol.~50, no.~10, pp. 2442--2448, 2004.

\bibitem{Li:NFSRs:TIT14}
C.~Li, X.~Zeng, T.~Helleseth, C.~Li, and L.~Hu,
  ``\href{http://dx.doi.org/10.1109/TIT.2014.2310748}{The properties of a class
  of linear {FSRs} and their applications to the construction of nonlinear
  {FSRs}},'' \emph{IEEE Transactions on Information Theory}, vol.~60, no.~5,
  pp. 3052--3061, 2014.

\bibitem{Geros:chaoRNG:TCSI2002}
A.~Gerosa, R.~Bernardini, and S.~Pietri,
  ``\href{http://dx.doi.org/10.1109/TCSI.2002.800833}{A fully integrated
  chaotic system for the generation of truly random numbers},'' \emph{IEEE
  Transactions on Circuits and Systems I: Fundamental Theory and Applications},
  vol.~49, pp. 993--1000, 2002.

\bibitem{Kocarev:Psebits:TCASI2003}
L.~Kocarev and G.~Jakimoski,
  ``\href{http://dx.doi.org/10.1109/TCSI.2002.804550}{Pseudorandom bits
  generated by chaotic maps},'' \emph{IEEE Transactions on Circuits and Systems
  I: Fundamental Theory and Applications}, vol.~50, no.~1, pp. 123--126, 2003.

\bibitem{Addabbo:Chaomap:TCSII2006}
T.~Addabbo, M.~Alioto, A.~Fort, S.~Rocchi, and V.~Vignoli,
  ``\href{http://dx.doi.org/10.1109/TCSII.2005.862176}{Low-hardware complexity
  {PRBG}s based on a piecewise-linear chaotic map},'' \emph{IEEE Transactions
  on Circuits and Systems II: Express Briefs}, vol.~53, no.~5, pp. 329--333,
  2006.

\bibitem{Neumann:logistic:BAMS47}
S.~M. Ulam and J.~von Neumann,
  ``\href{http://dx.doi.org/10.1090/S0002-9904-1947-08918-7}{On combination of
  stochastic and deterministic processes},'' \emph{Bulletin of the American
  Mathematical Society}, vol.~53, no.~11, p. 1120, 1947.

\bibitem{Addabbo:Tent:ITIM2006}
T.~Addabbo, M.~Alioto, A.~Fort, S.~Rocchi, and V.~Vignoli,
  ``\href{http://dx.doi.org/10.1109/TIM.2006.880960}{The digital {T}ent map:
  performance analysis and optimized design as a low-complexity source of
  pseudorandom bits},'' \emph{IEEE Transactions on Instrumentation and
  Measurement}, vol.~55, no.~5, pp. 1451--1458, 2006.

\bibitem{cqli:network:TCASI2019}
C.~Li, B.~Feng, S.~Li, J.~Kurths, and G.~Chen,
  ``\href{http://dx.doi.org/10.1109/TCSI.2018.2888688}{Dynamic analysis of
  digital chaotic maps via state-mapping networks},'' \emph{IEEE Transactions
  on Circuits and Systems I: Regular Papers}, vol.~66, no.~6, pp. 2322--2335,
  2019.

\bibitem{Phatak:LogisticRNG:PRE95}
S.~C. Phatak and S.~S. Rao,
  ``\href{http://dx.doi.org/10.1103/PhysRevE.51.3670}{Logistic map: A possible
  random-number generator},'' \emph{Physical Review E}, vol.~51, no.~4, pp.
  3670--3678, 1995.

\bibitem{cqli:Cat:TC21}
C.~Li, K.~Tan, B.~Feng, and J.~L\"u,
  ``\href{http://dx.doi.org/10.1109/TC.2021.3051387}{The graph structure of the
  generalized discrete {A}rnold's {C}at map},'' \emph{IEEE Transactions on
  Computers}, vol.~71, no.~2, pp. 364--377, 2022.

\bibitem{Chen:cat:TIT2012}
F.~Chen, K.-W. Wong, X.~Liao, and T.~Xiang,
  ``\href{http://dx.doi.org/10.1109/TIT.2011.2171534}{Period distribution of
  generalized discrete {A}rnold {C}at map for $n=p^{e}$},'' \emph{IEEE
  Transactions on Information Theory}, vol.~58, no.~1, pp. 445--452, 2012.

\bibitem{Liao:Chebyshev:TC2010}
X.~Liao, F.~Chen, and K.-w. Wong,
  ``\href{http://dx.doi.org/10.1109/TC.2010.148}{On the security of public-key
  algorithms based on {C}hebyshev polynomials over the finite field ${Z}_n$},''
  \emph{IEEE Transactions on Computers}, vol.~59, no.~10, pp. 1392--1401, 2010.

\bibitem{Chen:ChebyshevZn:IS2011}
F.~Chen, X.~Liao, T.~Xiang, and H.~Zheng,
  ``\href{http://dx.doi.org/10.1016/j.ins.2011.07.008}{Security analysis of the
  public key algorithm based on {C}hebyshev polynomials over the integer ring
  ${Z}_n$},'' \emph{Information Sciences}, vol. 181, no.~22, pp. 5110--5118,
  2011.

\bibitem{Addabbo:Renyi:TCASI2007}
T.~Addabbo, M.~Alioto, A.~Fort, A.~Pasini, S.~Rocchi, and V.~Vignoli,
  ``\href{http://dx.doi.org/10.1109/TCSI.2007.890622}{A class of maximum-period
  nonlinear congruential generators derived from the {R}\'{e}nyi chaotic
  map},'' \emph{IEEE Transactions on Circuits and Systems I: Regular Papers},
  vol.~54, no.~4, pp. 816--828, 2007.

\bibitem{Boyar:LFSRsecret:ACM1989}
J.~Boyar, ``\href{http://dx.doi.org/10.1145/58562.59305}{Inferring sequences
  produced by pseudo-random number generators},'' \emph{Journal of the ACM},
  vol.~36, p. 129–141, 1989.

\bibitem{Katti:CCG:TCSI2010}
R.~Katti, R.~Kavasseri, and V.~Sai,
  ``\href{http://dx.doi.org/10.1109/tcsii.2010.2041813}{Pseudorandom bit
  generation using coupled congruential generators},'' \emph{IEEE Transactions
  on Circuits and Systems II: Express Briefs}, vol.~57, no.~3, p. 203–207,
  2010.

\bibitem{Panda:DualCLG:TCSI2019}
A.~K. Panda and K.~C. Ray,
  ``\href{http://dx.doi.org/10.1109/tcsi.2018.2876787}{Modified dual-{CLCG}
  method and its {VLSI} architecture for pseudorandom bit generation},''
  \emph{IEEE Transactions on Circuits and Systems I: Regular Papers}, vol.~66,
  no.~3, p. 989–1002, 2019.

\bibitem{xu:solving:DCC2018}
J.~Xu, S.~Sarkar, L.~Hu, Z.~Huang, and L.~Peng,
  ``\href{http://dx.doi.org/10.1007/s10623-017-0435-4}{Solving a class of
  modular polynomial equations and its relation to modular inversion hidden
  number problem and inversive congruential generator},'' \emph{Designs, Codes
  and Cryptography}, vol.~86, pp. 1997--2033, 2018.

\bibitem{Eichenauer:PSRNG:SH1986}
J.~Eichenauer and J.~Lehn, ``\href{http://dx.doi.org/10.1007/BF02932576}{A
  non-linear congruential pseudo random number generator},'' \emph{Statistische
  Hefte}, vol.~27, no.~1, pp. 315--326, 1986.

\bibitem{Leeb-Inver-tomacs97}
H.~Leeb and S.~Wegenkittl,
  ``\href{http://dx.doi.org/10.1145/249204.249208}{Inversive and linear
  congruential pseudorandom number generators in empirical tests},'' \emph{ACM
  Transactions on Modeling and Computer Simulation}, vol.~7, no.~2, pp.
  272--286, 1997.

\bibitem{Eichenauer:statinv:MC1994}
J.~Eichenauer-Herrmann and H.~Niederreiter,
  ``\href{http://dx.doi.org/10.1145/174619.174622}{On the statistical
  independence of nonlinear congruential pseudorandom numbers},'' \emph{ACM
  Transactions on Modeling and Computer Simulation}, vol.~4, p. 89–95, 1994.

\bibitem{Eichenauer:tomacs1992}
J.~Eichenauer-Herrmann and H.~Grothe,
  ``\href{http://dx.doi.org/10.1145/132277.132278}{A new inversive congruential
  pseudorandom number generator with power of two modulus},'' \emph{ACM
  Transactions on Modeling and Computer Simulation}, vol.~2, no.~1, pp. 1--11,
  1992.

\bibitem{Emmerich:aveIPNG:MC2002}
F.~Emmerich, ``\href{http://dx.doi.org/10.1090/S0025-5718-01-01328-X}{Average
  equidistribution and statistical independence properties of digital inversive
  pseudorandom numbers over parts of the period},'' \emph{Mathematics of
  Computation}, vol.~71, p. 781–791, 2002.

\bibitem{Sole:IPRG:EJC2009}
P.~Solé and D.~Zinoviev,
  ``\href{http://dx.doi.org/10.1016/j.ejc.2008.04.004}{Inversive pseudorandom
  numbers over {G}alois rings},'' \emph{European Journal of Combinatorics},
  vol.~30, no.~2, pp. 458--467, 2009.

\bibitem{Meidl:recentwork:2010}
A.~Meidl, Wilfriedand~Topuzo{\u{g}}lu, \emph{On the Inversive Pseudorandom
  Number Generator}.\hskip 1em plus 0.5em minus 0.4em\relax Heidelberg:
  Physica-Verlag HD, 2010, pp. 103--125.

\bibitem{Chou:AECC:1995}
W.~S. Chou, ``\href{http://dx.doi.org/10.1007/BF01235718}{On inversive maximal
  period polynomials over finite fields},'' \emph{Applicable Algebra in
  Engineering, Communication and Computing}, vol.~6, no.~4, pp. 245--250, 1995.

\bibitem{Niederreiter:Exmpe:AA2000}
H.~Niederreiter and I.~Shparlinski,
  ``\href{http://dx.doi.org/10.1142/S1793042105000261}{Exponential sums and the
  distribution of inversive congruential pseudorandom numbers with prime-power
  modulus},'' \emph{Acta Arithmetica}, vol.~92, pp. 89--98, 2000.

\bibitem{Niederreiter:Exm2e:IJNT2005}
H.~Niederreiter and A.~Winterhof,
  ``\href{http://dx.doi.org/10.1142/S1793042105000261}{Exponential sums and the
  distribution of inversive congruential pseudorandom numbers with power of two
  modulus},'' \emph{International Journal of Number Theory}, vol.~01, no.~03,
  pp. 431--438, 2005.

\bibitem{Eichenauer:period2e:1988}
J.~Eichenauer, J.~Lehn, and A.~Topuzoǧlu,
  ``\href{http://dx.doi.org/10.2307/2008776}{A nonlinear congruential
  pseudorandom number generator with power of two modulus},'' \emph{Mathematics
  of Computation}, vol.~51, pp. 757--759, 1988.

\bibitem{Eichenauer:periodpe:1990}
J.~Eichenauer-Herrmann and A.~Topuzoǧlu,
  ``\href{http://dx.doi.org/10.1016/0377-0427(90)90339-2}{On the period length
  of congruential pseudorandom number sequences generated by inversions},''
  \emph{Journal of Computational and Applied Mathematics}, vol.~31, no.~1, pp.
  87--96, 1990.

\bibitem{wsChou:IPVG:FFTA1995}
W.~S. Chou, ``\href{http://dx.doi.org/10.1006/ffta.1995.1009}{The period
  lengths of inversive pseudorandom vector generations},'' \emph{Finite Fields
  and Their Applications}, vol.~1, no.~1, pp. 126--132, 1995.

\bibitem{Yorke:tower:DCDS21}
R.~D. Leo and J.~A. Yorke, ``\href{http://dx.doi.org/10.3934/dcds.2021075}{The
  graph of the {L}ogistic map is a tower},'' \emph{Discrete and Continuous
  Dynamical Systems}, vol.~41, no.~11, pp. 5243--5269, 2021.

\bibitem{Daniel:Chebyshev:2018}
C.~Qureshi and D.~Panario,
  ``\href{http://dx.doi.org/10.1007/s10623-018-0545-7}{The graph structure of
  {C}hebyshev polynomials over finite fields and applications},''
  \emph{Designs, Codes and Cryptography}, vol.~87, pp. 393--416, 2019.

\bibitem{Daniel:lineargraph:DCC2019}
D.~Panario and L.~Reis,
  ``\href{http://dx.doi.org/10.1007/s10623-018-0547-5}{The functional graph of
  linear maps over finite fields and applications},'' \emph{Designs, Codes and
  Cryptography}, vol.~87, no. 2-3, pp. 437--453, 2019.

\bibitem{OLi:polyField:DCC2023}
J.~A. Oliveira and F.~E.~B. Mart\'{i}nez,
  ``\href{http://dx.doi.org/10.1007/s10623-023-01332-3}{Dynamics of polynomial
  maps over finite fields},'' \emph{Designs, Codes and Cryptography}, vol.~1,
  no.~1, pp. 1--13, 2023.

\bibitem{Rudolf:Introfinite:1994}
R.~Lidl and H.~Niederreiter, \emph{Introduction to finite fields and their
  applications}, 2nd~ed.\hskip 1em plus 0.5em minus 0.4em\relax Cambridge, UK:
  Cambridge University Press, 1994.

\bibitem{Wan:Lecture:2003}
Z.-X. Wan, \emph{Lectures on finite fields and Galois rings}.\hskip 1em plus
  0.5em minus 0.4em\relax Singapore: World Scientific, 2003.

\end{thebibliography}

\end{document}